
\documentclass[a4paper,fleqn]{cas-dc}
\usepackage{amsmath}
\usepackage{amsthm}
\usepackage[english]{babel}
\usepackage{enumitem}
\usepackage[font=small]{caption} 
\usepackage{subcaption}
\usepackage{newtxtext} 
\usepackage{graphicx,epstopdf}
\usepackage{multirow}
\usepackage{algorithm}
\usepackage{algpseudocode}
\newtheorem{theorem}{Theorem}



\newcommand{\Sectref}[1]{Section~\ref{#1}}

\newcommand\floor[1]{\lfloor#1\rfloor}
\newcommand\ceil[1]{\lceil#1\rceil}
\DeclareMathOperator*{\argmin}{argmin}

\usepackage[numbers,sort&compress]{natbib}

\def\tsc#1{\csdef{#1}{\textsc{\lowercase{#1}}\xspace}}
\tsc{WGM}
\tsc{QE}
\tsc{EP}
\tsc{PMS}
\tsc{BEC}
\tsc{DE}


\begin{document}
\let\WriteBookmarks\relax
\def\floatpagepagefraction{1}
\def\textpagefraction{.001}

\shorttitle{Node Cardinality Estimation in a Heterogeneous Wireless Network}

\shortauthors{S. Kadam et~al.}

\title [mode = title]{Node Cardinality Estimation in a Heterogeneous Wireless Network Deployed Over a Large Region Using a Mobile Base Station}                       
\tnotemark[1,2]

\tnotetext[1]{A preliminary version of this paper was presented at the IEEE SPCOM 2020 conference~\cite{kadam2020SPCOM}.}

\tnotetext[2]{S. Kadam worked on this research while he was with IIT Bombay. The contributions of G. Kasbekar have been supported by the grant RD/0222-EETIHBY-014.
}

%
\author[1]{Sachin Kadam}[orcid=0000-0001-7085-3365]
\ead{sachinkadam@skku.edu}
\affiliation[1]{organization={Technology Innovation Hub Foundation for IoT and IoE},
    addressline={IIT Bombay campus}, 
    city={Mumbai},
    postcode={400076}, 
    country={India}}
\author[2]{Kaustubh S. Bhargao}
\ead{kbhargao@ee.iitb.ac.in} 
\author[2]{Gaurav S. Kasbekar}
\ead{gskasbekar@ee.iitb.ac.in}
\affiliation[2]{organization={Department of Electrical Engineering},
     addressline={Indian Institute of Technology (IIT) Bombay}, 
    city={Mumbai},
    postcode={400076}, 
    country={India}}

\begin{abstract}
We consider the problem of estimation of the node cardinality of each node type in a heterogeneous wireless network with $T$ types of nodes deployed over a large region, where $T \ge 2$ is an integer. A mobile base station (MBS), such as that mounted on an unmanned aerial vehicle, is used in such cases since a single static base station is not sufficient to cover such a large region. The MBS moves around in the region and makes multiple stops, and at the last stop, it is able to estimate the node cardinalities for the entire region. {\color{black}In this paper, two schemes, viz., HSRC-M1 and HSRC-M2, are proposed to rapidly estimate the number of nodes of each type.} Both schemes have two phases, and they are performed at each stop. We prove that the node cardinality estimates computed using our proposed schemes are equal to, and hence as accurate as, the estimates that would have been obtained if a well-known estimation protocol designed for homogeneous networks in prior work were separately executed $T$ times. {\color{black}Closed-form expressions for the expected number of slots required by HSRC-M1 to execute and the expected energy consumption of a node under HSRC-M1 are computed. The problem of finding the optimal tour of the MBS around the region, which covers all the nodes and minimizes the travel cost of the MBS, is formulated and shown to be NP-complete, and a greedy algorithm is provided to solve it. Using simulations, it is shown that the numbers of slots required by the proposed schemes, HSRC-M1 and HSRC-M2, for computing node cardinality estimates are significantly less than the number of slots required for $T$ separate executions of the above estimation protocol for homogeneous networks.}
\end{abstract}





\begin{keywords}
Mobile Base Station \sep Node Cardinality Estimation \sep Heterogeneous Wireless Networks \sep Optimal MBS Tour Problem 
\end{keywords}

\maketitle

\section{Introduction}\label{sec:introduction}
Mobile base stations (MBSs), such as those mounted on unmanned aerial vehicles (UAVs),  are robust, highly mobile and agile, have wide coverage capability and high battery  backup capacity, and are being  extensively deployed in various military and civilian applications~\cite{mozaffari2019tutorial}. MBSs can be used for estimating the number of nodes such as  moving vehicles in traffic control systems~\cite{kanistras2013survey, ke2016real}, agricultural field monitoring sensors~\cite{giambene20195g}, people affected by disasters such as floods, wild fires~\cite{dinh2019flying}, etc. In~\cite{kanistras2013survey}, MBSs are used to estimate the number of vehicles moving on some congested roads per hour so that an efficient traffic controller can be designed. {\color{black}An MBS mounted on a UAV traverses the given set of busy roads and stops at specific spots for this purpose. The estimated data helps in dynamically fixing the traffic signal ON/OFF durations based on vehicle density.}
Consider an agricultural field in which sensors that measure various parameters (e.g., soil moisture, temperature, etc.) are deployed. 
Before collecting the actual data from the active sensors,\footnote{Active sensors are those that have some data to send to the MBS.} an MBS moves around the agricultural field and stops at predetermined spots to estimate the number of active sensors~\cite{giambene20195g}. This enhances the efficiency of the data collection process since the MBS can decide the amount of time it needs to spend at each stop when it again visits the same spots to collect the actual data, and it can inform the active sensors when to be available for sending the data based on the estimates. During natural disasters such as floods, wildfires, etc., MBSs hover over the affected region to find estimates of the number of affected people. These estimates are used to achieve efficient management of disaster relief teams and the distribution of relief materials~\cite{dinh2019flying}. Node cardinality estimation schemes are also useful in the design of medium access control protocols for wireless networks; in particular, the optimal contention probabilities, contention period durations, data transmission period durations, etc., can be computed as functions of the computed estimates~\cite{kadam2017fast, kadam2020fast, liew2019probability}. Various  Radio-Frequency IDentification (RFID) systems make use of node cardinality estimation schemes for inventory management, tag identification, detection, finding missing tags, etc.~\cite{arjona2018TagID, liu2019TagSearch, liu2020TagSearch, yu2018TagSearch1, yu2018TagSearch2, liu2022MissingTagID, fahim2018TagIDError, zhu2019MissingTagID, zang2018MissingTagID, liu2020TagIDContention}.  

A ``heterogeneous wireless network'' (HWN) is one in which different types of nodes are present~\cite{kadam2019rapid}. Different types of nodes may have different quality of service (QoS) requirements, e.g., one type of node may need to periodically transmit small amounts of data, another type may have strict delay requirements or need priority access to communicate alerts (e.g., in medical and security applications), another type may require high throughput (e.g., in multimedia surveillance applications) or highly reliable communication (e.g., in remote payment systems)~\cite{3gpp, MACM2M, M2MmobileInternet}, etc. Different types of nodes in an HWN may also have different hardware and software capabilities, e.g., different processor speeds, storage capabilities, transmission power, battery capacities, etc.~\cite{intelligentIOT}. {\color{black}In this paper, an HWN containing $T$ types of nodes, which are referred to as Type 1 ($\mathcal{T}_1$), $\dots$, Type $T$ ($\mathcal{T}_T$) nodes, is considered, where $T\geq 2$ is an arbitrary integer.}

Node cardinality estimation schemes for homogeneous Machine-to-Machine (M2M) networks and RFID systems have been proposed in~\cite{qian2011cardinality, arjona2017scalable, hou2015PLACE, liu2019NCENetworked, lin2019NCETash, zhou2018NCETimeVarying, ng2020NCEBeacon, deng2015NCEAlohaFrameCollisonRate, shahzad2014NCE(ART), shahzad2012NCE(ART), ferreira2019NCEAlohaFrame,  han2010NCEAnonymity, xi2020NCEInAndOut, nguyen2019NCEAlohaExpectationMaximization, hasan2018NCEGaussianDistribution, zheng2014NCE(ZOE), zheng2013NCE(ZOE), li2010NCEEnergyMin, zheng2011NCE(PET), wang2022NCENoisyChannel, xiao2016NCEInAndOut, kodialam2007NCE(EZB), lodialam2006NCE, bui2017novel, lugo2021NCEM2MLoadEstimationMPRExpectationMaximization}. Also, an estimation problem has been addressed for homogeneous RFID systems deployed over a large region in which, when the entire region is not in the coverage range of a reader,  it sequentially moves to different locations in order to estimate the total number of active tags present in the entire region~\cite{zhou2014understanding, han2010counting, xiao2019estimating}. However, the above schemes cannot be used to efficiently estimate the node cardinalities of different types of nodes in a \emph{heterogeneous} network with $T$ types of nodes, where $T \ge 2$ is an integer.

Node cardinality estimation schemes for heterogeneous RFID systems and M2M networks have been proposed in~\cite{lee2019NCEHeteroIDFree, xiao2019NCEHeteroSnapshot, zhang2020NCEHeteroSnapshotAnonymity,liu2016NCEHeteroFrameBinarySearch, kadam2017fast, kadam2020fast, sesha2019rapid, kadam2019rapid}. In these works, the authors have considered the case where a single static reader or base station (BS) can cover the entire region. {\color{black}In contrast, in this paper, the problem of rapidly estimating the node cardinality of each node type in an HWN with $T$ types of nodes, which are distributed over a large region, is addressed; in this case, a single static BS is not sufficient. Hence, an MBS is considered, which moves around making multiple stops to cover the region, so that the union of the coverage ranges of the MBS at the set of all stops covers all the nodes (see Section~\ref{nwmodel_SRCM}).} To the best of our knowledge, in prior work, no scheme has been designed for node cardinality estimation in an HWN with nodes distributed over a large region. (Note that separately executing a node cardinality estimation scheme designed for a homogeneous network multiple times to estimate the cardinality of each node type in an HWN is inefficient in general.) {\color{black}This is the area to which this paper contributes.}   

{\color{black}The multiple-set Simple RFID Counting (SRC$_M$) protocol proposed in~\cite{zhou2014understanding} for tag cardinality estimation in a homogeneous RFID system is briefly reviewed in Section~\ref{SubSec_SRCM},  and it is extended to design the proposed node cardinality estimation schemes for an HWN.} We propose two node cardinality estimation schemes, viz., HSRC-M1 and HSRC-M2; using any one of these, the MBS, after covering all the nodes in the region, can find the active node cardinality estimates of all $T$ types. Note that a challenge that needs to be overcome is that the coverage regions at different stops may overlap; hence, the proposed schemes are designed to prevent multiple counting of the same active nodes. The schemes, HSRC-M1 and HSRC-M2, proposed in this paper, are generalizations of the HSRC-1 and HSRC-2 schemes proposed in~\cite{kadam2019rapid}. We prove that the node cardinality estimates computed using any one of our proposed schemes, HSRC-M1 and HSRC-M2, \emph{equal and hence are as accurate as} the estimates that would have been obtained if the SRC$_M$  protocol were separately executed $T$ times to estimate the cardinalities of the $T$ node types. {\color{black}Closed-form expressions for the expected number of time slots required by HSRC-M1 to execute and the expected energy consumption of a node under HSRC-M1 are computed.}
Using simulations, we show that the number of time slots required by each of the proposed schemes, viz., HSRC-M1 and HSRC-M2, for computing the node cardinality estimates is significantly less compared to the number of slots required for $T$ separate executions of the SRC$_M$ protocol.

Now, an MBS is usually battery-powered and, once deployed, starts at a charging station, which is its initial stop,  covers the desired region and returns to the initial stop. Since it is battery-operated, its path must be optimized in order to save energy. Consider a set of possible stops for the MBS. There is a travel cost incurred by the MBS when it moves from one given stop to another. Also, the nodes may be active-powered (having a power source, e.g., a battery) or passive-powered (having no power source, temporarily powered by the MBS). In both cases, the energy expenditure of nodes must be minimized, since frequently replacing the batteries of a large number of nodes would be expensive and since passive-powered nodes have a small amount of available energy. From the set of all possible stops, the MBS must choose the number of stops and their locations such that it covers every node in the network at least once. Also, the MBS must visit a given stop at most once. {\color{black}Taking these facts into account, the optimal MBS tour (OMT) problem is formulated,} which is the problem of finding the optimal tour of the MBS around the region with the following objective and constraints: (a) minimizing the travel cost of the MBS, (b) ensuring that the total energy spent by the nodes during the estimation process does not exceed a given upper limit, (c) ensuring that all nodes are covered at least once, and (d) ensuring that a given stop is visited by the MBS at most once. {\color{black}The OMT problem is formulated as an optimization problem and proven to be NP-complete, and a greedy algorithm is proposed to solve it.} 

The rest of the paper is organized as follows. {\color{black}In Section~\ref{Sec:RelWork}, a review of related prior literature is provided. The network model and problem formulation are presented in Section~\ref{nwmodel_SRCM}. In Section~\ref{SubSec_SRCM}, a brief review of the SRC$_M$ protocol is provided. The proposed node cardinality estimation schemes, viz., HSRC-M1 and HSRC-M2, are described in Section~\ref{EstScheme_SRCM}. In Sections~\ref{p2_3stage} and~\ref{subsec_energy}, the expected number of time slots and the expected energy consumption of a node under HSRC-M1, respectively, are mathematically analyzed. In Section~\ref{OMT}, the OMT problem is formulated, proven to be NP-complete, and a greedy algorithm is provided to solve it. Simulation results are provided in Section~\ref{Sec:Simu} and conclusions and directions for future work are provided in Section~\ref{conclusion}.}

\section{Related Work}\label{Sec:RelWork}
Node cardinality estimation is a crucial problem in several domains, including  RFID systems, the Internet of Things, and data networks. In~\cite{cohen2017CECompNetsOnlineML, ullah2021CESwitchingnets, wang2022NCEDataNetsOnlineCE2, wang2022NCEDataNetsOnlineCE1}, the authors discuss the problem of node cardinality estimation in packet-switching networks for network traffic monitoring, popularity tracking on social media, and network security. 
Online cardinality estimation schemes are proposed in~\cite{wang2022NCEDataNetsOnlineCE2, wang2022NCEDataNetsOnlineCE1} for automatically adapting to different stream sizes in data networks. A similar sampling-based cardinality estimation algorithm is proposed in~\cite{cohen2017CECompNetsOnlineML}, which uses online machine learning to adapt to changes in flow size distribution. In contrast, in this paper, we focus on a wireless network and calculate the node cardinality estimates independently in each frame.

RFID systems consist of two components: tags and reader(s). A tag has its unique ID pre-stored, and it transmits this ID whenever interrogated by a reader. Reader(s) communicate with the tags and perform fundamental tasks such as RFID tag identification, tag search, missing tag identification, tag counting (estimation), etc. These tasks are crucial operations for inventory management~\cite{liu2020TagSearch}. {\color{black} Tag identification is the task of collecting the IDs of all unknown tags in a set~\cite{hou2018NCEPHY1,lin2022compact}.} Various tag identification protocols based on Slotted ALOHA~\cite{arjona2018TagID}, filtering vectors~\cite{zhu2019MissingTagID}, and binary tree~\cite{hou2018NCEPHY1, fahim2018TagIDError}, have been proposed. {\color{black} Tag search is the task of searching the system to find out as to which wanted tags are currently present within the interrogation region of the reader~\cite{yu2018TagSearch1, yu2018TagSearch2, liu2020TagSearch, liu2019TagSearch,liu2023efficient}. Missing tag identification is the task of identifying the tags that have entered or exited the interrogation region of the reader and various protocols are proposed in~\cite{liu2022MissingTagID, zang2018MissingTagID,liu2022revisiting,chu2022efficient} to get the IDs of the missing tags. Algorithms for getting estimates of the tags that either left or entered the system in a certain time frame are proposed in~\cite{xi2020NCEInAndOut, xiao2016NCEInAndOut}. Apart from RFID systems, security surveillance frameworks are used in constructing AI digital twins~\cite{kim2022eco} and providing discriminative public and private services~\cite{kim2022intelligent}. In contrast to the problems considered in the above papers, in this paper, the problem of node cardinality estimation by an MBS in an HWN deployed over a large region is considered.}

Protocols for tag cardinality estimation in homogeneous RFID systems have been proposed in~\cite{qian2011cardinality, arjona2017scalable, hou2015PLACE, liu2019NCENetworked, lin2019NCETash, zhou2018NCETimeVarying, ng2020NCEBeacon, deng2015NCEAlohaFrameCollisonRate, shahzad2014NCE(ART), shahzad2012NCE(ART), ferreira2019NCEAlohaFrame,  han2010NCEAnonymity, xi2020NCEInAndOut, nguyen2019NCEAlohaExpectationMaximization, hasan2018NCEGaussianDistribution, zheng2014NCE(ZOE), zheng2013NCE(ZOE), li2010NCEEnergyMin, zheng2011NCE(PET), wang2022NCENoisyChannel, xiao2016NCEInAndOut, kodialam2007NCE(EZB), lodialam2006NCE}. Node cardinality estimation schemes for homogeneous M2M networks have been proposed in~\cite{bui2017novel, lugo2021NCEM2MLoadEstimationMPRExpectationMaximization}. The above schemes assume that the reader or BS is fixed and that all the tags or nodes are present inside the coverage region of the reader or BS. Another estimation problem has been addressed for homogeneous RFID systems deployed over a large region in which, when the entire region is not in the coverage range of a reader, it sequentially moves to different locations in order to estimate the total number of active tags present in the entire region~\cite{zhou2014understanding,  han2010counting, xiao2019estimating}. However, these schemes cannot be used to efficiently estimate the node cardinalities of different types of nodes in a \emph{heterogeneous} network with $T$ types of nodes, where $T \ge 2$ is an integer.

Node cardinality estimation schemes for heterogeneous RFID systems and M2M networks have been proposed in~\cite{lee2019NCEHeteroIDFree, xiao2019NCEHeteroSnapshot, zhang2020NCEHeteroSnapshotAnonymity,liu2016NCEHeteroFrameBinarySearch, kadam2017fast, kadam2020fast, sesha2019rapid, kadam2019rapid}. In these works, the authors have considered the case where a single static reader or BS can cover the entire region. In contrast, in this paper, we address the problem of rapidly estimating the node cardinality of each node type in an HWN with $T$ types of nodes, which are distributed over a large region; in this case, a single static BS is not sufficient. {\color{black}Hence, an MBS is considered, which moves around and makes multiple stops to cover the region, so that the union of the coverage ranges of the MBS at the set of all stops covers all the nodes.} To the best of our knowledge, our paper is the first to propose node cardinality estimation schemes for an HWN with nodes distributed over a large region.

\section{Network Model, Problem Formulation, and Background}
\label{SC:nw:model:prb:form:background}

\subsection{Network Model and Problem Formulation}\label{nwmodel_SRCM}

\begin{figure}[tbp]
  \centering
    \includegraphics[width=0.48\textwidth]
    {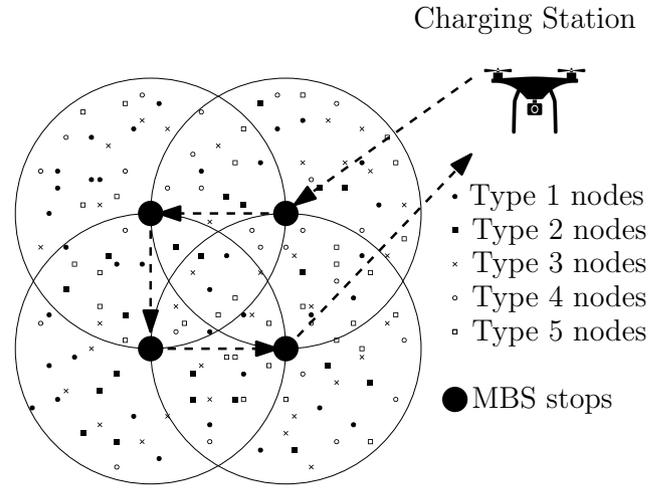}
    \caption{The figure shows $M = 4$ locations (stops) of a mobile base station (MBS) and $T = 5$ types of nodes in a region. The coverage range of the MBS at a stop is the area  inside the circle with that stop as the centre.}
    \label{Network_Model_SRCM} 
\end{figure}

Consider an HWN consisting of an MBS, which moves around in a given region and stops at $M$ different locations, and $T$ different types, say Type 1 ($\mathscr{T}_1$), \ldots, Type $T$ ($\mathscr{T}_T$), of wireless devices (nodes) in the region,  where $T \ge 2$ is an arbitrary integer. {\color{black}It is assumed that all the nodes lie within the union of the coverage ranges of the MBS at the $M$ stops.} Also, the different types of nodes may, e.g., be nodes that send emergency, periodic, normal type data, etc. Fig.~\ref{Network_Model_SRCM} illustrates such a network for the case $T=5$ and $M=4$. {\color{black}The sets of nodes of $\mathscr{T}_1$, \ldots, $\mathscr{T}_T$ are denoted by $\mathscr{N}_1$, \ldots, $\mathscr{N}_T$, respectively;} let $|\mathscr{N}_b| = N_b$, $b \in \{1, \ldots, T\}$.\footnote{$|A|$ denotes the cardinality of set $A$.} 
Only a subset of the set of all nodes, i.e., $\bigcup_{b=1}^{T} \mathscr{N}_b$, are \emph{active}, i.e., have data to send. 
Let $\mathscr{A}^{(m)}_{b}$ be the set of active nodes of $\mathscr{T}_b$, $b \in \{1,\ldots, T\}$, within the coverage range of the MBS when it is at stop $m \in \{1, \ldots, M\}$, $n^{(m)}_{b} = |\bigcup_{i=1}^{m} \mathscr{A}^{(i)}_{b}|$, and $\bar{n}^{(m)}_b=|\mathscr{A}_b^{(m)}|$. 
So the total number of active nodes of $\mathscr{T}_b$ in the entire region is $n_{b} =  n^{(M)}_{b} = |\bigcup_{m=1}^{M} \mathscr{A}^{(m)}_{b}|$.  Our objective is to rapidly estimate the values of $n_b$, $b \in \{1,\ldots, T\}$. In particular, let $\hat{n} ^{(m)}_{b}$ (respectively, $\hat{n}_{b}$)  be the estimated value of $n^{(m)}_{b}$ (respectively, $n_b$). 
Let $\epsilon$, the desired relative error bound, and $\delta$, the desired error probability, be the user specified accuracy requirements, i.e., the accuracy parameters with which the estimate $\hat{n}_b$ needs to be obtained.  Our objective is to rapidly find estimates $\hat{n}_b$ for $n_b$, $b \in \{1,\ldots, T\}$, such that $P(| \hat{n}_b - n_b|$ $\leq$ $\epsilon n_b)$ $\geq 1 - \delta$, $\forall b \in \{1,\ldots, T\}$. 

Note that a node may lie in the coverage region(s) of the MBS at one or more stopping locations. We cannot simply use an estimation protocol designed for a network with  a static base station~\cite{kadam2017fast, kadam2020fast, sesha2019rapid, kadam2019rapid} to separately estimate the node cardinality of a given type at each stop $m$ and add up the estimates since a node that is in the range of the MBS at multiple stops would appear multiple times in the estimate.

{\color{black}A list of mathematical symbols used in this paper along with their meanings is provided in Table~\ref{Tab:MathSymbols}.}
\begin{table}[!ht]
    \centering
    \caption{List of mathematical symbols}
    \begin{tabular}{|c|l|}
        \hline
         Symbol & Meaning \\ \hline
         $T$ & Number of types of nodes \\ \hline
         $M$ & Number of stops of MBS
         \\ \hline
         $\mathscr{T}_i$ & Type $i$ \\ \hline
         $\mathscr{N}_i$ & Set of nodes of $\mathscr{T}_i$ \\ \hline
         $\mathscr{A}^{(m)}_{b}$ & Set of active nodes of $\mathscr{T}_b$ at stop $m$ \\ \hline
         $n_{b}$ & Total number of active nodes of $\mathscr{T}_b$ in the \\&entire region \\ \hline
         $\tilde{n}_{b}$ & Estimated value of $n_b$ after phase 1 \\ \hline
         $\hat{n}_{b}$ & Final estimated value of $n_b$ \\ \hline
         $\epsilon$ & Desired relative error bound \\ \hline
         $\delta$& Desired error probability \\ \hline
         $W$ & Number of independent trials executed in \\&phase 1 of the SRC$_M$ protocol  \\ \hline
         $t$ & Number of slots used in phase 1 of the \\&SRC$_M$ protocol for every trial \\ \hline
         $\ell$ & Number of slots used in phase 2 of the \\&SRC$_M$ protocol  \\ \hline
         $\alpha$, $\beta$ & Symbols used for transmission \\ \hline
         $B^{(m)}_h$, &$h^{th}$ Block , $h \in \{1, \ldots, \ell \}$ \\ \hline
         $E$ and $C$ & Empty and Collision \\ \hline
         $\Delta^{(m)}_i$ & Number of slots required in phase $i$ of \\&HSRC-M1, at stop $m \in \{1, \ldots, M\}$ \\ \hline
         $Z^{(m)}_{BP}$ & Number of slots required by the \\&broadcast packets BP$^{(m)}_1$ and BP$^{(m)}_2$ \\ \hline
         $S_W$ & Slot width in bits \\ \hline
         $\xi_b^{(m)}$ &  Probability with which a given node is\\& active when the MBS is at stop $m$ \\ \hline
         $\bar{\gamma}_b$ & Energy spent per slot by a node of $\mathscr{T}_b$,\\&  in the idle state \\ \hline $\hat{\gamma}_b$ & Energy spent per slot by a node of $\mathscr{T}_b$, \\& in the reception state \\ \hline
         $\gamma^{\alpha}_1$ & Energy spent per slot by a node of $\mathscr{T}_1$ \\&for transmitting the symbol $\alpha$ \\ \hline         $\gamma^{\beta}_b$ & Energy spent per slot by a node of $\mathscr{T}_b$ \\&for transmitting the symbol $\beta$ \\ \hline
         $\gamma'$ & Energy spent by a node  for transmitting \\&a signal in the transmission state in phase 1 \\ \hline
         $\Gamma^{(m)}_{b,\tau}(\Phi_i)$ & Expected energy consumption of a \\&$\mathscr{T}_b$ node, in phase $i$, in the transmission state \\ \hline
         $\Gamma^{(m)}_{b,\iota}(\Phi_i)$ & Expected energy consumption of a \\&$\mathscr{T}_b$ node, in phase $i$, in the idle state \\ \hline
         $\Gamma^{(m)}_{b,\rho}(\Phi_i)$ & Expected energy consumption of a \\&$\mathscr{T}_b$ node, in phase $i$, in the reception state \\ \hline
         $\mathcal{G}$ & A connected directed graph \\ \hline
         $\mathcal{M}$ & Set of possible stops for the MBS \\ \hline
         $\mathcal{E}$ & Set of links (edges) between the stops \\ \hline
         $\mathcal{N}$ & Set of nodes in the region \\ \hline
         $c_{u,v}$ &  Cost of travelling from stop $u$ to stop $v$\\& via the link from $u$ to $v$  \\ \hline
         $\eta_{k,m}$ & Energy spent by node $k$ during the \\&estimation process when it is in the\\& range of the MBS at stop $m$ \\ \hline
         $\overline{\eta}$ & Upper limit on the total energy \\&consumed by the nodes \\ \hline
    \end{tabular}    
    \label{Tab:MathSymbols}
\end{table}

\subsection{Review of the Multiple-set Simple RFID Counting (\texorpdfstring{SRC$_M$}{}) Protocol}\label{SubSec_SRCM}
The SRC$_M$ protocol was proposed in~\cite{zhou2014understanding} for node cardinality estimation in a homogeneous multiple-set network; our proposed schemes extend it for node cardinality estimation in an HWN with $T$ types of nodes using an MBS.\footnote{Note that in~\cite{zhou2014understanding}, the SRC$_M$ protocol is designed for an RFID system, which consists of a reader and several tags in a region. {\color{black}In this paper, the SRC$_M$ protocol is used in the context of a wireless network; so while reviewing the SRC$_M$ protocol, the terms ``MBS'' and ``node'' are used in place of ``reader'' and ``tag'', respectively}.} The SRC$_M$ protocol estimates the number of active nodes in a homogeneous network within the given accuracy requirements of $\epsilon$ and $\delta$. Let us consider the network model and objectives described in Section~\ref{nwmodel_SRCM} with $T=1$. In the SRC$_M$ protocol, the MBS moves around and makes $M$ stops, and at each stop $m \in \{1, \ldots, M\}$, it estimates the number of active nodes, say ${n} ^{(m)}_{1}$,  present in the union of its coverage regions at the stops up to stop $m$.   The SRC$_M$ protocol consists of two phases, and it executes both phases at each stop $m \in \{1, \ldots, M\}$ (see Fig.~\ref{SRCM_window}). At stop $m$, at the end of phase 1 (respectively, phase 2), it finds a rough estimate $\tilde{n} ^{(m)}_{1}$ (respectively, the final estimate $\hat{n} ^{(m)}_{1}$) of ${n} ^{(m)}_{1}$~\cite{zhou2014understanding}. Phase 1 (respectively, phase 2) of the protocol consists  of a sequence of trials (respectively, a single trial), and each trial consists of multiple  slots (see Fig.~\ref{SRCM_window}).  The number of slots in a trial is called the length of the trial. After a trial, a slot can be in one of the following three states: (i) \emph{Empty}: No node  transmitted in that slot, (ii) \emph{Success}: Exactly one node transmitted in that slot, (iii) \emph{Collision}: More than one node transmitted in that slot.
{\color{black}A brief review of phase 1 (respectively, phase 2) of the SRC$_M$ protocol is provided in Section~\ref{SubSec_SRCM_P1} (respectively, Section~\ref{SubSec_SRCM_P2}).}

\subsubsection{Review of Phase 1 of the \texorpdfstring{SRC$_M$}{} Protocol}\label{SubSec_SRCM_P1}
Let $T=1$, $n_{all}$ be the total number of nodes manufactured and $t = \ceil{\log_{2}{n_{all}}}$.\footnote{$\ceil{x}$ denotes the smallest integer greater than or equal to $x$. Also, $\floor{x}$ denotes the largest integer smaller than or equal to $x$.} At each stop $m \in \{1, \ldots, M\}$, phase 1 of the SRC$_M$ protocol executes $W$ independent trials, each consisting of $t$ time slots, and $W$ is determined based on the desired error probability $\delta$. 
{\color{black}For a given $\delta \in (0, 1)$, the smallest $W$, which satisfies the following equation, is chosen~\cite{zhou2014understanding}:} 
\begin{equation}
    \sum_{i=\floor{\frac{W+1}{2}}}^{W} {\binom{W}{i}}(1-\delta)^{i}\delta^{W-i} \ge 1 - \delta. 
\end{equation}
For example, for $\delta=0.2$, $W = 30$ is used. Let 
\begin{equation}
\label{EQ:Phase1:bar:pi}
{\bar{p}}^{(m)}(i) = \left\{ 
\begin{array}{ll}
2^{-i}, & \mbox{for } i \in \{1, \ldots, t-1\}, \\
2^{-(t-1)}, & \mbox{for } i = t. \\
\end{array}
\right. 
\end{equation}

Suppose in trial $w \in \{1, \ldots, W\}$, each active node in the coverage range of the MBS at stop $m$ independently transmits in a slot $i$, $i \in \{1, \ldots, t\}$, with probability ${\bar{p}}^{(m)}(i)$ (see Fig.~\ref{SRCM_window}). Let $s^{(m)}_w(i)$, $i \in \{1, \ldots, t\}$, be a bit vector of length $t$ at stop $m \in \{1, \ldots, M\}$, and trial $w \in \{1, \ldots, W\}$, whose $i^{th}$ bit is 0 if no active node transmits in the $i^{th}$ slot, else it is 1. {\color{black}Another bit vector is found using the following equation:} ${Y}^{(m)}_w(i) = {Y}^{(m-1)}_w(i) \lor s^{(m)}_w(i)$,\footnote{$\lor$ denotes the bitwise OR operator.} where ${Y}^{(0)}_w(i)$ is a bit vector, all of whose elements are zeros. 
Now, phase 1 searches ${Y}^{(m)}_w(i)$ at bit positions $i = 2^{j-1}$, $j \in \{1, \ldots, 1+ \log_2(t) \}$, until it encounters a  $0$ bit at $j = j'$ (say) for the first time.\footnote{If no bit is $0$, then we take $v_w^m = t$.} Then it uses the binary search algorithm over the set $\{2^{j'-2}, \ldots, 2^{j'-1}-1\}$, to find the maximum integer (slot number) $v_w^m$, $w \in \{1, \ldots, W\}$, such that the bit ${Y}^{(m)}_w(v_w^m)$ is $1$.
At stop $m$, at the end of all $W$ trials, the estimate of $n^{(m)}_{1}$ is computed as~\cite{zhou2014understanding}:  
\begin{equation}
\label{EQ:Phase1:estimate:nm_tilde}
\tilde{n} ^{(m)}_{1}= 0.794 \times 2^{\left(\Sigma_{w=1}^{W} v_w^m\right)/W}.
\end{equation}

\subsubsection{Review of Phase 2 of the \texorpdfstring{SRC$_M$}{} Protocol}\label{SubSec_SRCM_P2}
\begin{figure}
\begin{center}
	\includegraphics[scale = 0.45]{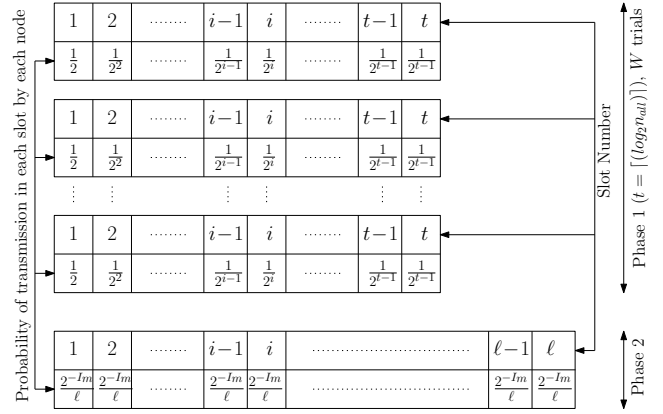}
	\caption{The figure shows the frame structure used in the SRC$_M$ protocol when the MBS is at stop $m$.}
	\label{SRCM_window}
\end{center}
\end{figure}

At each stop $m \in \{1, \ldots, M\}$, phase 2 of the SRC$_M$ protocol uses the ``balls-and-bins'' ($BB$)  method~\cite{zhou2014understanding}. In this method, each active node independently chooses a slot out of a fixed number, say $\ell$, of slots uniformly at random, transmits in that slot with a fixed probability assigned to it, and otherwise does not transmit (see Fig.~\ref{SRCM_window}). The probability of transmission is $2^{-I_m}$, where for each $m \in \{1, \ldots, M \}$, the value of $I_m$ is computed using the following equations:
\begin{align}
\label{EQ:SRCM:p_m}
p_m &= \min{(1, 1.6 \ell/\tilde{n}^{(m)}_{1})}, \\ 
\label{EQ:SRCM:I_m}
I_m &=  \argmin_{j \in \{1, 2, 3, \ldots \}} |2^{-j} - p_m |.
\end{align}
Note that the probability of transmission by a given active node in a given slot out of the $\ell$ slots is $\frac{2^{-I_m}}{\ell}$ (see Fig.~\ref{SRCM_window}). 
The parameter $\ell$ is a function of the desired relative error $\epsilon$ and it is found from a numerical lookup table, which is constructed by executing the SRC$_M$ protocol for different values of $n^{(m)}_{1}$, and finding the value of $\ell$ required to achieve a given value of $\epsilon$~\cite{zhou2014understanding}.

Let $z$ be the number of slots out of the $\ell$ slots that are empty in the phase 2 executions at all stops $m \in \{1, \ldots, M \}$.  
The protocol counts the number of empty slots, $z$, and computes the final estimate as follows~\cite{zhou2014understanding}:\footnote{If $z = 0$, then $\hat{n}_1$ is set to an arbitrary integer.}
\begin{equation}
\label{EQ:SRCM:hatn}
\hat{n}_1 = \hat{n}^{(M)}_{1}= \frac{\ln({z/\ell})}{\ln{(1-p_M/\ell})}.
\end{equation} 

 \section{Proposed Node Cardinality Estimation Schemes}\label{EstScheme_SRCM}
We now describe the proposed schemes, which are extensions of the SRC$_M$ protocol for estimating the number of active nodes of each type in the model with an MBS with $M$ stops and $T$ different types of nodes in the union of its coverage ranges described in Section~\ref{nwmodel_SRCM}. The proposed schemes are Heterogeneous SRC$_M$-1  (HSRC-M1) and   Heterogeneous SRC$_M$-2 (HSRC-M2), and both consist of two phases-- they correspond to the two phases of the SRC$_M$ protocol (see Section~\ref{SubSec_SRCM}). Each of HSRC-M1 and HSRC-M2 executes both its phases at each stop $m \in \{1, \ldots, M\}$. Phase 2 of HSRC-M1 consists of $3$ steps and that of HSRC-M2  consists of $2$ steps (except for $T=2$ and $T=3$). So henceforth, we refer to them as ``The 3-Step Protocol'' (3-SP) (see Section~\ref{Sec:HSRC-M1}) and ``The 2-Step Protocol'' (2-SP) (see Section~\ref{Sec:HSRC-M2}), respectively. 

At each stop $m \in \{1, \ldots, M\}$, let the MBS have an array of length $\ell$ (number of slots used in phase 2 of the SRC$_M$ protocol) assigned to nodes of each $\mathscr{T}_b$, $b \in \{1, \ldots, T\}$; also, each element of the array takes value 0 or 1. {\color{black}This array is referred to as the bit pattern of $\mathscr{T}_b$, $b \in \{1, \ldots, T\}$, and denoted by $X^{(m)}(b,i)$, $i \in \{1, \ldots, \ell\}$.}
At each stop $m \in \{1, \ldots, M\}$, during phase 1 of HSRC-M1 or HSRC-M2, phase 1 of SRC$_M$ is separately executed $T$ times to compute rough estimates, $\tilde{n}^{(m)}_{1},\ldots, \tilde{n}^{(m)}_{T}$, of the active node cardinalities of the $T$ node types. 
Note that this requires the execution of $W$ independent trials for each node type, i.e., a total of $WT$ independent trials at each stop $m$, in phase 1.\footnote{We use this simple scheme of $T$ separate executions of phase 1 of SRC$_M$ in phase 1 of HSRC-M1 and HSRC-M2 since the number of slots used in phase 1 of SRC$_M$ is negligible compared to the number of slots used in phase 2. The performances of the proposed schemes can be slightly improved by instead using schemes similar to those described in Sections~\ref{Sec:HSRC-M1} and~\ref{Sec:HSRC-M2} in phase 1.}

In phase 2 of both HSRC-M1 and HSRC-M2, one possible approach is to execute $T$ trials of phase 2 of SRC$_M$ separately, one for each of the $T$ node types;  note that in the trial for  $\mathscr{T}_b$ nodes,  the probability $\frac{2^{-I_{b,m}}}{\ell}$, where $I_{b,m}$ is given by \eqref{EQ:SRCM:I_bm}, is used as the probability with which a given active node of $\mathscr{T}_b$ transmits in a given slot. This approach requires a total of $MT\ell$ time slots to execute. {\color{black}This approach is referred to as ``$T$-$Rep$''. In Section~\ref{Sec:HSRC-M1} and Section~\ref{Sec:HSRC-M2}, alternative schemes to execute phase 2 of HSRC-M1 and HSRC-M2, respectively, are described, which require fewer time slots to execute than ``$T$-$Rep$''.}

\subsection{Phase 2 of the Heterogeneous \texorpdfstring{SRC$_M$-1}{} (HSRC-M1) Scheme}\label{Sec:HSRC-M1}
\begin{figure}
	\begin{center}
		\includegraphics[scale = 0.43]{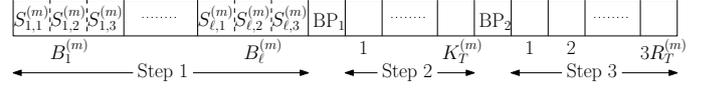}
		\caption{The figure shows the frame structure used in the 3-Step Protocol (3-SP) for the case $T=4$.} 
		\label{Est_Window}
	\end{center}
\end{figure}
{\color{black}Recall that in phase 2 of the HSRC-M1 scheme, 3-SP is used, which is described now.}  
At each stop $m \in \{1, \ldots, M\}$, step 1 of 3-SP consists of $\ell$ blocks, say $B^{(m)}_h$, $h \in \{1, \ldots, \ell \}$ (see Fig.~\ref{Est_Window}).  
Each block, $B^{(m)}_h$, is divided into ($T-1$) slots $S^{(m)}_{h,1}$, \ldots, $S^{(m)}_{h,T-1}$. Each active node of each of the $T$ types independently chooses a block $B^{(m)}_h$ out of the $\ell$ blocks uniformly at random and transmits its corresponding symbol combination (see Fig.~\ref{Sym_Combo3}) in that block with  probability $2^{-I_{b,m}}/\ell$, where $I_{b,m}$ is obtained from the following equations:\footnote{Note that the values of $\tilde{n}^{(m)}_{b}$, $m \in \{1, \ldots, M\}$, $b \in \{1, \ldots, T\}$, are available from phase 1 of the HSRC-M1 protocol.} 
\begin{align}
\label{EQ:SRCM:p_bm}
p_{b,m} & = \min{(1, 1.6 \ell/\tilde{n}^{(m)}_{b})},\\
\label{EQ:SRCM:I_bm}
I_{b,m} & =  \argmin_{j \in \{1, 2, 3, .....\}} |2^{-j} - p_{b,m} |. 
\end{align}
Fig.~\ref{Sym_Combo3} shows the symbol combination used by each node type. $\mathscr{T}_1$ active nodes whose chosen block is $B^{(m)}_h$ transmit symbol $\alpha$ in all ($T-1$) slots, i.e., $S^{(m)}_{h,1}$, \ldots, $S^{(m)}_{h,T-1}$, of block $B^{(m)}_h$. $\mathscr{T}_2$ (respectively, $\mathscr{T}_3, \ldots, \mathscr{T}_T$) active nodes whose chosen block is $B^{(m)}_h$ transmit symbol $\beta$ in slot $S^{(m)}_{h,1}$ (respectively, $S^{(m)}_{h,2}$, \ldots, $S_{h,T-1}$) and do not transmit in the other slots of block $B^{(m)}_h$. For example, for $T=4$, $\mathscr{T}_1, \mathscr{T}_2, \mathscr{T}_3,$ and $\mathscr{T}_4$ active nodes whose chosen block is $B^{(m)}_h$ transmit symbols ($\alpha, \alpha, \alpha$), ($\beta, 0, 0$),  ($0, \beta, 0$), and ($0, 0, \beta$), respectively,  in the $(T-1) = 3$ slots of $B^{(m)}_h$. Step 1 concludes with this. The outcome in each slot can be any of the following: (i) Empty ($E$), (ii) Success ($\alpha$ or $\beta$), (iii) Collision ($C$). 

\begin{figure}
\centering
\resizebox{0.8\columnwidth}{!}{\includegraphics{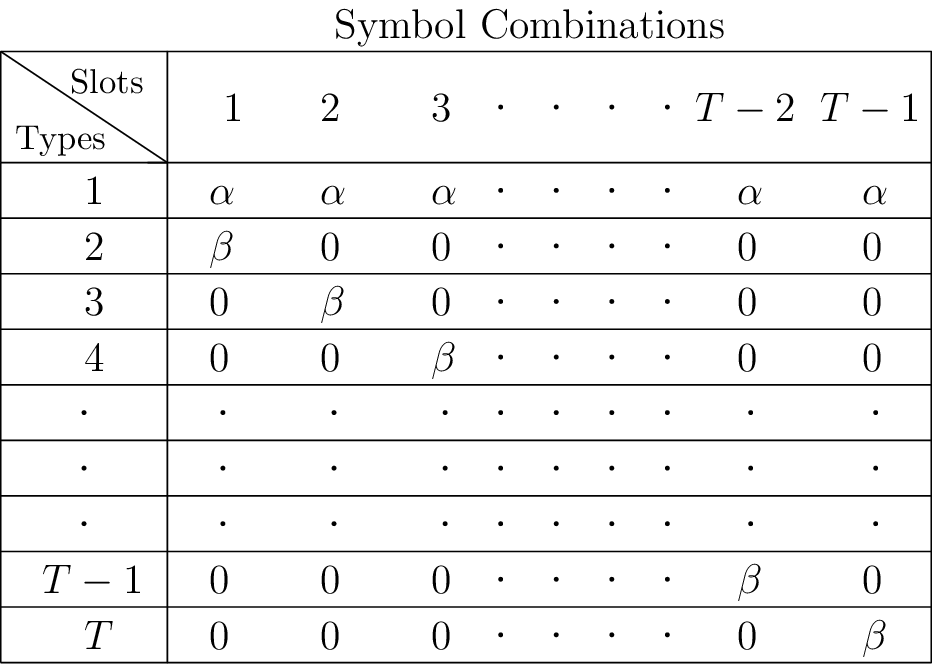}}
\caption{ The figure shows the symbol combination used by each node type in HSRC-M1. The symbol $0$ indicates ``no transmission''.}
 \label{Sym_Combo3} 
\end{figure}

The MBS considers the result of each block $B^{(m)}_h$, $h \in \{1, \ldots, \ell \}$, in step 1 to determine which of the active nodes should participate in step 2. If the MBS unambiguously determines the types of nodes that transmitted in a given block $B^{(m)}_h$, then active nodes of the $T$ types, which chose block $B^{(m)}_h$,  do not need to participate in step 2; otherwise, $\mathscr{T}_1$ active nodes that participated in $B^{(m)}_h$ must participate again in step 2 (this occurs when the outcome is ($C, \ldots, C$), in which collisions occur in all $T-1$ slots of $B^{(m)}_h$). The outcome ($C, \ldots, C$)  could be the result of transmissions by at least two nodes from $\mathcal{N}_1$, or by one node from $\mathcal{N}_1$ and at least one node each from $\mathcal{N}_2, \ldots, \mathcal{N}_T$, or by at least two nodes each from $\mathcal{N}_2, \ldots, \mathcal{N}_T$ and none from $\mathcal{N}_1$. To resolve the ambiguity, after the end of step 1, the MBS transmits a broadcast packet (BP), say BP$_1$ (see Fig.~\ref{Est_Window}), in which the list of the numbers of all blocks in step 1 in which collisions occurred in all ($T-1$) slots is encoded. 

In step 2, there are $K^{(m)}_T$ slots (see Fig.~\ref{Est_Window}), where $K^{(m)}_T$ is the number of blocks in step 1 in which collisions occurred in all ($T-1$) slots. For $i \in \{1, \ldots, K^{(m)}_T$\}, in the $i^{th}$ slot of step 2, $\mathscr{T}_1$ nodes that transmitted in the $i^{th}$ block of step 1 in which collisions occurred in all ($T-1$) slots, transmit symbol $\alpha$. $\mathscr{T}_2, \ldots, \mathscr{T}_T$ nodes do not transmit in step 2. Now, it is easy to see that at the end of step 2, the MBS  unambiguously knows the set of block numbers from step 1 in which $\mathscr{T}_1$ nodes transmitted. However,  if in step 2, there are collisions in some of the slots, ambiguity remains with the MBS  on whether $\mathscr{T}_2, \ldots, \mathscr{T}_T$ nodes transmitted in the corresponding blocks of step 1. To resolve this ambiguity, after the end of step 2, the MBS  transmits a BP, say BP$_2$ (see Fig.~\ref{Est_Window}), in which it encodes the list of block numbers of step 1 for which collisions occurred in the corresponding slots of step 2. Suppose there are $R^{(m)}_T$ blocks in this list. 

In step 3, $(T-1)R^{(m)}_T$ slots are used (see Fig.~\ref{Est_Window}). For $i \in \{1,\ldots,R^{(m)}_T\}$,  $\mathscr{T}_2$ (respectively, $\mathscr{T}_3, \ldots, \mathscr{T}_T$) active nodes corresponding to the $i^{th}$ block in the above list transmit symbol $\beta$ in the $((i-1)(T-1)+1)^{th}$ (respectively, $((i-1)(T-1)+2)^{th}$, \ldots, $(i(T-1))^{th}$) slot of step 3. It is easy to see that for each $b \in \{1, \ldots, T\}$, at the end of step 3, the MBS  unambiguously knows the set of block numbers of step 1 in which $\mathscr{T}_b$  nodes transmitted. For each $b \in \{1, \ldots, T\}$ and $i \in \{1, \ldots, \ell\}$, $X^{(m)}(b,i)$ (defined in the second paragraph of \Sectref{EstScheme_SRCM}) is set to $1$ if at least one node of $\mathscr{T}_b$ transmitted in block $i$ of step 1, and to $0$ otherwise. 

\subsection{Phase 2 of the Heterogeneous \texorpdfstring{SRC$_M$-2}{} (HSRC-M2) Scheme}\label{Sec:HSRC-M2}
\begin{figure}
	\begin{center}
		\includegraphics[scale = 0.48]{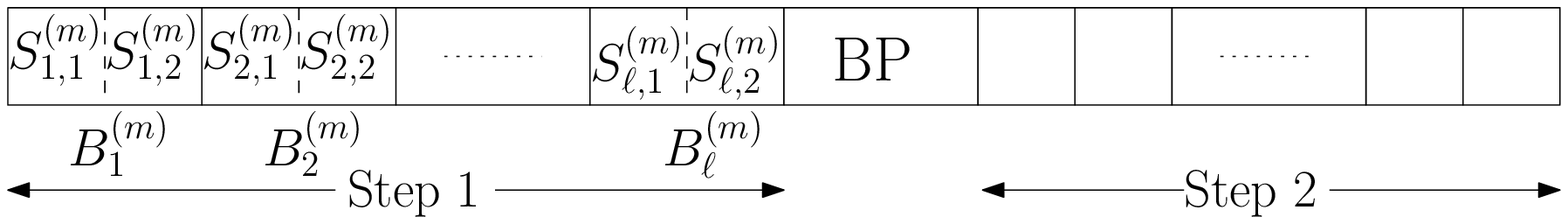}
		\caption{The figure shows the frame structure used in the 2-Step Protocol (2-SP) for the case $T=4$.} 
		\label{Est_Window2}
	\end{center}
\end{figure}

{\color{black}Recall that in phase 2 of HSRC-M2, 2-SP is used, which is described now.}
For $T=2$ and $T=3$, 2-SP is identical to 3-SP. For $T \ge 4$,  2-SP is a more sophisticated scheme than 3-SP and has only two steps.

\begin{figure}
\centering
\resizebox{0.96\columnwidth}{!}{\includegraphics{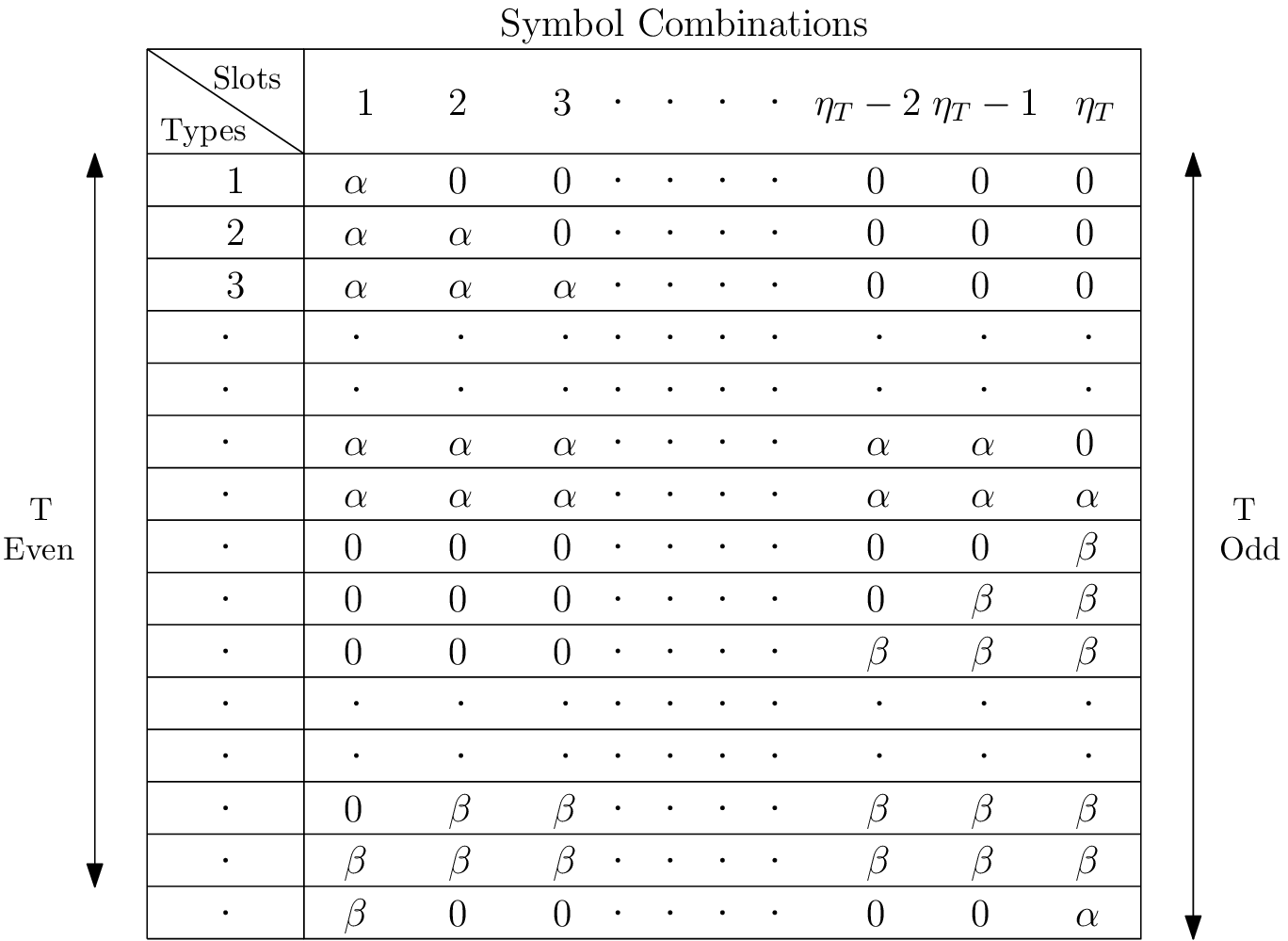}}
\caption{{ The figure shows the symbol combination used by each node type in HSRC-M2. $\eta_T = T/2$ if $T$ is even and $\eta_T = (T-1)/2$ if $T$ is odd. The symbol $0$ indicates ``no transmission''.}}
\label{Sym_Combo} 
\end{figure}

Fig.~\ref{Est_Window2} shows the 2-SP frame structure, which consists of two steps, for $T=4$.  In step 1, $\ell$ blocks, with $\eta_T$ slots each, where $\eta_T = T/2$ if $T$ is even and $\eta_T = (T-1)/2$ if $T$ is odd, are used. Hence, step 1 consists of $\tau_f(T) = (T \ell)/2$ slots if $T$ is even and $\tau_f(T) = ((T-1)\ell)/2$ slots if $T$ is odd. The number of slots in step 2, $\tau_s(T)$, is determined by the slot results in step 1. Each active node of each of the $T$ types independently chooses a block $B_i^{(m)}, \, i \in \{1, \ldots , \ell\}$, out of the $\ell$ blocks uniformly at random and transmits its corresponding symbol combination (see Fig.~\ref{Sym_Combo}) in that block with probability $2^{-I_{b,m}/\ell}$, where $I_{b,m}$ is obtained using~\eqref{EQ:SRCM:p_bm} and~\eqref{EQ:SRCM:I_bm}. Fig.~\ref{Sym_Combo} shows the symbol combination used by each node type.\footnote{For example, each active node of $\mathscr{T}_2$  and selected block $B_i^{(m)}$ transmits symbol $\alpha$ in slots 1 and 2 and does not transmit in slots $3, \ldots, \eta_T$ of block $B_i^{(m)}$. Each active node of $\mathscr{T}_3$  and selected block $B_i^{(m)}$ transmits symbol $\alpha$ in slots 1, 2, and 3, and does not transmit in slots $4, \ldots, \eta_T$ of block $B_i^{(m)}$.  For $T$ even, each active node of $\mathscr{T}_T$  and selected block $B_i^{(m)}$ transmits symbol $\beta$ in slots $1, \ldots, \eta_T$ of block $B_i^{(m)}$. For $T$ odd, each active node of $\mathscr{T}_T$  and selected block $B_i^{(m)}$ transmits symbol $\beta$ in slot 1, symbol $\alpha$ in slot $\eta_T$, and does not transmit in slots $2, \ldots, \eta_T -1$ of block $B_i^{(m)}$.} There are four possible outcomes in each slot: 0 (no transmission), $\alpha$, $\beta$, and $C$ (collision). The MBS keeps track of the results for each slot in each block. Let us define the bit patterns $X^{(m)}(b,i)$, $b \in \{1, \ldots T\}$, $i \in \{1, \ldots , \ell\}$, and $m \in \{1, \ldots, M\}$ as in the second paragraph of \Sectref{EstScheme_SRCM}.

\textcolor{black}{
Under HSRC-M2, the bit patterns $X^{(m)}(b,i)$ for the majority of $b$ and $i$ values are found in step 1; ambiguity about the rest remains, which is resolved in step 2. It is easy to verify that if no collision occurs in any of the slots of a block $B^{(m)}_i$, then the bit patterns $X^{(m)}(b,i)$, $b \in \{ 1, \ldots, T\}$, can be determined unambiguously.\footnote{For example: Let us consider the case $T=6$. From Fig.~\ref{Sym_Combo}, we get that the symbol combinations used are $(\alpha, 0, 0)$, $(\alpha, \alpha, 0)$, $(\alpha, \alpha, \alpha)$, $(0, 0, \beta)$, $(0, \beta, \beta)$, and $(\beta, \beta, \beta)$. If the slot results in step 1 of block $B^{(m)}_i$ are ($\alpha, 0,\beta$), it implies that exactly one node each from $\mathscr{T}_1$ (hence, $X^{(m)}(1,i)=1$) and $\mathscr{T}_4$  (hence, $X^{(m)}(4,i)=1$), and no nodes from $\mathscr{T}_2$  (hence, $X^{(m)}(2,i)=0$), $\mathscr{T}_3$  (hence, $X^{(m)}(3,i)=0$), $\mathscr{T}_5$  (hence, $X^{(m)}(5,i)=0$), and $\mathscr{T}_6$ (hence, $X^{(m)}(6,i)=0$) have transmitted.} If a collision occurs in one or more slots of a block $B^{(m)}_i$, then ambiguity about $X^{(m)}(b,i)$ for some values of $b$ remains, which is resolved in step 2 using transmissions of appropriate symbol combinations. At the end of step 1, the MBS sends a BP informing the active nodes of the block numbers $i$ and type numbers $b$ whose corresponding nodes must participate in step 2. The active nodes corresponding to those blocks $B^{(m)}_i$ and types $b$ for which there is still ambiguity about $X^{(m)}(b,i)$  after step 1 participate in step 2, whereas the others do not. If the  result  ($C, \ldots, C$) happens in a block $B^{(m)}_i$ of step 1, there is ambiguity about whether nodes with selected block number $i$ of \emph{all} $T$ types are active or inactive at the end of step 1. 
These node types are divided into two groups, with ambiguity addressed in the following manner: If $T$ is an even number, the node types are divided into two groups, each with a size of $T/2$. To address the ambiguity, each of the groups $\big(\{$Type 1, \ldots, Type $T/2\}$ and \{Type $T/2 + 1$, \ldots, Type $T\} \big)$ separately transmits symbol combinations in step 2 using the scheme of $T/2$ node types (see Fig.~\ref{Sym_Combo}) twice. If $T$ is an odd number, the node types are divided into two groups of sizes $(T+1)/2$ and $(T-1)/2$. To resolve the ambiguity, in step 2, each of the groups $\big(\{$Type 1, \ldots, Type $(T+1)/2\}$ and \{Type $(T+1)/2 + 1$, \ldots, Type $T\} \big)$ separately transmits symbol combinations using the schemes of $(T+1)/2$ and $(T-1)/2$ node types (see Fig.~\ref{Sym_Combo}), respectively.
Note that HSRC-M2 is \emph{recursive}-- if the result ($C, \ldots, C$) occurs in a block of step 1 while executing the scheme for $T$ node types, then the scheme for $T/2$, $(T+1)/2$ and/ or $(T-1)/2$ node types is employed to resolve the ambiguity in step 2.   
}
{\color{black}As examples, the estimation schemes used in 2-SP for $T=7$ and $T=8$ are provided in detail in Appendices~\ref{Apdx_2SP_T7} and~\ref{Apdx_2SP_T8}, respectively.}\footnote{For the scheme for $T=3$, refer to~\cite{kadam2017fast} and for the schemes for $T=4,~5,$ and $6$, refer to~\cite{kadam2020fast}.} 

A BP is sent by the MBS  after step 1 (see Fig.~\ref{Est_Window2}), which contains instructions that the active nodes should follow to resolve the remaining ambiguity, if any, in step 2. 
It can be checked that for each $b \in \{1, \ldots, T\}$, at the end of step 2, the MBS  unambiguously knows the set of block numbers of step 1 in which $\mathscr{T}_b$  nodes transmitted. For each $b \in \{1, \ldots, T\}$ and $i \in \{1, \ldots, \ell\}$, $X^{(m)}(b,i)$ is set to 1 if at least one node of $\mathscr{T}_b$ transmitted in block $i$ of step 1, and to 0 otherwise.

\subsection{Computation of Node Cardinality Estimates}\label{Subsec:ComputeEst}
After the end of 3-SP or 2-SP at stop $m \in \{1, \ldots, M\}$, the MBS knows the bit patterns $X^{(m)}(b,i)$, $\forall b \in \{1, \ldots, T\}$, $\forall i \in \{1, \ldots, \ell\}$, of all the types of  nodes. Let $z_b$ be the number of zeros in $X(b,i) = \lor_{m=1}^{M} X^{(m)}(b,i)$, $i \in \{1, \ldots, \ell\}$, at the last stop $M$ of the MBS. Then for each $b \in \{1, \ldots, T\}$, the final estimate generated by the protocol is (see~\eqref{EQ:SRCM:hatn})\footnote{If $z_b = 0$, then $\hat{n}_b$ is set to be an arbitrary integer.}:
\begin{equation}
\label{EQ:SRCM:hatn2}
\hat{n}_b = \frac{\ln({z_b/\ell})}{\ln{(1-{p_{b,M}/\ell})}}.
\end{equation}

\begin{theorem}\label{Thm_Accuracy}
The final node cardinality estimate, $\hat{n}_{b}$, of each type $b \in \{1, \ldots, T\}$, obtained using any one of the proposed schemes, viz., HSRC-M1 and HSRC-M2, \emph{equals, and hence is as accurate as,} the estimate that would have been obtained if phases 1 and 2 of the SRC$_M$  protocol were separately executed $T$ times at each stop $m \in \{1, \ldots, M\}$, to estimate the number of active nodes of each type $b \in \{1, \ldots, T\}$.
\end{theorem}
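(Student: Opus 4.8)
\emph{Proof plan.} The plan is to prove the equality \emph{pathwise}, by coupling the random coins tossed by the active nodes under HSRC-M1 (resp.\ HSRC-M2) with the coins they would toss in $T$ separate executions of the SRC$_M$ protocol, so that on every realization both procedures end up with identical bit patterns, hence identical estimates; equality in distribution, and with it the claimed transfer of accuracy, is then immediate. First I would dispose of phase~1: as noted in \Sectref{EstScheme_SRCM}, phase~1 of HSRC-M1/HSRC-M2 is by construction nothing but $T$ independent runs of phase~1 of SRC$_M$, one per type. So, coupling the per-trial transmission coins of each type-$b$ node across the two worlds, the bit vectors $Y^{(m)}_w(\cdot)$, the rough estimates $\tilde n^{(m)}_b$, and hence the quantities $p_{b,m}$ and $I_{b,m}$ of \eqref{EQ:SRCM:p_bm}--\eqref{EQ:SRCM:I_bm} are the \emph{same} random variables as those the $b$-th SRC$_M$ run computes, for every $b$ and $m$. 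This reduces the theorem to a statement about phase~2 alone.

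Next I would couple phase~2 node by node. In phase~2 of the $b$-th SRC$_M$ run, a node picks one of $\ell$ slots uniformly at random and then decides, with a fixed probability governed by $I_{b,m}$, whether to transmit; in 3-SP (resp.\ step~1 of 2-SP), a type-$b$ node picks one of the $\ell$ \emph{blocks} uniformly at random and transmits its symbol combination according to the identical probabilistic rule, and this single choice also dictates its behaviour in the subsequent steps. I couple, for each active node, its (chosen block, transmit bit) in the HSRC world with its (chosen slot, transmit bit) in the SRC$_M$ world. Under this coupling the \emph{ground-truth} indicator $\chi^{(m)}(b,i):=\mathbf{1}[\text{some active type-}b\text{ node selected }i\text{ and chose to transmit}]$ is the same random variable in both worlds, for all $b,i,m$; and in the $T$-fold SRC$_M$ world this indicator \emph{is} the phase-2 bit pattern of type $b$.

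The crux — the step I expect to be the main obstacle — is the \emph{exact-recovery} claim: that the bit pattern $X^{(m)}(b,i)$ the MBS actually writes down at the end of 3-SP (resp.\ 2-SP) equals $\chi^{(m)}(b,i)$, i.e.\ that the multi-step symbol-combination machinery loses no information to ambiguity. For 3-SP I would check that the per-block case analysis of \Sectref{Sec:HSRC-M1} is exhaustive: any step-1 outcome of a block other than $(C,\dots,C)$ already reveals the set of transmitting types (a single non-collision slot pins down the $\mathscr{T}_1$ count, and then every collision slot forces the corresponding type's bit to $1$), while the outcome $(C,\dots,C)$ is resolved by step~2, which isolates the $\mathscr{T}_1$ transmissions (aided by BP$_1$), and step~3, which isolates the $\mathscr{T}_2,\dots,\mathscr{T}_T$ transmissions (aided by BP$_2$); in every case the resulting $X^{(m)}(b,i)$ matches $\chi^{(m)}(b,i)$. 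For 2-SP I would argue by induction on $T$: $T=2,3$ coincide with 3-SP (base case) and $T=4,5,6$ are the schemes cited from \cite{kadam2017fast,kadam2020fast}; for $T\ge 4$, a block whose step-1 outcome pins down the transmitting types is decoded directly, whereas a block with outcome $(C,\dots,C)$ triggers in step~2 the 2-SP schemes applied to two sub-groups of sizes $\{T/2,T/2\}$ (even $T$) or $\{(T+1)/2,(T-1)/2\}$ (odd $T$), which recover the transmitting types within each sub-group exactly by the induction hypothesis. The real content here is verifying that the symbol combinations of Figs.~\ref{Sym_Combo3} and~\ref{Sym_Combo}, together with the broadcast packets, render ``the set of types having a transmitter in block $i$'' uniquely decodable from the observed slot outcomes — a finite, purely combinatorial verification, with the recursion in 2-SP the only delicate part.

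Finally I would close the loop. OR-ing over stops, $X(b,i)=\bigvee_{m=1}^{M}X^{(m)}(b,i)=\bigvee_{m=1}^{M}\chi^{(m)}(b,i)$, which is precisely the bit pattern the $b$-th SRC$_M$ run holds at its last stop; hence its number of zeros $z_b$ agrees across the two worlds. Since $\hat n_b$ is computed by \eqref{EQ:SRCM:hatn2}, which is \eqref{EQ:SRCM:hatn} with $p_M$ replaced by $p_{b,M}$, and $p_{b,M}$ equals the $b$-th SRC$_M$ run's $p_M$ by the phase-1 step, $\hat n_b$ is, on every sample path, exactly the estimate produced by the $b$-th separate execution of SRC$_M$. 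Equality in distribution follows, so any guarantee of the form $P(|\hat n_b-n_b|\le\epsilon n_b)\ge 1-\delta$ transfers verbatim. (In the degenerate event $z_b=0$, where the estimate is an arbitrary integer, one adopts the same convention in both worlds.)
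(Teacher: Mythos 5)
Your proposal is correct and follows essentially the same route as the paper: the paper's (much terser) proof rests on exactly the observation you make central, namely that HSRC-M1 and HSRC-M2 produce the very same bit patterns $X^{(m)}(b,i)$ that $T$ separate executions of SRC$_M$ would produce, so the estimates in \eqref{EQ:SRCM:hatn2} coincide pathwise. Your coupling formalism and the exact-recovery case analysis simply spell out the details that the paper delegates to Sections~\ref{Sec:HSRC-M1} and~\ref{Sec:HSRC-M2}.
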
 
\begin{proof}
Theorem~\ref{Thm_Accuracy} follows from the fact that, in both the proposed schemes, viz., HSRC-M1 and HSRC-M2, we  compute the same bit pattern $X^{(m)}(b,i)$, $b \in \{1, \ldots, T\}$, $i \in \{1, \ldots, \ell\}$, $m \in \{1, \ldots, M\}$, but using fewer time slots, as computed by the SRC$_M$ protocol if it is separately executed $T$ times (see Sections~\ref{Sec:HSRC-M1} and~\ref{Sec:HSRC-M2}).
\end{proof}

\section{Performance Analysis}\label{Analysis}  
In the following two subsections, the expected number of time slots required by HSRC-M1  to execute and the expected energy consumption of a node under HSRC-M1 are mathematically analyzed.

\subsection{Expected Number of Slots Required in  HSRC-M1}\label{p2_3stage}
Recall from Section~\ref{EstScheme_SRCM} that the number of slots required in phase 1 of HSRC-M1, at stop $m \in \{1, \ldots, M\}$, is:
\begin{equation} \label{eq:slots_phase1}
  \Delta^{(m)}_1 = tWT.  
\end{equation}
Now we compute the expected number of slots required in phase 2. 

At a given stop $m \in \{1, \ldots, M\}$, the number of slots required in step 1 is $(T-1)\ell$ (see Section~\ref{Sec:HSRC-M1}). Let $K^{(m)}_T$ (respectively, $(T-1)R^{(m)}_T$) be the number of slots required in step 2 (respectively, step 3). Let ${S}_{h,1}^{(m)}$ (respectively, ${S}_{h,2}^{(m)}, \ldots, {S}_{h,T-1}^{(m)}$), $h \in \{1, \ldots, \ell\}$, represent the result (collision, success, or empty slot) of the first (respectively, second, \ldots, $(T-1)^{th}$) slot of block $B_h^{(m)}$ of step 1. Also, let {I}$_{\upsilon}$ denote the indicator random variable corresponding to event $\upsilon$, i.e., {I}$_{\upsilon}$ is 1 if $\upsilon$ occurs, else it is 0.
 
From Section~\ref{Sec:HSRC-M1}, it is easy to see that  $K^{(m)}_T = \Sigma^{\ell}_{{h} = 1} {I}_{\{{S}_{h,1}^{(m)} = C, \ldots, {S}_{h,T-1}^{(m)} = C\} }$, where $C$ denotes collision. So:
\begin{equation}
E(K^{(m)}_T) = \sum^{\ell}_{{h} = 1} {P}({S}_{h,1}^{(m)} = C, \ldots, {S}_{h,(T-1)}^{(m)} = C).
\label{eq1}
\end{equation}

{\color{black}Under the following conditions, collisions occur in all $(T-1)$ slots of block $B_h^{(m)}$, $h \in \{1, \ldots, \ell\}$, $m \in \{1, \ldots, M\}$: 
\begin{itemize}[label={}]
    \item $E1$: At least two nodes of $\mathscr{T}_1$ transmit in block $B_h^{(m)}$.
    \item $E2$: Exactly one node of $\mathscr{T}_1$ and at least one node each of $\mathscr{T}_2, \ldots, \mathscr{T}_T$ transmit in block $B_h^{(m)}$.
    \item $E3$: At least two nodes each of $\mathscr{T}_2, \ldots, \mathscr{T}_T$ and none of $\mathscr{T}_1$ transmit in block $B_h^{(m)}$.
    \end{itemize}
    
    Let $Q^{(m)}_1(h)$, $Q^{(m)}_2(h)$, and $Q^{(m)}_3(h)$ denote the probabilities of the events $E1$, $E2$, and $E3$, respectively.} Since the probability of selecting  a block $B_h^{(m)}$ by a node of a given $\mathscr{T}_b$ is the same for all the blocks $B_h^{(m)}$ irrespective of $h$, we can write: $Q^{(m)}_{j}(h) = Q^{(m)}_{j}$, $j \in\{1,2, 3\}, \ h \in \{1, \ldots, \ell\}$.  
Hence:
\begin{equation}
{P}({S}_{h,1}^{(m)} = C, \ldots, {S}_{h,(T-1)}^{(m)} = C) = Q^{(m)}_{1}+Q^{(m)}_{2}+Q^{(m)}_{3}.
\label{equ2}
\end{equation}
Also\footnote{{Recall from Section~\ref{nwmodel_SRCM} that $\bar{n}^{(m)}_b$ is the number of active nodes present in the coverage range of the MBS at stop $m$.}}:
\begin{subequations}
\begin{align} 
Q^{(m)}_1 & = 1 - u\left(\bar{n}^{(m)}_1\right) - v\left(\bar{n}^{(m)}_1\right),
\label{eqQ1}\\
Q^{(m)}_2  & = v\left(\bar{n}^{(m)}_1\right) \prod_{b=2}^{T}\left(1 -  u\left(\bar{n}^{(m)}_b\right)\right),
\label{eqQ2}\\
Q^{(m)}_3  & = u\left(\bar{n}^{(m)}_1\right) \prod_{b=2}^{T}\left(1 - u\left(\bar{n}^{(m)}_b\right) - v\left(\bar{n}^{(m)}_b\right)\right),
\label{eqQ3}
\end{align}
\end{subequations}
where $u(\bar{n}^{(m)}_b), b \in \{1, \ldots, T\}$, is the probability that none of the nodes out of the $\bar{n}^{(m)}_b$ nodes of $\mathscr{T}_b$ select a given block and $v{(\bar{n}^{(m)}_b)}$ is the probability that exactly one node out of the $\bar{n}^{(m)}_b$ nodes of $\mathscr{T}_b$ selects a given block. 
So: 
\begin{subequations}
\begin{align} 
u\left(\bar{n}^{(m)}_b\right) & =  \left(1 - \frac{2^{-I_{b,m}}}\ell \right)^{\bar{n}^{(m)}_b},
\label{equnb}\\
v\left(\bar{n}^{(m)}_b\right) & = {\bar{n}^{(m)}_b} \left(\frac{2^{-I_{b,m}}}\ell\right) \left( 1 - \frac{2^{-I_{b,m}}}\ell \right)^{{\bar{n}^{(m)}_b} -1},
\label{eqvnb}
\end{align} 
\end{subequations}
where $I_{b,m}  =  \argmin_{j \in \{1, 2, 3, .....\}} |2^{-j} - p_{b,m} |$ (see~\eqref{EQ:SRCM:I_bm}) and $p_{b,m}  = \min{(1, 1.6 \ell/\tilde{n}^{(m)}_{b})}$ (see~\eqref{EQ:SRCM:p_bm}). 
By~\eqref{eq1} and~\eqref{equ2}:
\begin{equation}
\label{EQ:EK}
E(K^{(m)}_T) = \ell (Q^{(m)}_1 + Q^{(m)}_2+ Q^{(m)}_3). 
\end{equation}
Also: 
\begin{equation}
\label{EQ:ER}
E(R^{(m)}_T) = \ell Q^{(m)}_1,  
\end{equation}
since in step 3, only those nodes of $\mathscr{T}_2, \ldots, \mathscr{T}_T$ transmit, for which collisions occurred  in all the slots of the corresponding blocks of step 1 due to two or more $\mathscr{T}_1$ nodes transmitting (see Section~\ref{Sec:HSRC-M1}). So the expected  number of slots required in phase 2 of HSRC-M1 at stop $m$ is:
\begin{equation}
\label{EQ:3-SP}
    \Delta^{(m)}_2 = (T-1) \ell + E(K^{(m)}_T) + (T-1) E(R^{(m)}_T)+E(Z^{(m)}_{BP}),
\end{equation}
where $Z^{(m)}_{BP}$ is the number of slots required by the broadcast packets BP$^{(m)}_1$ and BP$^{(m)}_2$, $m \in \{1, \ldots, M\}$ (see Fig.~\ref{Est_Window}). It is easy to see that:
\begin{equation}\label{EQ:EZ}
    E(Z^{(m)}_{BP}) = \left\lceil{\frac{\ell}{S_W}}\right\rceil + E \left( \left\lceil{\frac{K^{(m)}_T}{S_W}}\right\rceil \right),
\end{equation}
where $S_W$ denotes the slot width in bits. 
By \eqref{eq:slots_phase1} and \eqref{EQ:3-SP}, the total expected number of slots required under HSRC-M1 is:
\begin{subequations}
\begin{align}
\label{eq:slots_total}
    \Delta &= \sum_{m=1}^M \Delta^{(m)} = \left(\sum_{m=1}^M \left(\Delta^{(m)}_1 + \Delta^{(m)}_2\right)\right)\\
    &= \sum_{m=1}^M  \left( tWT + (T-1) (\ell + E(R^{(m)}_T)) \right.\nonumber\\
    &\left.\quad \quad \quad+ E(K^{(m)}_T) + E(Z^{(m)}_{BP})\right), 
\end{align}
\end{subequations}
where $\Delta^{(m)} = \Delta_1^{(m)} + \Delta_2^{(m)}$ is the number of slots required at stop $m$ and $E(K^{(m)}_T)$, $E(R^{(m)}_T)$, and $E(Z^{(m)}_{BP})$ are given by~\eqref{EQ:EK},~\eqref{EQ:ER}, and~\eqref{EQ:EZ}, respectively. 

\subsection{Expected Energy Consumption of a Node under  HSRC-M1}\label{subsec_energy}
Let $\omega_b$ be a node of $\mathscr{T}_b, ~b\in \{1, \ldots, T\}$. Let us assume that it is active with probability $\xi_b^{(m)} \in [0,1]$ and inactive with probability $1-\xi_b^{(m)}$ when the MBS is at stop $m, ~m\in \{1, \ldots, M\}$. Let $\psi_b^{(m)}$ be $1$ when the node $\omega_b$ is in the range of the MBS when it is at stop $m$ and $0$ otherwise.  Let $\bar{\gamma}_b$ and $\hat{\gamma}_b$ be the energy spent per slot by a node of $\mathscr{T}_b, \, b\in \{1, \ldots, T\}$,  in the idle state and  reception state, respectively. Also, let $\gamma^{\alpha}_1$ (respectively, $\gamma^{\beta}_b$) be the energy spent per slot by a node of $\mathscr{T}_1$ (respectively, $\mathscr{T}_b, b\in \{2, \ldots, T\}$) for transmitting the symbol $\alpha$ (respectively, symbol $\beta$) and $\gamma '$ be the energy spent by a node of any type for transmitting a signal in the transmission state in a slot in phase 1. Let $\Gamma^{(m)}_{b,\tau}(\Phi_i),\Gamma^{(m)}_{b,\iota}(\Phi_i),$ and $\Gamma^{(m)}_{b,\rho}(\Phi_i)$ be the expected energy consumption of a $\mathscr{T}_b, b\in \{1, \ldots, T\}$, node, in phase $i, i\in \{1,2\}$, in the transmission state, idle state, and reception state, respectively, at a given stop $m \in \{1, \ldots, M\}$.

\subsubsection{Expected Energy Consumption of a Node in Phase 1 at a Given Stop}\label{SubsubsecP1}
Recall from Section~\ref{EstScheme_SRCM} and Section~\ref{SubSec_SRCM_P1} that during phase 1, due to the separate execution of $W$ trials of phase 1 of the SRC$_M$ protocol, all types of nodes spend equal amounts of expected energy in the transmission state, idle state, and reception state at a given stop $m$; also, for any node of $\mathscr{T}_b, \, b\in \{1, \ldots, T\}$:
\begin{subequations}
\begin{align}
    \Gamma^{(m)}_{b,\tau}(\Phi_1) &= \xi_b^{(m)} \psi_b^{(m)} W \gamma ',\label{eq:T1P1Tra}\\
    \Gamma^{(m)}_{b,\iota}(\Phi_1) &= \left( \!(t-\!1) \xi_b^{(m)} \psi_b^{(m)} + t (1-\xi_b^{(m)} \psi_b^{(m)})\!\right)\! W \bar{\gamma}_b,\label{eq:T1P1Idl}\\
    \Gamma^{(m)}_{b,\rho}(\Phi_1) &= 0.\label{eq:T1P1Rec}
\end{align}
\end{subequations}

\subsubsection{Expected Energy Consumption of a Node in Phase 2 at a Given Stop} \label{SubsubsecP2}
Recall from Section~\ref{Sec:HSRC-M1} that a $\mathscr{T}_b, \, b \in \{1, \ldots, T\}$, node chooses a block out of $\ell$ blocks uniformly at random and transmits in that block with probability $2^{-I_{b,m}}/\ell$. Let 
\begin{equation}
    \pi_b^{(m)} = \xi_b^{(m)} \psi_b^{(m)} 2^{-I_{b,m}}, \, b \in \{1, \ldots, T\}, \, m \in \{1, \ldots, M\}.
\end{equation}
\paragraph{\texorpdfstring{$\mathscr{T}_1$}{} Node:}
Recall from Section~\ref{Sec:HSRC-M1} that a $\mathscr{T}_1$ node, $\omega_1$, participates in step 1 and step 2. In step 1, $\omega_1$ transmits the symbol $\alpha$ in all $(T-1)$ slots of the chosen block $h, \, h \in \{1, \ldots, \ell\}$. So the expected energy consumption in the transmission state at step 1 is: $\pi_1^{(m)}  (T-1) \gamma^{\alpha}_1.$ 
In step 2, node $\omega_1$ participates only if any of the following two events have occurred in step 1: (a) at least one node of $\mathscr{T}_1$, other than $\omega_1$, has transmitted in block $h$, (b) no node of $\mathscr{T}_1$ and at least one node each of $\mathscr{T}_b, \, b \in \{2, \ldots, T\}$, have transmitted. Let $\bar{Q}_1^{(m)}$ and $\bar{Q}_2^{(m)}$ be the probabilities of the events (a) and (b), respectively. The values of $\bar{Q}_1^{(m)}$ and $\bar{Q}_2^{(m)}$ are:
\begin{subequations}
\begin{align}
    \bar{Q}_1^{(m)} &= 1 - u\left(\bar{n}^{(m)}_1-1\right),\\
    \bar{Q}_2^{(m)} &= u\left(\bar{n}^{(m)}_1-1\right) \prod_{b=2}^T \left(1 - u\left(\bar{n}^{(m)}_b\right)\right).
\end{align}
\end{subequations}
So the expected energy consumption in the transmission state at step 2 is: $\pi_1^{(m)} (\bar{Q}_1^{(m)} + \bar{Q}_2^{(m)}) \gamma^{\alpha}_1.$ 
Hence, the expected energy consumption of node $\omega_1$ in the transmission state is:
\begin{subequations}
\begin{align}
  \Gamma^{(m)}_{1,\tau} (\Phi_2)  &= \pi_1^{(m)} (T-1) \gamma^{\alpha}_1 +  \pi_1^{(m)} (\bar{Q}_1^{(m)} + \bar{Q}_2^{(m)}) \gamma^{\alpha}_1,\\
    &= \left(T-1+ \bar{Q}_1^{(m)} + \bar{Q}_2^{(m)}\right) \pi_1^{(m)} \gamma^{\alpha}_1. \label{eq:T1P2Tra}
\end{align}
\end{subequations}
In the reception state, recall from Section~\ref{Sec:HSRC-M1} that node $\omega_1$ reads all the slots of BP$_1$ in phase 2. So the expected energy consumption in the reception state is:
\begin{equation}
    \Gamma^{(m)}_{1,\rho}(\Phi_2) = \pi_1^{(m)} \left\lceil{\frac{\ell}{S_W}}\right\rceil \hat{\gamma}_1. \label{eq:T1P2Rec}
\end{equation}
In the rest of the slots of phase 2, node $\omega_1$ is in the idle state. The expected energy consumption in this state is:
\begin{align}
    \!\!\!\!\!\!\!\!\!\!\!\!\!\Gamma^{(m)}_{1,\iota}(\Phi_2) &=  \bar{\gamma}_1\left[\left(1-\pi_1^{(m)}\right)\Delta^{(m)}_2 + \pi_1^{(m)} \times \right. \nonumber \\
    &\left.  
    \left\{\Delta^{(m)}_2 - \left(T-1 + \left\lceil{\frac{\ell}{S_W}}\right\rceil + \bar{Q}_1^{(m)} + \bar{Q}_2^{(m)}\right)\right\} \right]. \label{eq:T1P2Idl}
\end{align}
The total expected energy consumption of node $\omega_1$ is obtained by adding the quantities in~\eqref{eq:T1P1Tra}, \eqref{eq:T1P1Idl}, \eqref{eq:T1P1Rec}, \eqref{eq:T1P2Tra}, \eqref{eq:T1P2Rec}, and \eqref{eq:T1P2Idl}.

\paragraph{\texorpdfstring{$\mathscr{T}_b, \, b\in \{2, \ldots, T\}$}{}$,$ Node:}
Recall from Section~\ref{Sec:HSRC-M1} that a $\mathscr{T}_b$ node, $\omega_b$, participates in step 1 and step 3. In step 1, $\omega_b$ transmits the symbol $\beta$ in one slot of the chosen block $h, \, h \in \{1, \ldots, \ell\}$. So the expected energy consumption in the transmission state at step 1 is: $\pi_b^{(m)} \gamma^{\beta}_b$.
In step 3, node $\omega_b$ participates only if the following event has occurred in step 1: (c) at least two nodes of $\mathscr{T}_1$ have transmitted in block $h$.  The probability of the event (c) is $Q^{(m)}_1$ in~\eqref{eqQ1}. So the expected energy consumption in the transmission state at step 3 is: $Q^{(m)}_1 \pi_b^{(m)} \gamma^{\beta}_b$. Hence, the total expected energy consumption in the transmission state is:
\begin{equation}
  \Gamma^{(m)}_{b,\tau}(\Phi_2) =  \left(1+Q^{(m)}_1\right)\pi_b^{(m)} \gamma^{\beta}_b. \label{eq:TbP2Tra}
\end{equation}
Suppose $\delta_h \in \{0, 1\}, \, h \in \{1, \ldots, \ell\}$, denotes the value of the $h^{th}$ bit in BP$_1$ (see Fig.~\ref{Est_Window}). Recall from Section~\ref{Sec:HSRC-M1} that node $\omega_b$ reads BP$_1$ and reads its corresponding slot of BP$_2$ only if $\delta_h$ is 1. The events that lead to $\delta_h=1$ are: (c), (d) exactly one node from $\mathscr{T}_1$ transmits in block $h$ and at least one node from each of $\mathscr{T}_2,\ldots, \mathscr{T}_{b-1}, \mathscr{T}_{b+1},..., \mathscr{T}_T$ transmits in the same block $h$, and (e) at least two nodes each from $\mathscr{T}_2,\ldots, \mathscr{T}_{b-1}, \mathscr{T}_{b+1},..., \mathscr{T}_T$ transmit in block $h$, at least one node from $\mathscr{T}_b \setminus \omega_b$ transmits in block $h$ and no node from $\mathscr{T}_1$ transmits in block $h$. Let $\hat{Q}_1^{(m)}$ and $\hat{Q}_2^{(m)}$ be the probabilities of the events (d) and (e), respectively. The values of $\hat{Q}_1^{(m)}$ and $\hat{Q}_2^{(m)}$ are:
\begin{subequations}
\begin{align}
    \hat{Q}_1^{(m)} &=  v\left(\bar{n}^{(m)}_1\right) \prod_{\substack{{i=2}\\i \neq b}}^T \left(1 - u\left(\bar{n}^{(m)}_i\right)\right),\\
    \hat{Q}_2^{(m)} &= u\left(\bar{n}^{(m)}_1\right)\left(1-u\left(\bar{n}^{(m)}_b-1\right)\right) \nonumber \\
    &~~\prod_{\substack{{i=2}\\i \neq b}}^T \left(1 - u\left(\bar{n}^{(m)}_i\right) - v\left(\bar{n}^{(m)}_i \right)\right).
\end{align}
\end{subequations}
Hence, the expected energy consumption in the reception state is:
\begin{equation}
  \Gamma^{(m)}_{b,\rho}(\Phi_2) =  \left(\left\lceil{\frac{\ell}{S_W}}\right\rceil+Q^{(m)}_1+\hat{Q}_1^{(m)}+\hat{Q}_2^{(m)}\right)\pi_b^{(m)} \hat{\gamma}_b. \label{eq:TbP2Rec}
\end{equation}
In the rest of the slots of phase 2, node $\omega_b$ is in the idle state. The expected energy consumption in this state is:
\begin{align}
    \!\!\!\!\!\!\!\!\!\!\!\!\!\!\Gamma^{(m)}_{b,\iota}(\Phi_2) &=  \bar{\gamma}_b\left[\left(1-\pi_b^{(m)}\right)\Delta^{(m)}_2 + \pi_b^{(m)} \times \right. \nonumber \\
    &\left.  
    \left\{\Delta^{(m)}_2 \!-\! \left(\!1\!+ \!\!\left\lceil{\frac{\ell}{S_W}}\right\rceil\!\! + 2Q^{(m)}_1+\hat{Q}_1^{(m)}+\hat{Q}_2^{(m)}\!\right)\!\!\right\} \right] . \label{eq:TbP2Idl}
\end{align}
The total expected energy consumption of node $\omega_b$, $b\in \{2, \ldots, T\}$, is obtained by adding the quantities in~\eqref{eq:T1P1Tra},~\eqref{eq:T1P1Idl},~\eqref{eq:T1P1Rec},~\eqref{eq:TbP2Tra},~\eqref{eq:TbP2Rec}, and~\eqref{eq:TbP2Idl}.

\subsubsection{Total Expected Energy Consumption}
Now we add the expected energy consumption by a node of $\mathscr{T}_b, \, b \in \{1, \ldots, T\}$, computed in Sections~\ref{SubsubsecP1} and~\ref{SubsubsecP2} to obtain the total expected energy consumption by a node at stop $m$:
\begin{equation}
    \Gamma^{(m)}_{b} = \sum_{i=1}^2 \left( \Gamma^{(m)}_{b,\tau}(\Phi_i) + \Gamma^{(m)}_{b,\iota}(\Phi_i) + \Gamma^{(m)}_{b,\rho}(\Phi_i) \right).
\end{equation}

\section{Optimal MBS Tour (OMT) Problem}\label{OMT}
Let $\mathcal{G} = (\mathcal{M},\mathcal{E})$ be a connected directed graph where $\mathcal{M}$ denotes the given set of possible stops for the MBS in a given region, $\mathcal{E}$ denotes the set of links (edges) between these stops, and $|\mathcal{M}| = \tilde{M}+1, |\mathcal{E}| = E$. Also, let $\mathcal{N}$ be the set of nodes in the region, and $|\mathcal{N}| = N$. Let $c_{u,v}$ and $\eta_{k,m}$ denote the cost of travelling from stop $u$ to stop $v$ via the link from $u$ to $v$ and the energy spent by node $k$ during the estimation process when it is in the range of the MBS at stop $m$, respectively. Note that $c_{u,v}$ is defined to be $\infty$ if there is no link from stop $u$ to stop $v$.  Let $X_{k,m}\in~\{0,1\}$ be $1$ if node $k$ is in the coverage range of the MBS when it is at stop $m$, and zero else. That is, $X_{k,m}$, $\forall k \in \mathcal{N}, \forall m \in \mathcal{M}$, can be defined as:
\begin{align}
    X_{k,m} := 
    \begin{cases}
    1, &\mbox{if node $k$ is in the coverage range}\\ &\mbox{of the MBS when it is at stop $m$,} \\
    0, &\mbox{otherwise}.
    \end{cases}
\end{align}
Let $\mathcal{R}$ denote the set of real numbers.

The MBS starts from a charging station, viz.,  stop $0 \in \mathcal{M}$\footnote{We assume that there are no nodes in the coverage range of the MBS when it is at the charging station. So $X_{k,0} = 0, \forall k \in \mathcal{N}$.}, and visits multiple stops for the node cardinality estimation process. Once it is completed, the MBS travels back to stop $0$. {\color{black}The MBS is assumed to visit a given stop (other than $0$) at most once during its tour.} 
Let $\hat{\mathcal{M}} \subseteq \mathcal{M}$ be the set of stops that belong to a given tour of the MBS and $|\hat{\mathcal{M}}| = \hat{M} + 1 \le \tilde{M} +1$. For $i \in \{0,1, \ldots, \hat{M}+1\}$, let $m_i \in \hat{\mathcal{M}}$ denote the $i^{th}$ stop that the MBS visits during its tour, where we define $m_0=m_{\hat{M}+1}=0$. A tour can be represented as $(m_i,m_{i+1})$, $i \in \{0,1, \ldots, \hat{M}\}$. 

{\color{black}Our optimization problem, which is called the Optimal MBS Tour (OMT) problem,} is to find a tour for the MBS, such that every node in $\mathcal{N}$ is in the union of the coverage ranges of the MBS when it is at the stops in $\hat{\mathcal{M}}$, with the minimum travel cost and the limit $\overline{\eta}$ on the total energy consumed by the nodes. Mathematically, it can be written as follows:
\begin{subequations}
\begin{align} 
\min_{\substack{\hat{M},~m_i, \\ i \in \{1,\ldots, \hat{M}\}}} & \sum_{i=0}^{\hat{M}} c_{m_i,m_{i+1}}   \label{eq:obj_fun}\\
\textrm{s.t.} \quad & \sum_{k \in \mathcal{N}} \sum_{i=1}^{\hat{M}}\eta_{k,m_i} X_{k,m_i} \le \overline{\eta}, \label{eq:energy_constraint}\\ 
\quad & \sum_{i=1}^{\hat{M}}X_{k,m_i} \ge 1, \quad \forall k \in \mathcal{N}, \label{eq:coverage_constraint} \\
\quad & m_0 = m_{\hat{M}+1} = 0, \label{eq:BeginEndStops_constraint} \\
\quad & m_i \neq m_j, \,  \forall i,j \in \{1, \ldots, \hat{M}\}, \, i \neq j, \label{eq:UniqueStop_constraint}\\
\quad & \hat{M} \in \{1, \ldots, \tilde{M}\}, \label{eq:Mstar_constraint}\\
\quad & X_{k,m} \in\{0, 1\},~ \overline{\eta}, \eta_{k,m}, c_{u,v} \in \mathcal{R},  \nonumber\\
\quad & \quad \forall m, u, v \in \{0,1, \ldots, \tilde{M}\}, \forall k \in \mathcal{N}. \nonumber
\end{align}
\end{subequations}
The objective of the OMT problem, which is stated in~\eqref{eq:obj_fun}, is to minimize the total travel cost of the MBS. The constraint in~\eqref{eq:energy_constraint} says that the total energy spent by the nodes during the estimation process should be upper bounded by $\overline{\eta}$. The constraint in~\eqref{eq:coverage_constraint} ensures that the MBS covers all nodes at least once during its tour. 
The constraints in~\eqref{eq:BeginEndStops_constraint} and~\eqref{eq:UniqueStop_constraint} ensure that the MBS's first and last stops are the charging station, and that the MBS visits each other stop in $\hat{\mathcal{M}}$ only once, respectively.  
The set of possible values that $\hat{M}$ can take is specified in~\eqref{eq:Mstar_constraint}. 

\begin{theorem} \label{Thm_OMT}
The OMT problem is NP-complete.
\end{theorem}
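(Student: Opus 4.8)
The plan is to verify the two standard ingredients of NP-completeness for the \emph{decision version} of the OMT problem. Since OMT as formulated in \eqref{eq:obj_fun}--\eqref{eq:Mstar_constraint} is an optimization problem, I would first state the associated decision problem: given the same data together with a budget $C$, decide whether there exists a tour satisfying constraints \eqref{eq:energy_constraint}--\eqref{eq:Mstar_constraint} whose travel cost \eqref{eq:obj_fun} is at most $C$. All claims below refer to this decision problem.

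\emph{Membership in NP.} A certificate is an ordered list of stops $m_0=0,m_1,\dots,m_{\hat M},m_{\hat M+1}=0$. Given such a list one checks in polynomial time that $m_1,\dots,m_{\hat M}$ are pairwise distinct (constraint \eqref{eq:UniqueStop_constraint}) and lie in $\{1,\dots,\tilde M\}$ (constraint \eqref{eq:Mstar_constraint}), that each consecutive pair $(m_i,m_{i+1})$ is an edge of $\mathcal{G}$ (so $c_{m_i,m_{i+1}}<\infty$) and that $\sum_{i=0}^{\hat M} c_{m_i,m_{i+1}}\le C$, that $\sum_{k\in\mathcal N}\sum_{i=1}^{\hat M}\eta_{k,m_i}X_{k,m_i}\le\overline\eta$ (constraint \eqref{eq:energy_constraint}), and that every node $k\in\mathcal N$ satisfies $\sum_{i=1}^{\hat M}X_{k,m_i}\ge 1$ (constraint \eqref{eq:coverage_constraint}). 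Each test is polynomial in the instance size, so OMT is in NP.

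\emph{NP-hardness.} I would reduce from the directed Hamiltonian cycle problem, which is NP-complete. Given a directed graph $G'=(V',E')$ with $V'=\{v_0,v_1,\dots,v_n\}$, construct an OMT instance by taking $\mathcal M:=V'$ with the charging station $0:=v_0$, $\mathcal E:=E'$, unit cost $c_{u,v}:=1$ for every $(u,v)\in E'$ (and $c_{u,v}:=\infty$ otherwise), node set $\mathcal N:=\{v_1,\dots,v_n\}$, coverage matrix $X_{v_j,v_j}:=1$ and $X_{v_j,v_i}:=0$ for $i\ne j$ (in particular $X_{k,0}=0$, consistent with the charging-station assumption), $\eta_{k,m}:=0$ for all $k,m$ with $\overline\eta:=0$ so that \eqref{eq:energy_constraint} is vacuous, and budget $C:=n+1$. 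Under this construction the coverage constraint \eqref{eq:coverage_constraint} forces every stop $v_1,\dots,v_n$ onto the tour, so by \eqref{eq:BeginEndStops_constraint}--\eqref{eq:UniqueStop_constraint} a feasible tour is exactly a directed cycle through $v_0$ visiting each of $v_1,\dots,v_n$ once, i.e.\ a Hamiltonian cycle of $G'$, and such a tour uses precisely $n+1$ edges, hence has cost $n+1$. Therefore a feasible tour of cost at most $C=n+1$ exists if and only if $G'$ has a Hamiltonian cycle. The transformation is clearly polynomial, which establishes NP-hardness; combined with membership in NP, the OMT problem is NP-complete.

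The proof is mostly routine; the two points that require care are (i) making the passage to the decision version explicit, since NP-completeness is a property of decision problems, and (ii) checking that the ``each node is covered by a unique stop'' gadget makes OMT degenerate \emph{exactly} to directed Hamiltonian cycle --- in particular that neutralizing the energy constraint via $\overline\eta=0$, respecting edge directions, and handling the charging-station vertex do not spoil the equivalence. An alternative is to reduce from (asymmetric) TSP or, to emphasize the covering aspect, from Set Cover with a suitably engineered cost metric; I expect the Hamiltonian-cycle reduction to be the cleanest.
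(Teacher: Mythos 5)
Your proof is correct. Both halves match the paper's structure: you pass to the decision version, verify membership in NP by polynomial-time certificate checking, and then force full coverage with the same gadget the paper uses --- give each stop a private node that no other stop covers, so that constraint \eqref{eq:coverage_constraint} together with \eqref{eq:UniqueStop_constraint} forces the tour to visit every stop exactly once --- while neutralizing the energy constraint. The only genuine difference is the source problem: the paper reduces from TSP, mapping the distances $d_{u,v}$ directly to costs $c_{u,v}$ and setting $L=L'$, so the argument is a cost-preserving bijection between TSP tours and feasible OMT tours; you reduce from directed Hamiltonian cycle with unit costs and budget $n+1$, so the cost bound degenerates into the statement that every leg of the tour is an actual edge of $G'$. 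Your variant is marginally cleaner in that it avoids carrying arbitrary distances and makes the equivalence purely combinatorial, and it also handles edge directions explicitly (the paper's graph is directed but its TSP instance is stated somewhat loosely on this point); the paper's variant has the minor advantage that TSP is the problem its tour structure most visibly resembles. Two small housekeeping points if you were to polish this: the paper requires $\mathcal{G}$ to be connected, which your construction handles implicitly by assigning $c_{u,v}=\infty$ to non-edges (equivalently, completing the graph), and it is worth saying so; and your choice $\overline\eta=0$ with $\eta_{k,m}=0$ satisfies \eqref{eq:energy_constraint} as $0\le 0$ rather than vacuously, which is what you mean but not quite what you wrote.
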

\begin{proof}
The proof is provided in Appendix~\ref{Apdx_Thm_OMT}.
\end{proof}

{\color{black}Theorem~\ref{Thm_OMT} shows that the OMT problem is NP-complete, and hence, a greedy heuristic algorithm is proposed to solve it. This algorithm is now described.}

We assume that in the graph $\mathcal{G}$, there is a link from every stop to every other stop. Let $\mathcal{Z}_{m_i}$ (respectively, $z_m$) be the set of nodes covered by the MBS until stop $m_i, \, i \in \{0,1, \ldots, \hat{M}\}$ (respectively, at stop $m$). Let $\overline{\eta}_{m_i}, \, i \in \{1, \ldots, \hat{M}\},$ denote the sum of $\overline{\eta}_{m_{i-1}}$ and the energy spent by the nodes of the set $z_{m_i}$ with $\overline{\eta}_{m_0} = 0$. Also, let $\mathcal{U}_i$ denote the set of stops that have not been visited by the MBS until stop $m_i, \, i \in \{0,1, \ldots, \hat{M}\}$. Since the primary objective of the OMT problem is to generate a low travel cost tour for the MBS, at each stop, the greedy algorithm attempts to choose the next stop with the lowest travel cost from the current stop. The MBS begins its tour at the charging station $m_0$.
First, it selects the stop $m$ from the set of stops $\mathcal{U}_0$ with the lowest value of $c_{m_0,m}$, travels there, and estimates the number of nodes. Following the estimation process, the MBS checks the number of nodes it has covered so far. If it has covered all of the nodes in the region, it returns to the charging station, and the tour found by the greedy algorithm will be the sequence of stops it has visited so far. If not, it checks the amount of energy spent by the nodes until that stop. If this exceeds the threshold $\overline{\eta}$, the MBS returns to the charging station. The tour found by the greedy algorithm in this scenario will be the sequence of stops visited so far, even though the MBS would not have covered all the nodes. If not, the MBS will repeat the above process starting from the current stop $m$ instead of stop $m_0$ and so on. The detailed algorithm is provided in Algorithm~\ref{alg:greedy}.

Note that the greedy algorithm would perform well when the bound $\overline{\eta}$ on the total energy consumed by the nodes is sufficiently large, since in that case it will be unlikely that the amount of energy spent by the nodes exceeds $\overline{\eta}$ even though the MBS has not covered all the nodes.

\begin{algorithm}
\caption{Greedy Algorithm}\label{alg:greedy}
\begin{algorithmic}
\State \textbf{Input:} $\mathcal{M}$, $\mathcal{N}$, $\overline{\eta}$, $c_{u,v},$ $u,v \in \mathcal{M}$, $u \neq v$, $\eta_{k,m}$, $k \in \mathcal{N}$, $m \in \mathcal{M}$
\State Initialize $i=0$, $m_0 = 0$, $\hat{\mathcal{M}} \!\gets \{0\}$, $\mathcal{Z}_{m_0} \!\!\gets \emptyset$, $\overline{\eta}_{m_0} \!\!\gets 0$, $\mathcal{U}_0\!\! \gets \mathcal{M} \backslash \{0\}$
\While {$\mathcal{Z}_{m_i} \neq \mathcal{N}$ AND $\overline{\eta}_{m_i} \le \overline{\eta}$}
\State Choose stop $m \in \mathcal{U}_i$ such that $c_{m_i,m}$ is smallest. 
\State $\mathcal{Z}_{m_{i+1}} \gets \mathcal{Z}_{m_i} \bigcup z_{m}$
\State $\overline{\eta}_{m_{i+1}} \gets \overline{\eta}_{m_i} + \sum_{j \in z_{m}}\eta_{j,m}$
\State $\mathcal{U}_{i+1} \gets \mathcal{U}_i \setminus \{m\}$
\State $m_{i+1} \gets m$; $\hat{\mathcal{M}} \gets \hat{\mathcal{M}} \bigcup \{m\}$
\State $i \gets i+1$
\EndWhile
\State $m_{i+1} \gets 0$
\State $\hat{M} \gets |\hat{\mathcal{M}}| - 1$
\State \textbf{Output:} $\hat{\mathcal{M}}$, $(m_i, m_{i+1}), i=0, \ldots, \hat{M}$.
\end{algorithmic}
\end{algorithm}

\section{Simulations}\label{Sec:Simu}
{\color{black}Simulations for the amounts of time required for estimation by the proposed protocols, viz., HSRC-M1 and HSRC-M2, as well as the $T$ repetitions of SRC$_M$ protocol, were performed using MATLAB version R2022a, with the Statistics and Machine Learning Toolbox as an add-on. The host machine had an Intel $i5$ $1.6$ GHz CPU with $12$ GB of DDR4 memory clocking at $2400$ MHz. All the source codes and plots are made available online in a  GitHub repository~\cite{gitlink}, so that they can be used for future work by the research community at large.}
\begin{figure}[tbp]
    \centering
\includegraphics[width=0.48\textwidth]{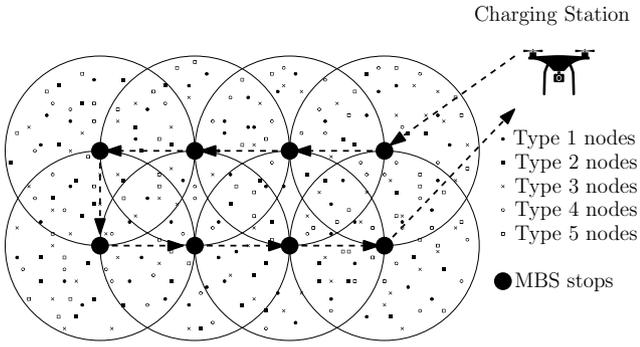}
    \caption{The figure shows $M = 8$ locations (stops) of a MBS and $T = 5$ types of nodes in a region. The coverage range of the MBS at a stop is the area  inside the circle with that stop as the centre.}
    \label{Network_Model_SRCM_rect}
\end{figure}

{\color{black}Simulations  of three scenarios are performed using the two network models shown in Fig.~\ref{Network_Model_SRCM} (Section~\ref{nwmodel_SRCM}) and Fig.~\ref{Network_Model_SRCM_rect}.} Network Model I (see Fig.~\ref{Network_Model_SRCM}) uses $M = 4$ and a square topology of dimensions $1 \times 1$ and coordinates of vertices $(0,0)$, $(1,0)$, $(1,1)$, and $(0,1)$, with the MBS making stops at the following locations: ($0.75, 0.75$), ($0.25, 0.75$), ($0.25, 0.25$), and ($0.75, 0.25$). Let $q_b$, $b \in \{1, \ldots, T\}$,  be the probability with which a given node of $\mathscr{T}_b$ is active.  In Scenario I, we work with Network Model I and assume that $|\mathscr{N}_1| = \ldots = |\mathscr{N}_T| = D$ (say) and $q_1 = \ldots = q_T =q$ (say).
For Scenario II, we again work with Network Model I and assume that for each $b \in \{1,\ldots, T\}$, $|\mathcal{N}_b|$ (respectively, $q_b$) is chosen randomly 
from a distribution with a mean $\overline{D}$ (respectively, $\overline{q}$). 
For Scenario III, we work with Network Model II, with $|\mathscr{N}_b|$ and $q_b$ chosen as in Scenario II. 
Network Model II (see Fig.~\ref{Network_Model_SRCM_rect}) uses $M=8$ and a rectangular topology of dimensions $2 \times 1$ and coordinates of vertices $(0,0)$, $(2,0)$, $(2,1)$, and $(0,1)$, with the MBS making stops at the following locations: ($1.75, 0.75$), ($1.25, 0.75$), ($0.75, 0.75$), ($0.25, 0.75$), ($0.25, 0.25$), ($0.75, 0.25$), ($1.25, 0.25$), and ($1.75, 0.25$). Nodes of all $T$ types are placed at locations that are chosen uniformly at random inside the square of Network Model I (Scenarios I and II) or rectangle of Network Model II (Scenario III). In all three scenarios, the coverage range of the MBS is $R = \pi /4$ units, and the locations of the MBS's stops are chosen in such a way that all the nodes of all $T$ types inside the square or rectangular topology are covered by the MBS at least once.


\begin{figure}[ht]
\begin{subfigure}[b]{0.5\linewidth}
\centering
\resizebox{1.0\columnwidth}{!}{\includegraphics{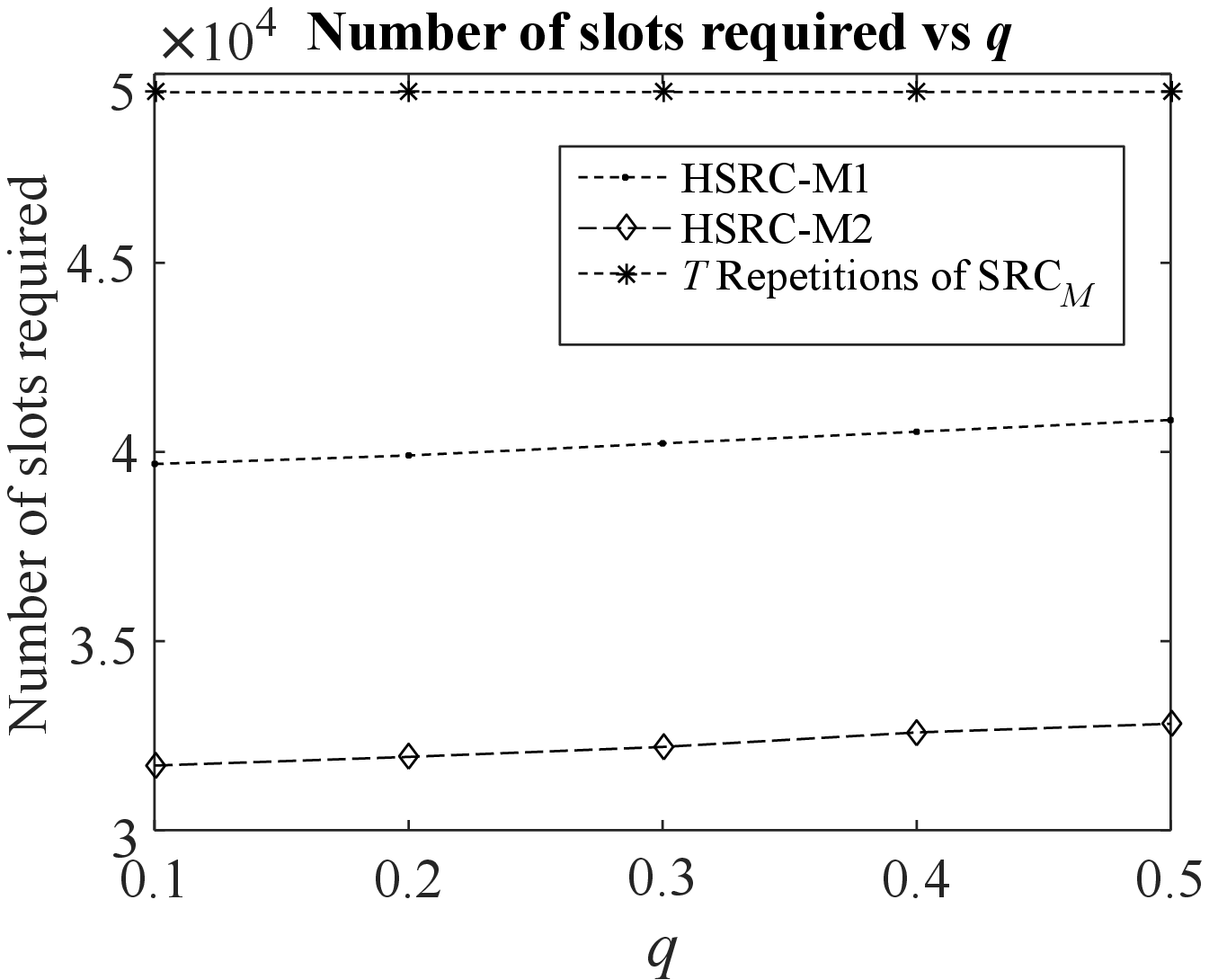}}
\caption{$D = 300$}
\label{Sim_q}
\end{subfigure}%
\begin{subfigure}[b]{0.5\linewidth}
\centering
\resizebox{1.0\columnwidth}{!}{\includegraphics{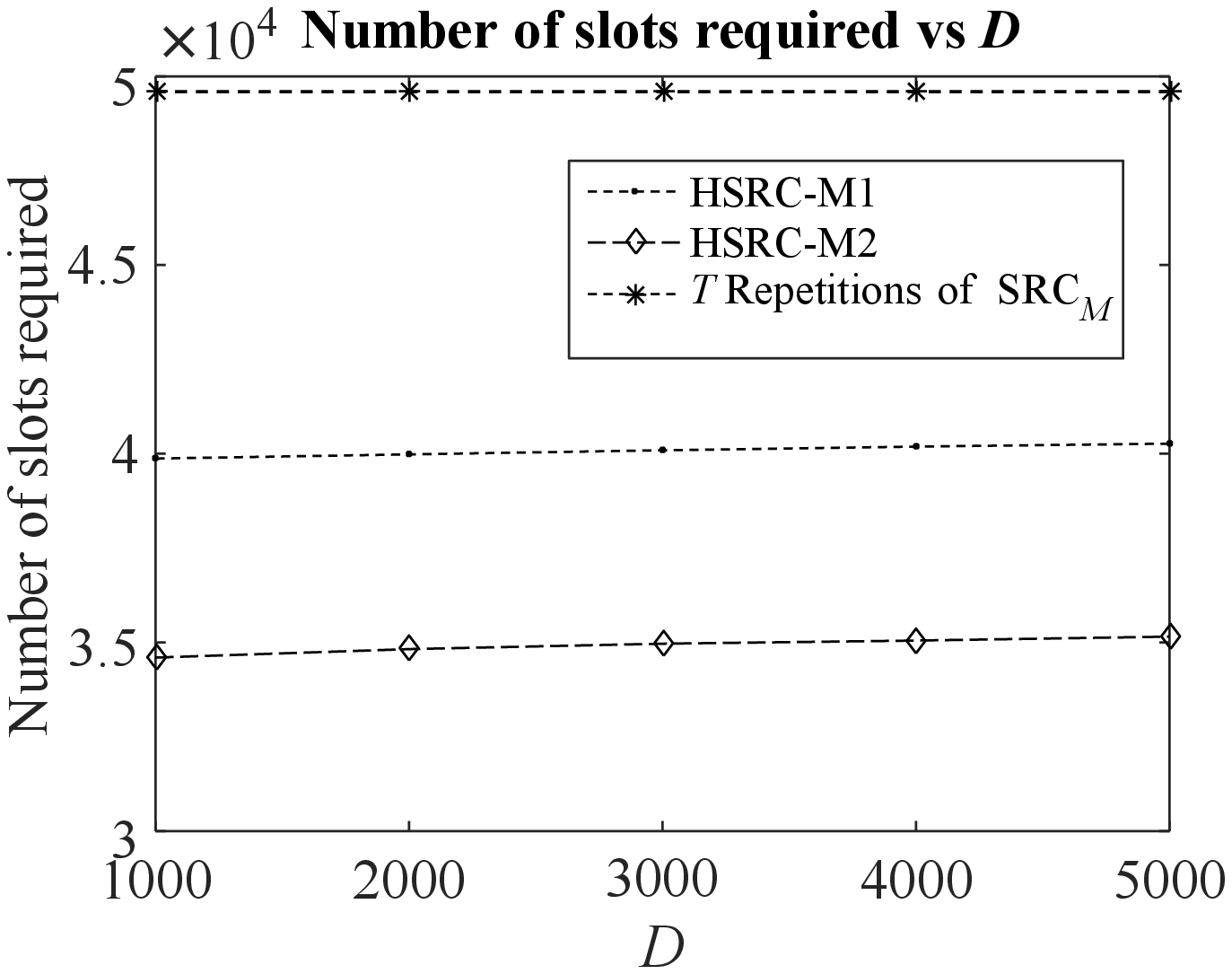}}
\caption{$q = 0.6$}
\label{Sim_D}
\end{subfigure}
\caption{These plots show the average numbers of slots required by various estimation schemes versus $q$ and $D$, respectively, for Scenario I. The following common parameter values are used in these plots: $T = 4$ and $\epsilon = 0.03$.}
\label{Sim6}
\end{figure}

\begin{figure}[ht]
\begin{subfigure}[b]{0.5\linewidth}
\centering
\resizebox{1.0\columnwidth}{!}{\includegraphics{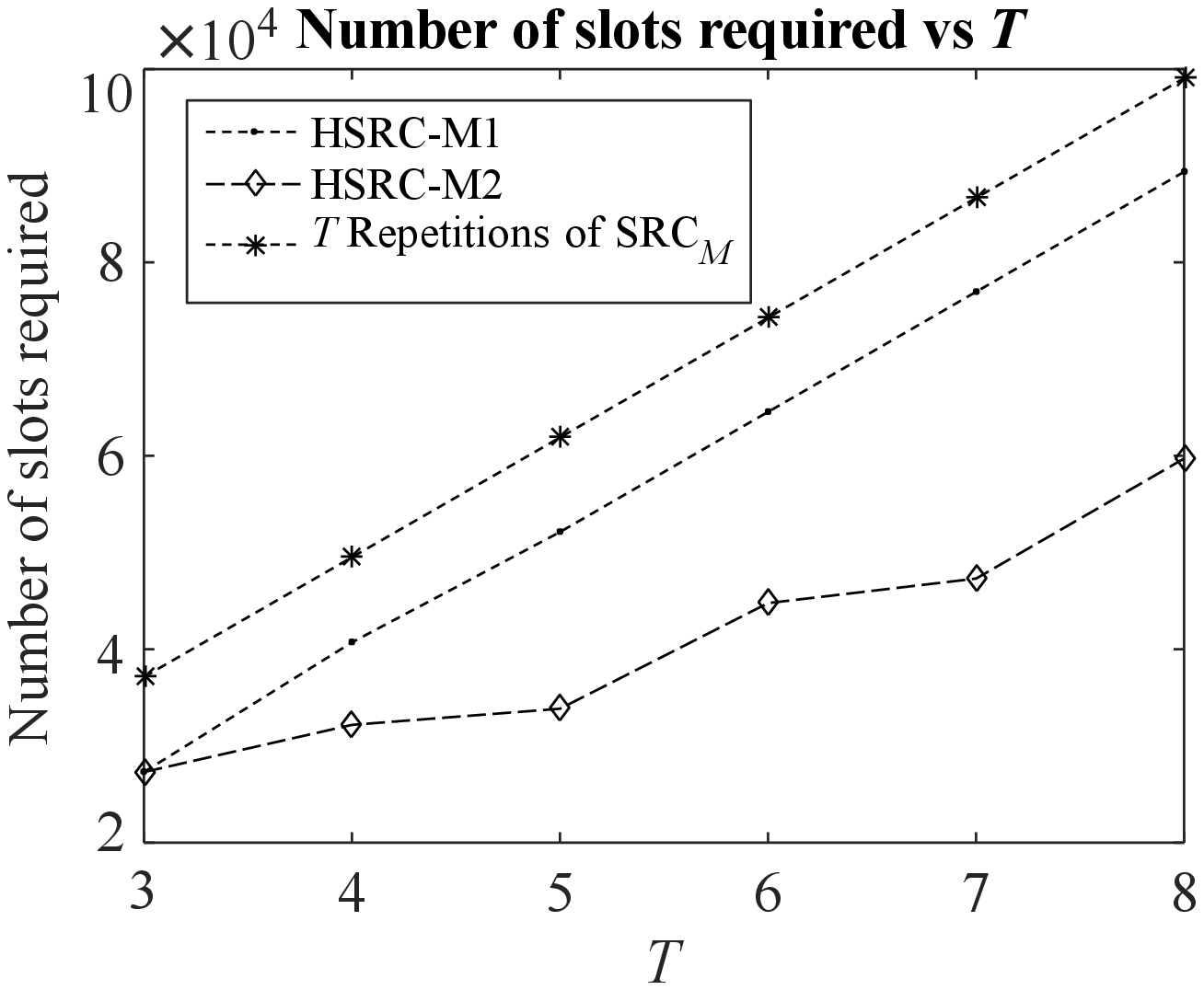}}
\caption{$\epsilon = 0.03$}
\label{Sim_T}
\end{subfigure}%
\begin{subfigure}[b]{0.5\linewidth}
\centering
\resizebox{1.0\columnwidth}{!}{\includegraphics{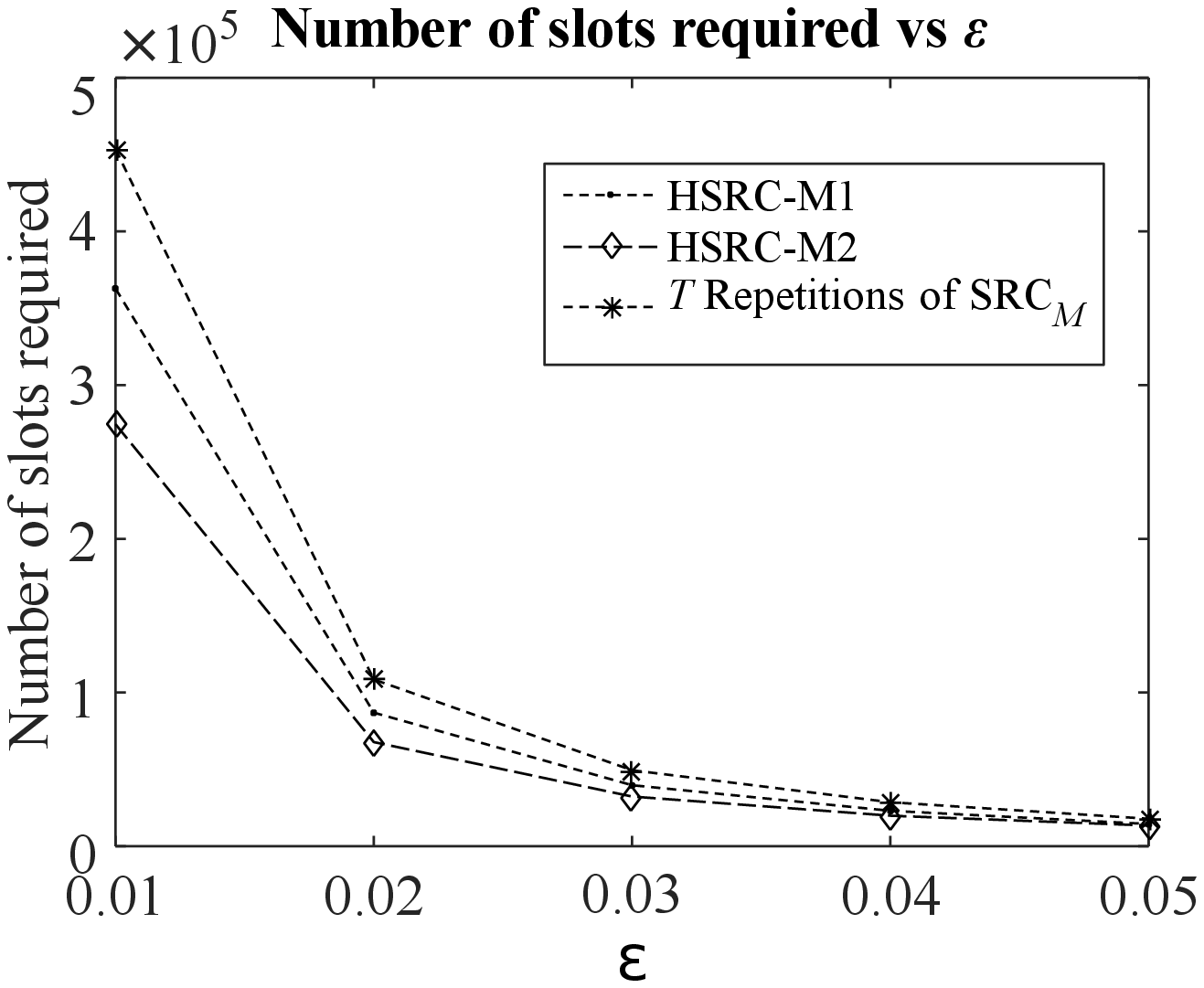}}
\caption{$T = 4$}
\label{Sim_eps}
\end{subfigure}
\caption{These plots show the average numbers of slots required by various estimation schemes versus $T$ and $\epsilon$, respectively, for Scenario I.  The following common parameter values are used in these plots: $D = 300$ and $q = 0.3$.}
\label{Sim7}
\end{figure} 

{\color{black}For all three scenarios, the performances of the proposed schemes, viz., HSRC-M1 and HSRC-M2, are compared with that of the scheme in which the  SRC$_M$ protocol proposed } in~\cite{zhou2014understanding} is separately executed $T$ times to estimate the active node cardinality of each node type. For a fair comparison, all the schemes are executed as many times as required to achieve the same accuracy level $\delta = 0.2$.



Figs.~\ref{Sim_q},~\ref{Sim_D},~\ref{Sim_T}, and~\ref{Sim_eps} show the number of slots required by various estimation schemes versus $q$, $D$, $T$, and $\epsilon$, respectively, for Scenario I.  Figs.~\ref{c2_q},~\ref{c2_D},~\ref{c2_T}, and~\ref{c2_eps} show the number of time slots required by various estimation schemes versus $\overline{q}$, $\overline{D}$, $T$, and $\epsilon$, respectively, for Scenario II.  Figs.~\ref{c4_q},~\ref{c4_D},~\ref{c4_T}, and~\ref{c4_eps} show the number of time slots required by various estimation schemes versus $\overline{q}$, $\overline{D}$, $T$, and $\epsilon$, respectively, for Scenario III. We observe that \emph{in all the plots, both the proposed schemes, viz., HSRC-M1 and HSRC-M2, significantly outperform the scheme in which SRC$_M$ is executed $T$ times}. Also, among the proposed schemes, HSRC-M2 outperforms HSRC-M1. Table~\ref{tab:results_table} shows the percentage improvement in the average number of time slots required by HSRC-M1 and HSRC-M2 relative to the scheme in which SRC$_M$ is executed $T$ times. For example, the first row in the table shows that in Fig.~\ref{Sim_q}, HSRC-M1 (respectively, HSRC-M2) outperforms the $T$ repetitions of SRC$_M$ protocol by 18.74\%  (respectively, 34.88\%) on average.

\begin{table}[H]
\centering
\caption{The table shows the percentage improvement in the average number of time slots required by HSRC-M1 and HSRC-M2 relative to the scheme in which SRC$_M$ is executed $T$ times for all three scenarios and all the considered parameters.}
\begin{tabular}{|c|c|c|c|}
\hline
& \multicolumn{1}{c|}{} & \multicolumn{2}{c|}{Improvement (in $\%$)} \\
\hline
& \multicolumn{1}{c|}{Evaluation} & \multicolumn{1}{c|}{HSRC-M1} & \multicolumn{1}{c|}{HSRC-M2} \\
\hline
\parbox[t]{2mm}{\multirow{4}{*}{\rotatebox[origin=c]{90}{\scriptsize  Scenario I}}} & No. of slots vs. $q$ (Fig.~\ref{Sim_q}) & 18.74 & 34.88\\
& No. of slots vs. $D$ (Fig.~\ref{Sim_D}) & 19.16 & 29.56\\
& No. of slots vs. $T$ (Fig.~\ref{Sim_T}) & 15.68  & 38.59\\
& No. of slots vs. $\epsilon$ (Fig.~\ref{Sim_eps}) & 19.74 & 33.47\\
\hline

\parbox[t]{2mm}{\multirow{4}{*}{\rotatebox[origin=c]{90}{\scriptsize  Scenario II}}} & No. of slots vs. $\overline{q}$ (Fig.~\ref{c2_q}) & 17.44 & 49.28\\
& No. of slots vs. $\overline{D}$ (Fig.~\ref{c2_D}) & 17.22 & 50.49\\
& No. of slots vs. $T$ (Fig.~\ref{c2_T}) & 17.71 & 42.91\\
& No. of slots vs. $\epsilon$ (Fig.~\ref{c2_eps}) & 17.21 & 50.86\\
\hline

\parbox[t]{2mm}{\multirow{4}{*}{\rotatebox[origin=c]{90}{\scriptsize Scenario III}}} & No. of slots vs. $\overline{q}$ (Fig.~\ref{c4_q}) & 17.43 & 50.55\\
& No. of slots vs. $\overline{D}$ (Fig.~\ref{c4_D}) & 17.23 & 50.68\\
& No. of slots vs. $T$ (Fig.~\ref{c4_T})  & 17.72 & 43.13\\
& No. of slots vs. $\epsilon$ (Fig.~\ref{c4_eps}) & 17.21 & 51.29\\
\hline
\end{tabular}
\label{tab:results_table}
\end{table}

%


\begin{figure}[ht]
\begin{subfigure}[b]{0.5\linewidth}
\centering
\resizebox{1.0\columnwidth}{!}
{\includegraphics{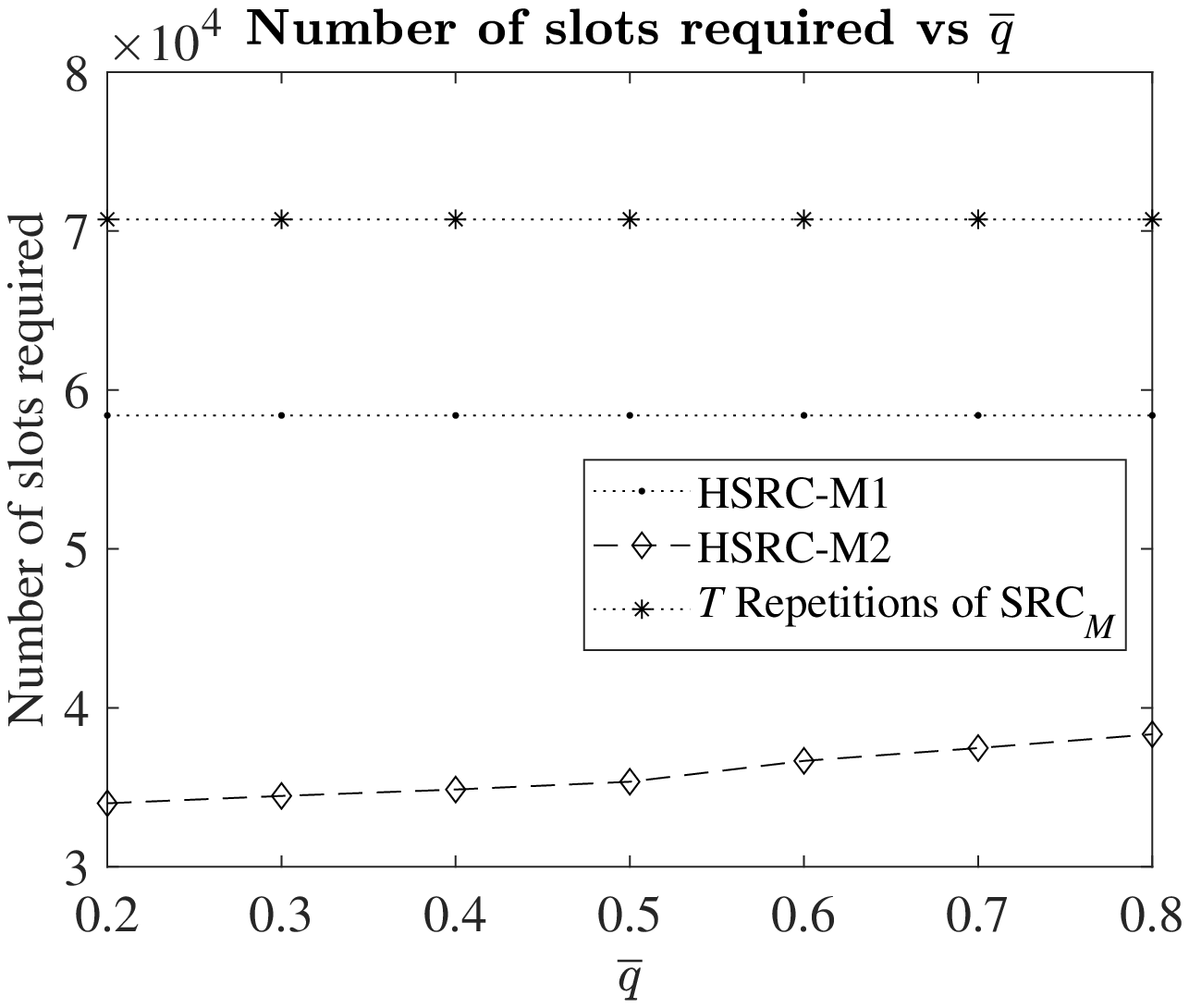}}
\caption{$\overline{D} = 1000$}
\label{c2_q}
\end{subfigure}%
\begin{subfigure}[b]{0.5\linewidth}
\centering
\resizebox{1.0\columnwidth}{!}{\includegraphics{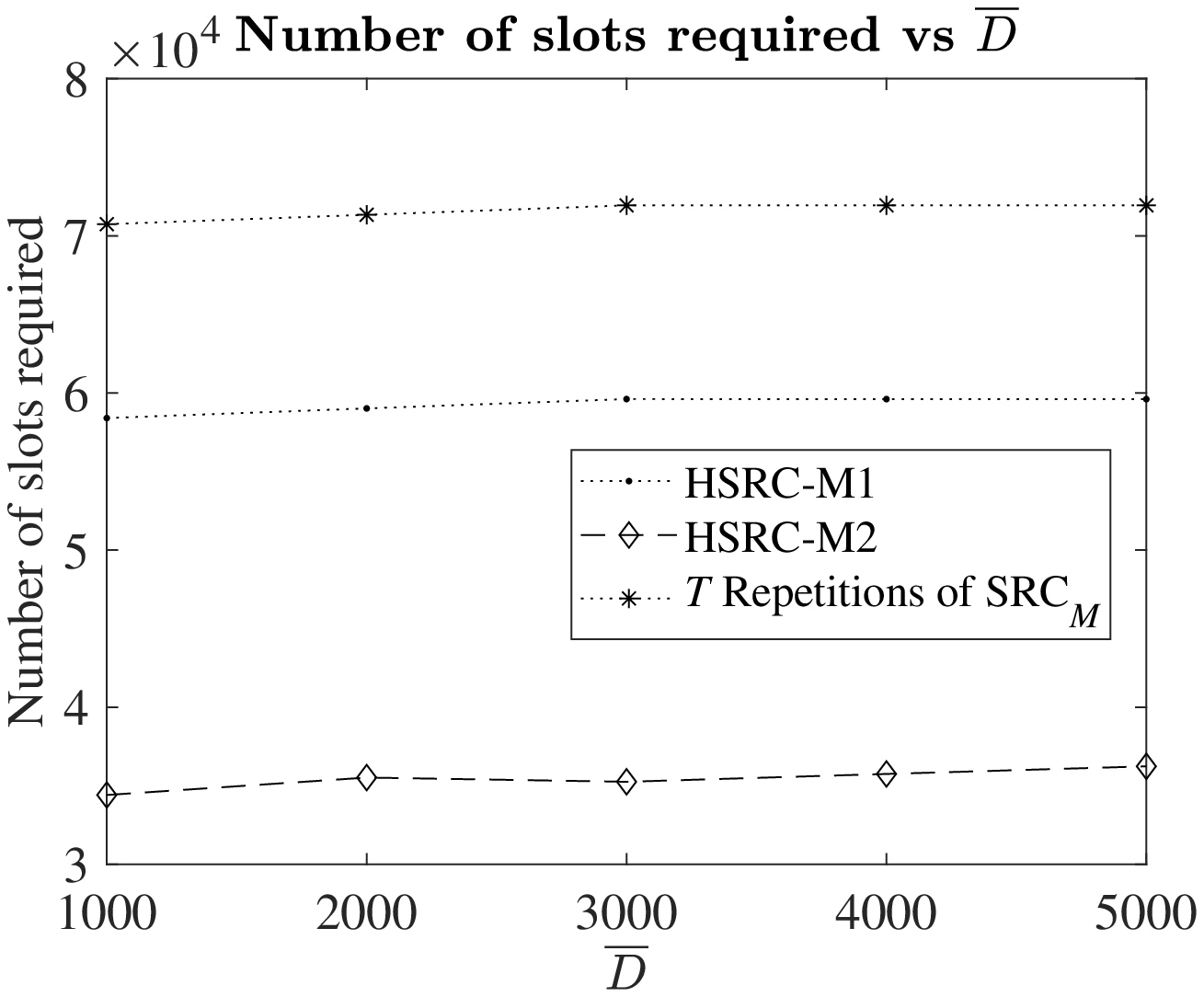}}
\caption{$\overline{q} = 0.3$}
\label{c2_D}
\end{subfigure}
\caption{These plots show the average numbers of slots required by various estimation schemes versus $\overline{q}$ and $\overline{D}$, respectively, for Scenario II. The following common parameter values are used in these plots: $T = 5$ and $\epsilon = 0.03$.}
\label{Sim8}
\end{figure}

\begin{figure}[ht]
\begin{subfigure}[b]{0.5\linewidth}
\centering
\resizebox{1.0\columnwidth}{!}{\includegraphics{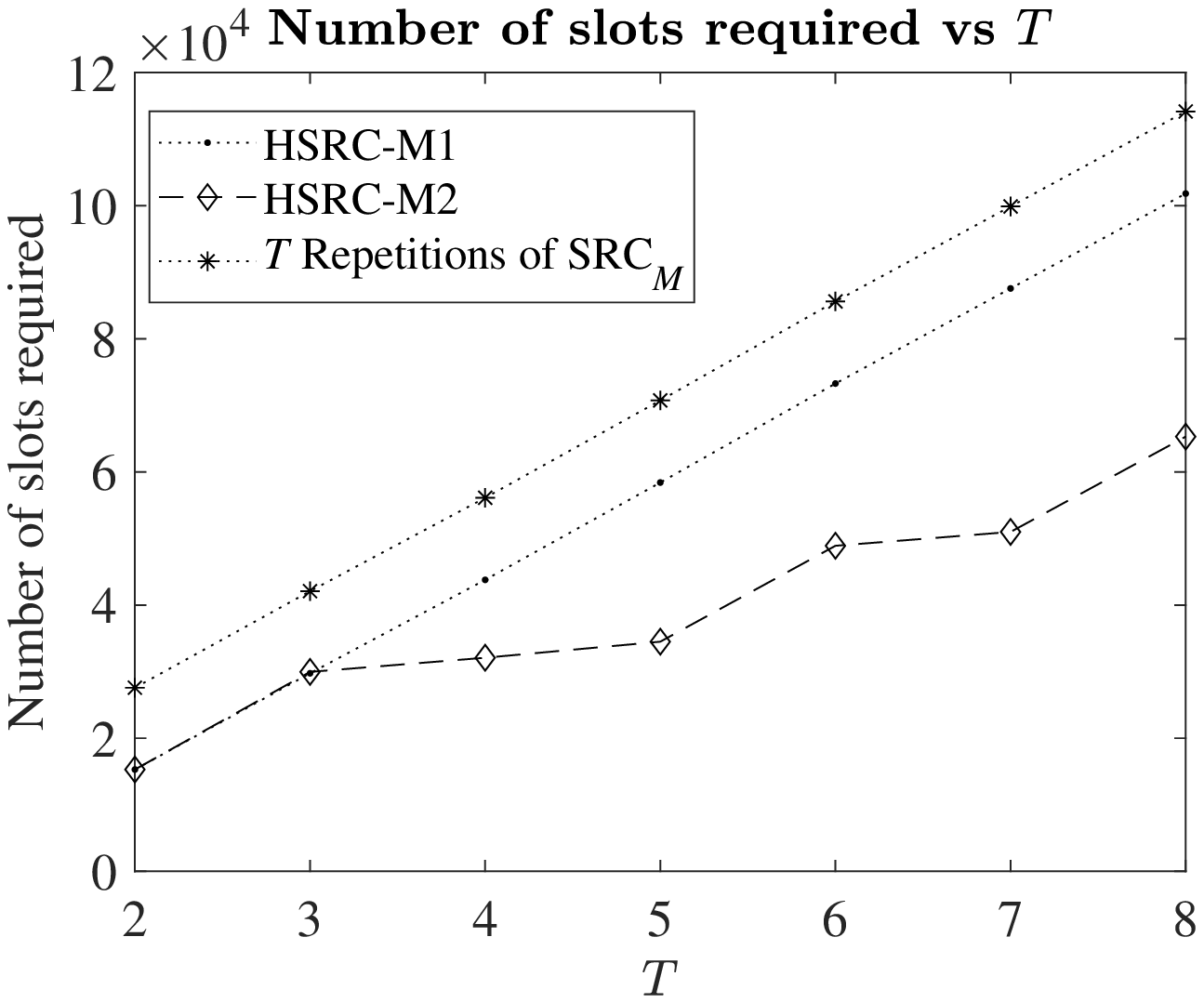}}
\caption{$\epsilon = 0.03$}
\label{c2_T}
\end{subfigure}%
\begin{subfigure}[b]{0.5\linewidth}
\centering
\resizebox{1.0\columnwidth}{!}{\includegraphics{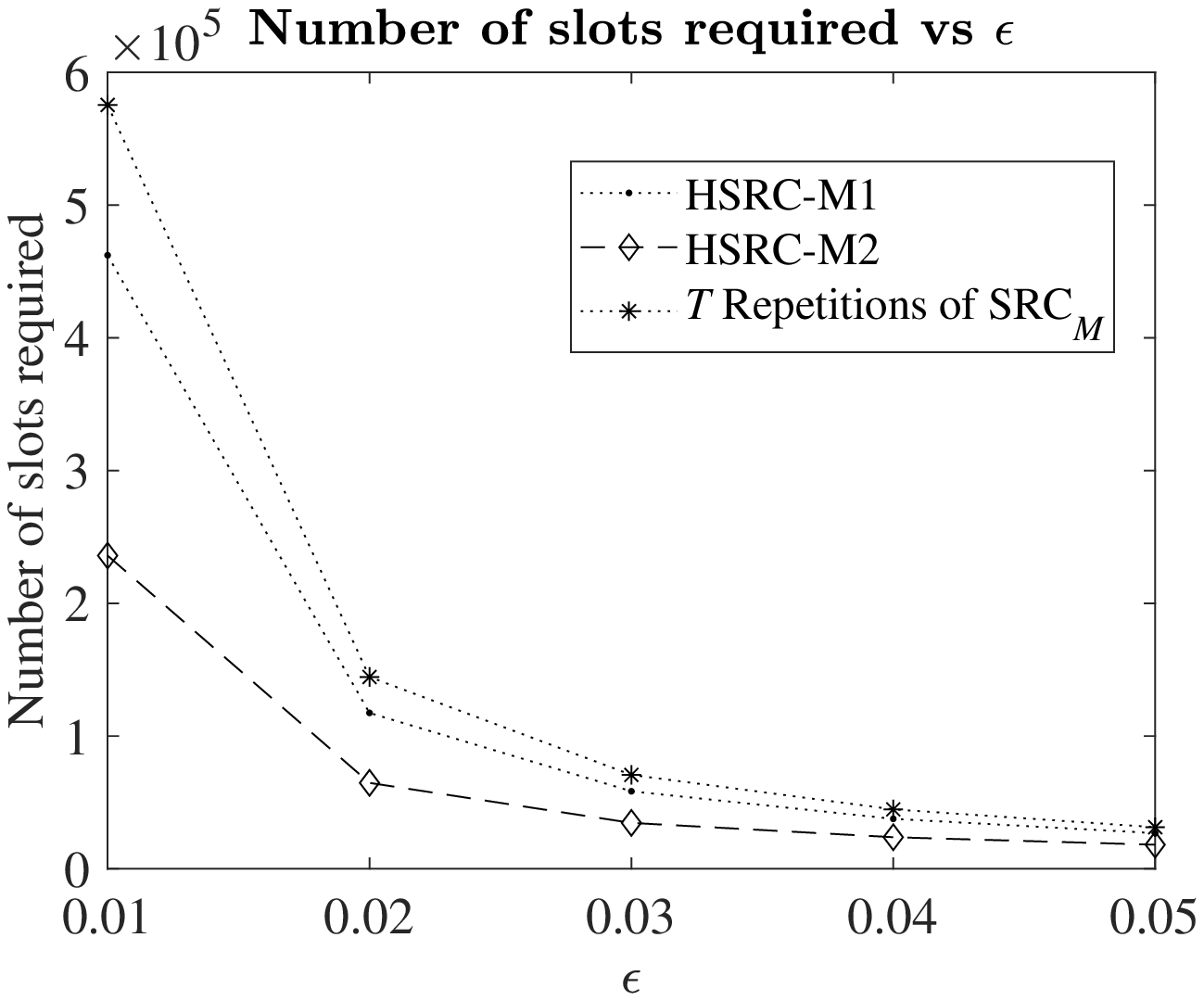}}
\caption{$T = 5$}
\label{c2_eps}
\end{subfigure}
\caption{These plots show the average numbers of slots required by various estimation schemes versus $T$ and $\epsilon$, respectively, for Scenario II. The following common parameter values are used in these plots: $\overline{D} = 1000$ and $\overline{q} = 0.3$.}
\label{Sim9}
\end{figure}




\begin{figure}[ht]
\begin{subfigure}[b]{0.5\linewidth}
\centering
\resizebox{1.0\columnwidth}{!}
{\includegraphics{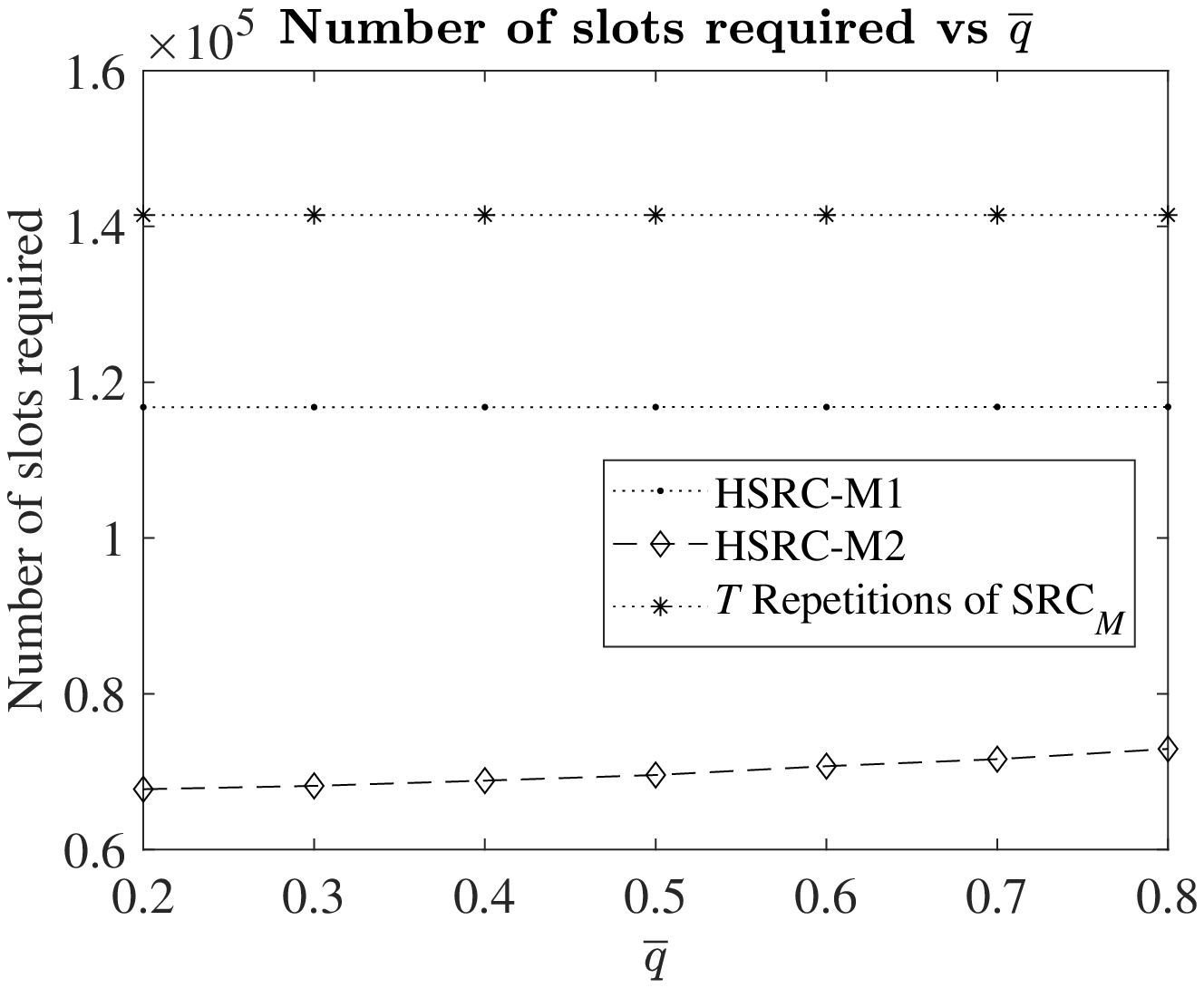}}
\caption{$\overline{D} = 1000$}
\label{c4_q}
\end{subfigure}%
\begin{subfigure}[b]{0.5\linewidth}
\centering
\resizebox{1.0\columnwidth}{!}{\includegraphics{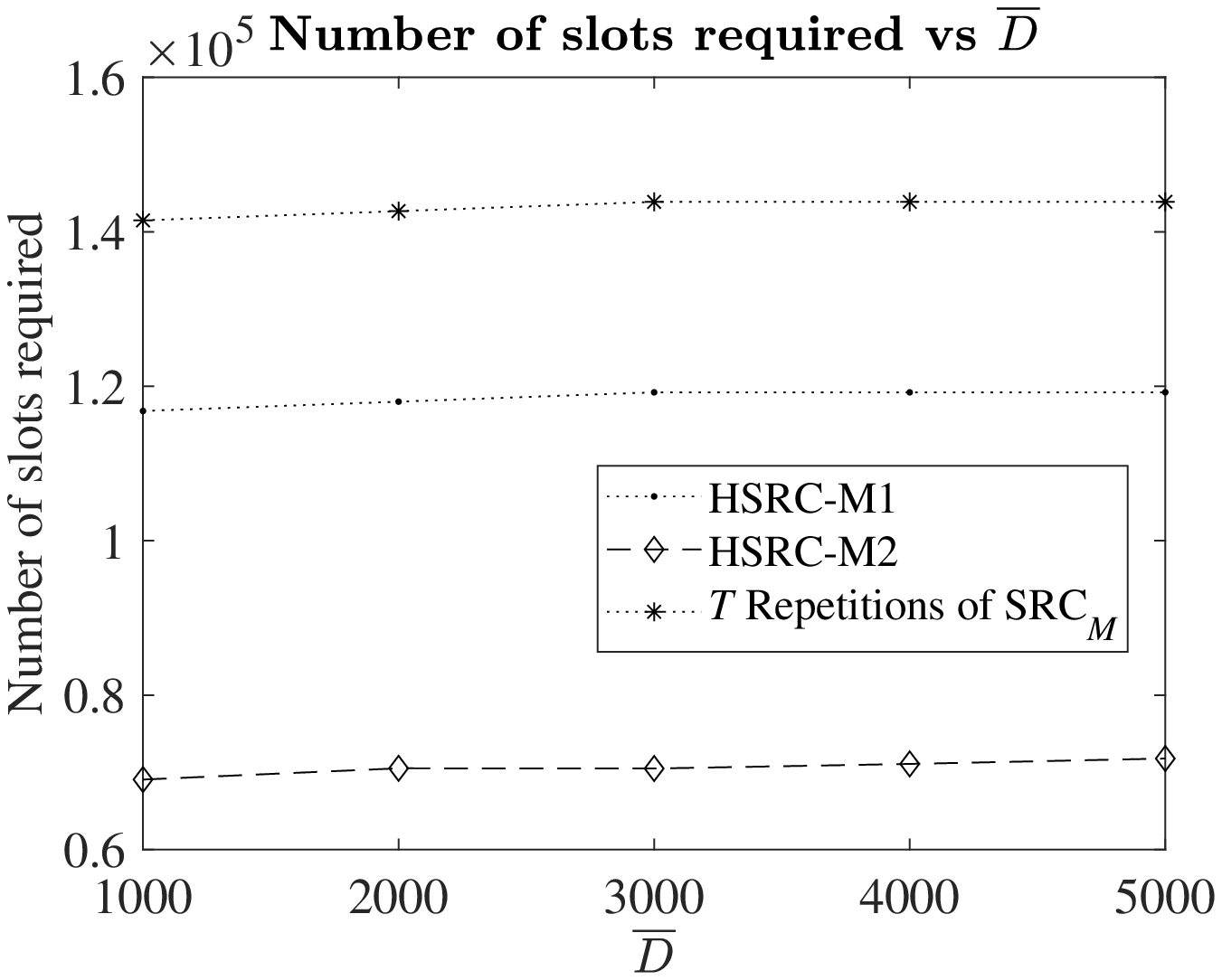}}
\caption{$\overline{q} = 0.3$}
\label{c4_D}
\end{subfigure}
\caption{These plots show the average numbers of slots required by various estimation schemes versus $\overline{q}$ and $\overline{D}$, respectively, for Scenario III. The following common parameter values are used in these plots: $T = 5$ and $\epsilon = 0.03$.}
\label{Sim10}
\end{figure}

\begin{figure}[ht]
\begin{subfigure}[b]{0.5\linewidth}
\centering
\resizebox{1.0\columnwidth}{!}{\includegraphics{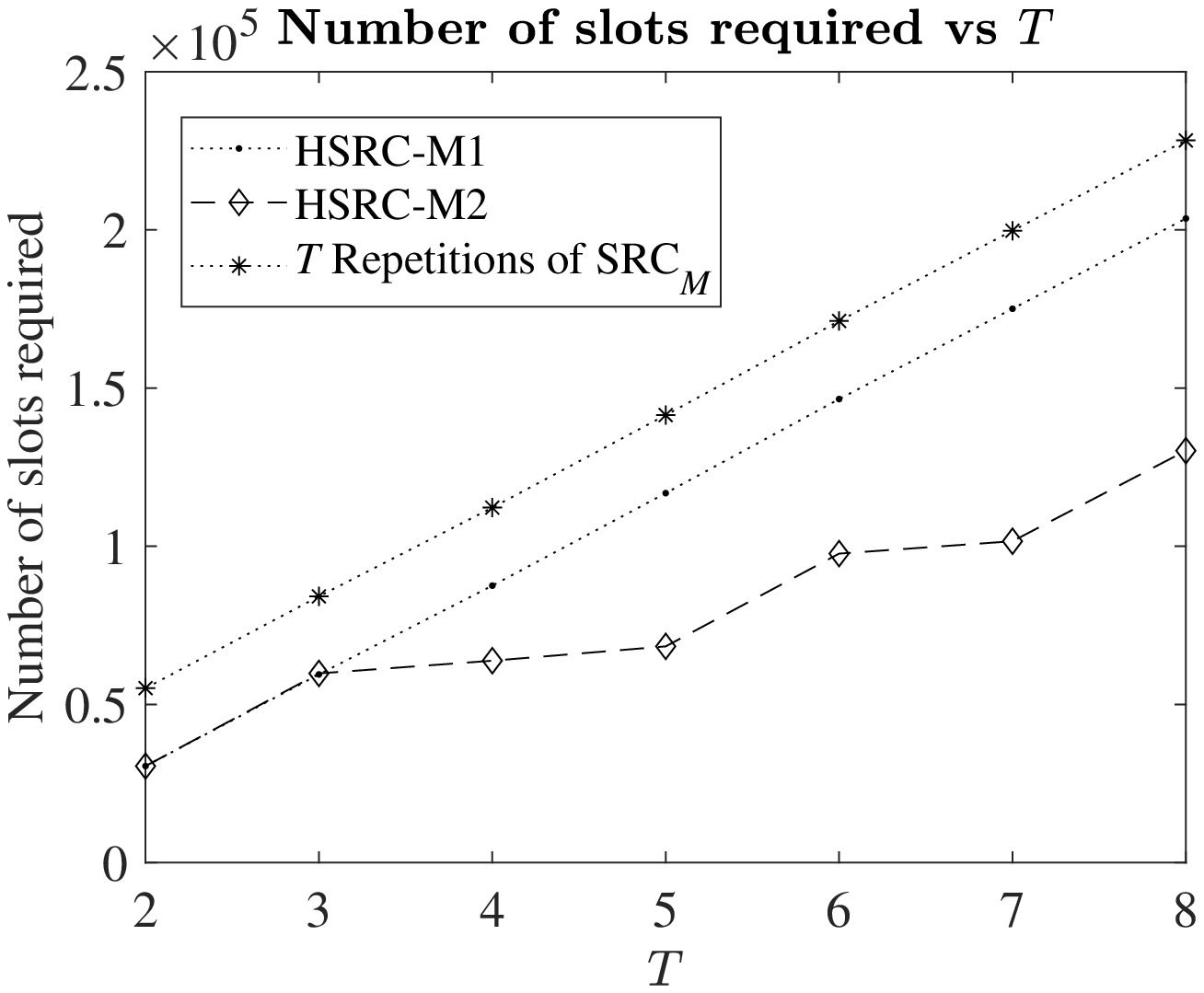}}
\caption{$\epsilon = 0.03$}
\label{c4_T}
\end{subfigure}%
\begin{subfigure}[b]{0.5\linewidth}
\centering
\resizebox{1.0\columnwidth}{!}{\includegraphics{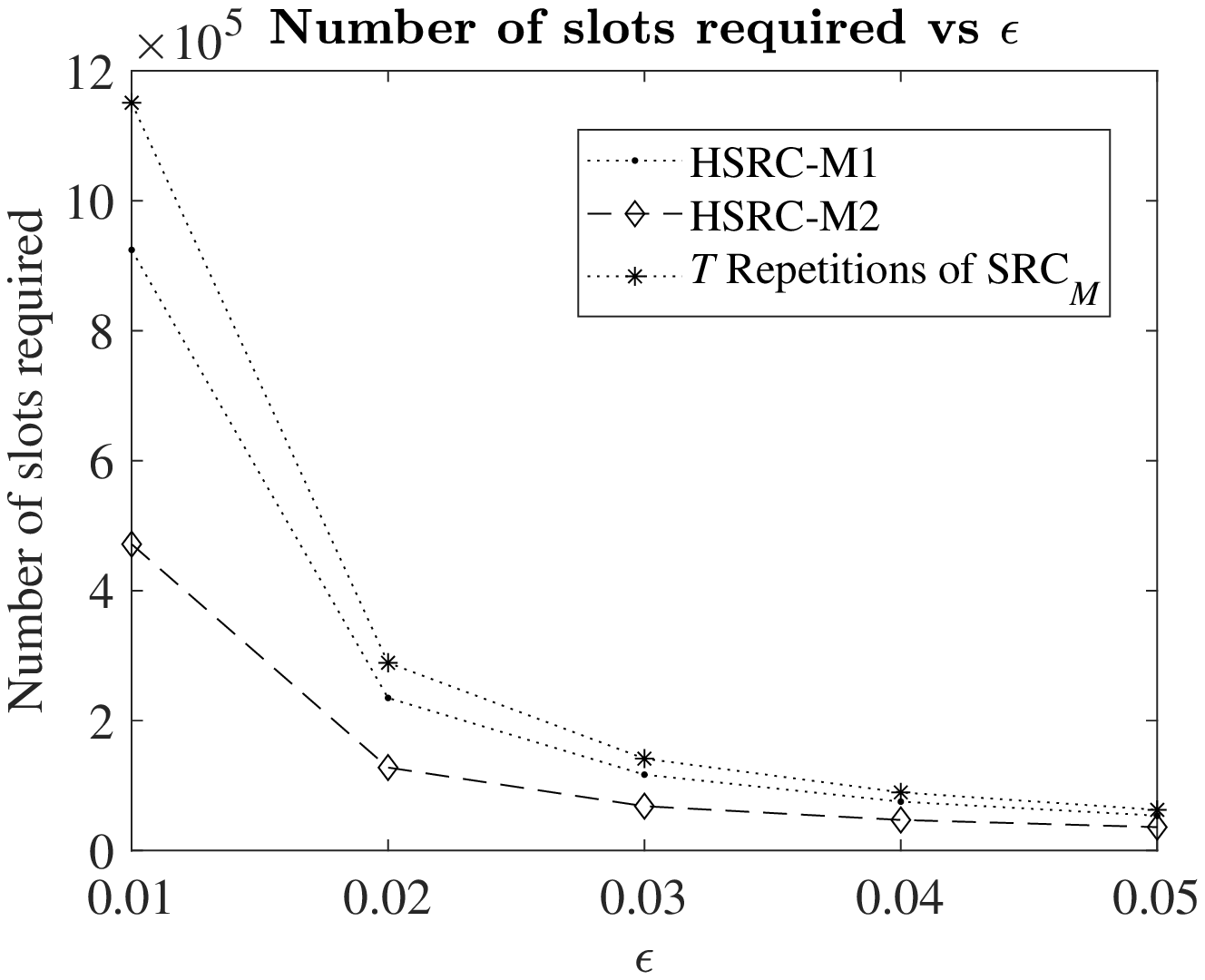}}
\caption{$T = 5$}
\label{c4_eps}
\end{subfigure}
\caption{These plots show the average numbers of slots required by various estimation schemes versus $T$ and $\epsilon$, respectively, for Scenario III. The following common parameter values are used in these plots: $\overline{D} = 1000$ and $\overline{q}= 0.3$.}
\label{Sim11}
\end{figure}




\begin{figure}[ht]
\begin{subfigure}[b]{0.5\linewidth}
\centering
\resizebox{1.0\columnwidth}{!}{\includegraphics{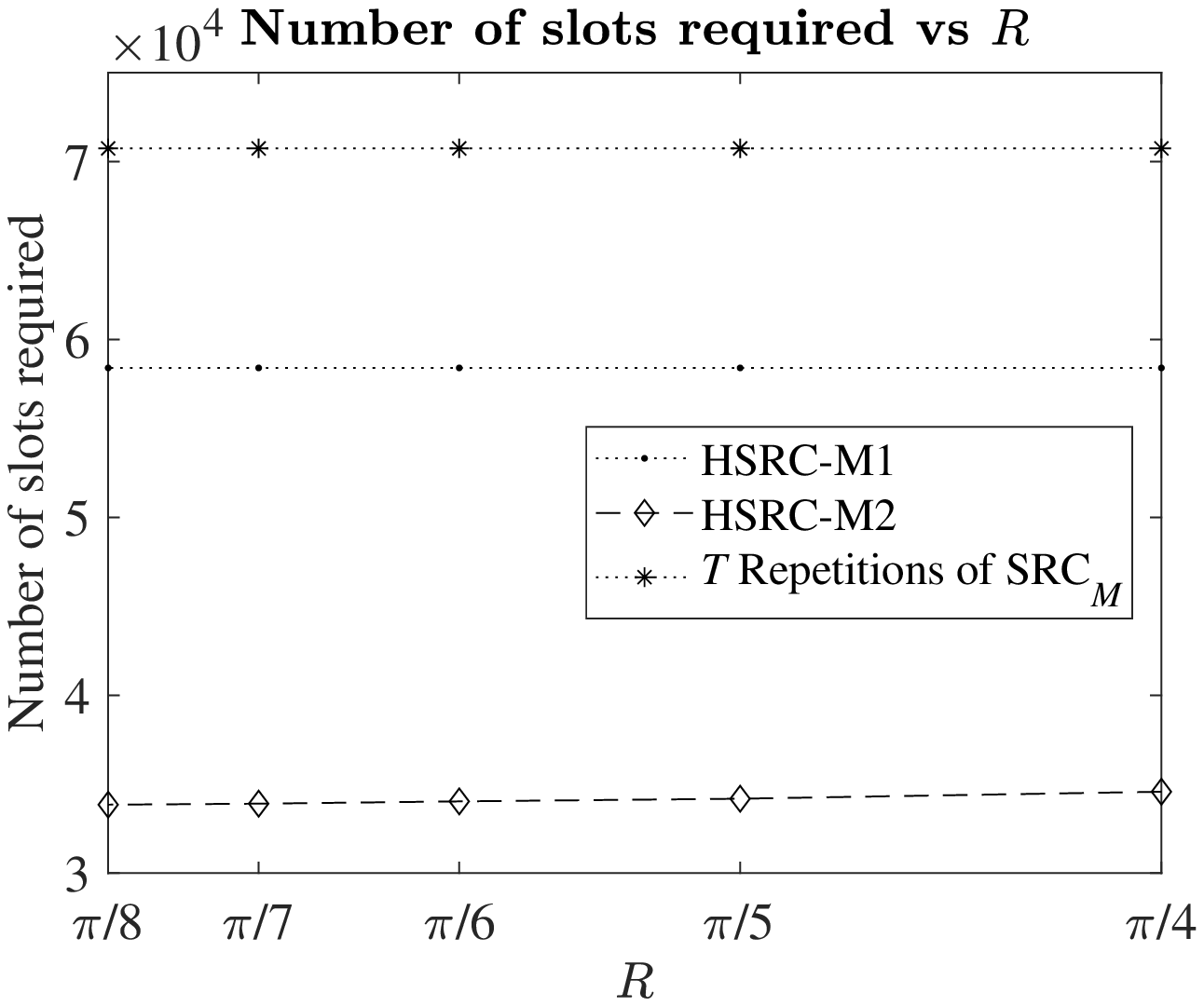}}
\caption{Scenario II}
\label{c2_R}
\end{subfigure}%
\begin{subfigure}[b]{0.5\linewidth}
\centering
\resizebox{1.0\columnwidth}{!}{\includegraphics{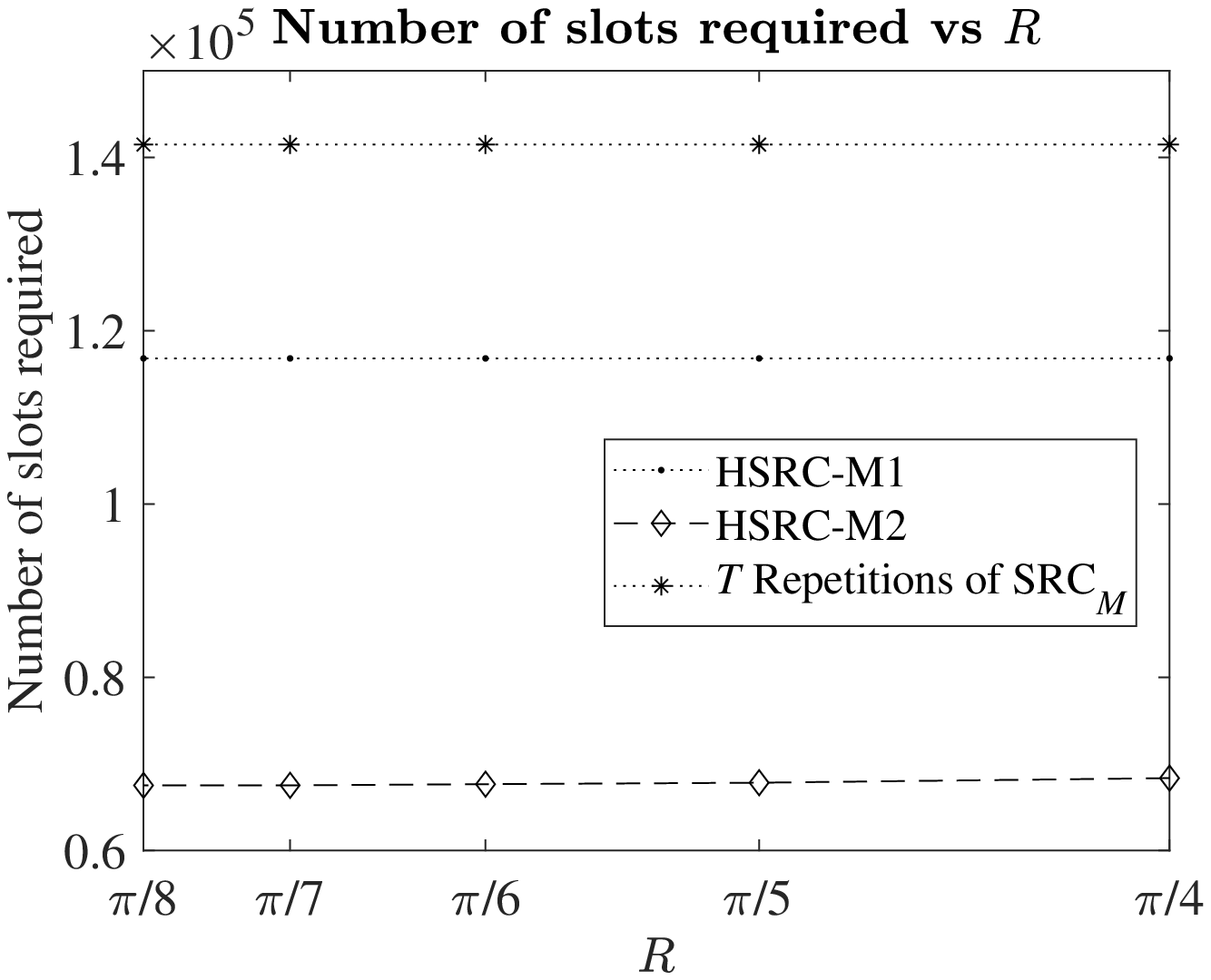}}
\caption{Scenario III}
\label{c4_R}
\end{subfigure}
\caption{These plots show the average numbers of slots required by various estimation schemes versus $R$ (range of MBS) in Scenarios II and III, respectively. The following common parameter values are used in these plots: $\overline{D} = 1000$ and $\overline{q}= 0.3$.}
\label{Sim11a}
\end{figure}

Figs.~\ref{c2_R} and~\ref{c4_R} show the number of time slots required by various estimation schemes versus the range ($R$) of the MBS for Scenario II and Scenario III, respectively. {\color{black} For Scenario II, from Fig.~\ref{c2_R}, we observe that HSRC-M2 outperforms the $T$ repetitions of SRC$_M$ protocol by $51.78\%$, whereas HSRC-M1 outperforms the $T$ repetitions of SRC$_M$ protocol by $17.44\%$. Similarly, for Scenario III, from Fig.~\ref{c4_R}, we observe that HSRC-M2 outperforms the $T$ repetitions of SRC$_M$ protocol by $52.09\%$, whereas HSRC-M1 outperforms the $T$ repetitions of SRC$_M$ protocol by $14.14\%$}. 


\section{Conclusions and Future Work}\label{conclusion}
{\color{black}Our work shows that using the proposed schemes, viz., HSRC-M1 and HSRC-M2, it is possible to find node cardinality estimates for a heterogeneous wireless network deployed in a large region using a significantly smaller number of time slots as compared with $T$ repetitions of the benchmark protocol, SRC$_M$, but with the same accuracy. Our mathematical analysis of the expected number of
time slots required for HSRC-M1 to execute and the expected energy consumption of a node under HSRC-M1 provides several insights into the operation of the scheme. Also, our work has established that the OMT problem is computationally hard.
}

A direction for future research is to address the information-theoretic question of finding a lower bound on the number of time slots that a protocol requires for finding separate estimates of the numbers of active nodes of each type in a heterogeneous network deployed over a large region. {\color{black}Another challenge for future research is to design node cardinality estimation schemes that are provably optimal, i.e., require the minimum possible number of time slots to find estimates of the numbers of active nodes of each type in a heterogeneous network deployed over a large region, or order optimal. Finally, another open problem is to design an approximation algorithm with a provable approximation ratio for the OMT problem formulated in Section 6.}

\appendix
\section*{Appendix}
\section{Proof of Theorem~\ref{Thm_OMT}} \label{Apdx_Thm_OMT}
The decision version of the OMT problem is as follows: ``Given a number $L$, do there exist a number $\hat{M} \le \tilde{M}$ and a tour $(m_i,m_{i+1}), m_0=m_{\hat{M}+1}=0, i \in \{0,1, \ldots, \hat{M}\}$, that satisfy the constraints~\eqref{eq:energy_constraint}-\eqref{eq:Mstar_constraint} such that $\sum_{i=0}^{\hat{M}} c_{m_i,m_{i+1}} \le L$''? Given a number $\hat{M} \le \tilde{M}$ and a tour $(m_i,m_{i+1}), m_0=m_{\hat{M}+1}=0, i \in \{0,1, \ldots, \hat{M}\}$, we can check in polynomial time whether they satisfy ~\eqref{eq:energy_constraint}-\eqref{eq:Mstar_constraint} and whether $\sum_{i=0}^{\hat{M}} c_{m_i,m_{i+1}} \le L$.
Thus, the OMT problem is in class NP~\cite{kleinberg2006algorithm}. {\color{black}The OMT problem is now shown to be NP-complete by reduction of the travelling salesman problem (TSP), which has been shown to be NP-complete~\cite{kleinberg2006algorithm}, to it. }

Consider the following instance of the TSP. There is a network represented by the graph $\mathcal{G}' = (\mathcal{M}', \mathcal{E}')$, $|\mathcal{M}'| = M'+1$, where $\mathcal{M}'$ is a set of cities and $\mathcal{E}'$ is a set of links between cities,  and the set of distances $\{d_{i,j} >0:  (i,j) \in \mathcal{E}', \forall i,j \in \mathcal{M}' \mbox{ and } i \neq j \}$. A tour in the TSP is defined as $(m'_i,m'_{i+1}), m'_0=m'_{M'+1}=\overline{m}, i \in \{0,1, \ldots, M'\}$, where $\overline{m} \in \mathcal{M}'$, and every city (except $\overline{m}$) is visited exactly once. 
The decision version of the TSP is as follows~\cite{kleinberg2006algorithm}:
``Given a set of distances between $\mathcal{M}'$ cities, and a bound $L^{\prime}$, is there a tour of length at most $L^{\prime}$?''

{\color{black}Now, the TSP is shown to be polynomial-time reducible to the OMT problem, i.e., TSP $<_p$ OMT problem.} Consider the instance of the TSP stated in the previous paragraph. {\color{black}From this instance, the following instance of the OMT problem is constructed:} Let $\mathcal{M} = \mathcal{M}'$, $\mathcal{E} = \mathcal{E}'$, $0 = \overline{m}$, and $c_{u,v} = d_{u,v}$  $\forall u,v \in \mathcal{E}$, $u \neq v$. Select the node set $\mathcal{N}$ such that for every $m \in \mathcal{M}$, $\exists k \in \mathcal{N}$ such that $X_{k,m} = 1$ and $X_{k,m^{\prime}} = 0, ~\forall m^{\prime} \in \mathcal{M} \setminus \{m\}$, i.e.,
\begin{align}
    X_{k,u} = 
    \begin{cases}
    1, & u=m, \\
    0, & \forall u \in \mathcal{M} \setminus \{m\}.
    \end{cases}
\end{align}
This ensures that the constraint~\eqref{eq:coverage_constraint} is satisfied only when  $\hat{M} = \tilde{M}$, which satisfies the constraint~\eqref{eq:Mstar_constraint}. Let $\eta_{k,m} = 1,$ $\forall k \in \mathcal{N}, m \in \mathcal{M}$. {\color{black}Also, no energy constraint is considered in this instance, i.e., let $\overline{\eta} = \infty$; then~\eqref{eq:energy_constraint} is satisfied for every tour.} Finally, let $L = L^{\prime}$.

We claim that there exists a tour of length at most $L^{\prime}$ in the TSP iff there is a tour that satisfies the constraints~\eqref{eq:energy_constraint}-\eqref{eq:Mstar_constraint} such that $\sum_{i=0}^{\hat{M}} c_{m_i,m_{i+1}} \le L$ in the OMT problem. 
To show necessity, suppose there exists a tour $(m'_i,m'_{i+1})$, $m'_0=m'_{M'+1}=\overline{m}$, $i \in \{0,1, \ldots, M'\}$ of length at most $L^{\prime}$ in the TSP. 
Let $\hat{M} = M'$ and $m_i = m'_i$ $\forall i \in \{0,1, \ldots, \hat{M} + 1\}$. Then by the above construction of the OMT problem, it follows that the tour $(m_i,m_{i+1})$, $i \in \{0,1, \ldots, \hat{M}\}$ satisfies the constraints~\eqref{eq:energy_constraint}-\eqref{eq:Mstar_constraint} and its travel cost $\sum_{i=0}^{\hat{M}} c_{m_i,m_{i+1}}$ equals the length of the tour $(m'_i,m'_{i+1})$, $m'_0=m'_{M'+1}=\overline{m}$, $i \in \{0,1, \ldots, M'\}$ in the TSP and hence is $\le L$. This proves necessity.


To prove sufficiency, let us consider a tour $(m_i,m_{i+1})$, $m_0=m_{\hat{M}+1}=0$, $i \in \{0,1, \ldots, \hat{M}\}$, that satisfies the  constraints~\eqref{eq:energy_constraint}-\eqref{eq:Mstar_constraint} and such that $\sum_{i=0}^{\hat{M}} c_{m_i,m_{i+1}} \le L$ in the OMT problem. Let $m'_i = m_i$, $\forall i \in \{0,1, \ldots, M' + 1\}$. Then it is easy to see that the length of the tour $(m'_i,m'_{i+1}), m'_0=m'_{M'+1}=\overline{m}, i \in \{0,1, \ldots, M'\}$ in the TSP is equal to the travel cost $\sum_{i=0}^{\hat{M}} c_{m_i,m_{i+1}}$ in the OMT problem and hence is $\le L^{\prime}$. 
This proves sufficiency, and the result follows.
\section{2-SP in the special cases
\texorpdfstring{$T = 7$}{} and \texorpdfstring{$T = 8$}{}}\label{Apdx_2SP}
\subsection{Scheme for \texorpdfstring{$T = 7$}{}} \label{Apdx_2SP_T7}
Step 1 consists of $\ell$ blocks, each consisting of 3 slots. From Fig.~\ref{Sym_Combo}, it follows that the symbol combinations given in the following table are used. 
\begin{center}
  \begin{tabular}{| c | c |}
    \hline
    Type & Symbol Combination \\ \hline
    1 & $\alpha$ $0$ $0$  \\ \hline
    2 & $\alpha$ $\alpha$ $0$ \\ \hline
    3 & $\alpha$ $\alpha$ $\alpha$ \\ \hline
    4 & $0$ $0$ $\beta$ \\ \hline
    5 & $0$ $\beta$ $\beta$ \\ \hline
    6 & $\beta$ $\beta$ $\beta$ \\ \hline
    7 & $\beta$ $0$ $\alpha$ \\ \hline
  \end{tabular}
\end{center}
\subsubsection{Step 1}
In each slot, there are four possible outcomes: $E$ (empty or no transmission), $\alpha$, $\beta$, and $C$ (collision). So the total number of potential outcomes in the three slots of a block $B_i$ is $4^3=64$. It is clear that the MBS can find the bit patterns unambiguously if there is no $C$ in any of the slots of block $B_i$.\footnote{For example, if Slot 1 results in $\alpha$ and Slots 2 and 3 result in $\beta$, the MBS deduces that in block $B_i$, only one node each of $\mathcal{T}_1$ and $\mathcal{T}_5$ transmitted and no nodes of $\mathcal{T}_2,\mathcal{T}_3,\mathcal{T}_4,\mathcal{T}_6,\mathcal{T}_7$ transmitted. Similarly, if Slot 1 results in $\alpha$, Slot 2 results in $E$, and Slot 3 results in $\beta$, the MBS deduces that in block $B_i$, only one node each of $\mathcal{T}_1$ and $\mathcal{T}_4$ transmitted and no nodes of $\mathcal{T}_2,\mathcal{T}_3,\mathcal{T}_5,\mathcal{T}_6,\mathcal{T}_7$ transmitted.} {\color{black}So  the cases where at least one $C$ happens in each block $B_i$ are considered.} The number of cases in which a collision happens in precisely one slot (respectively, two slots) in a block is $\binom{3}{1} \times 3^2 = 27$ (respectively, $\binom{3}{2}\times 3^1 = 9$). Table~\ref{Tab_OneC1_T7} only shows the instances in which there is uncertainty about the activity or inactivity of some of the types of nodes and additional slots in step 2 are required to address the ambiguity. The types of nodes mentioned in the `Active' (respectively, `Inactive') column are detected as certainly active (respectively, inactive) by the MBS at the conclusion of step 1; those listed in the `Not Sure' column have uncertainty, which must be resolved in step 2.\footnote{For example, if Slots 1 and 2 result in $\alpha$ and Slot 3 results in $C$, the MBS deduces that in block $B_i$, at least 1 node of $\mathcal{T}_4$ is active, no nodes of $\mathcal{T}_1,\mathcal{T}_5,\mathcal{T}_6,\mathcal{T}_7$ are active  and exactly one node of $\mathcal{T}_2$ or $\mathcal{T}_3$ is active.  Similarly, if Slot 1 results in $E$ and Slots 2 and 3 result in $C$, the MBS deduces that in block $B_i$, at least $2$ nodes of $\mathcal{T}_5$ are active, no nodes of $\mathcal{T}_1,\mathcal{T}_2,\mathcal{T}_3,\mathcal{T}_6,\mathcal{T}_7$ are active  and the presence/ absence of nodes of $\mathcal{T}_4$ cannot be inferred.} Also, if the result $CCC$ happens in a block of step 1, there is uncertainty concerning the activity or inactivity of all seven node types, which must be addressed in step 2. 
\begin {table}[h] 
\centering
\caption{This table shows the deductions of the MBS about the activity or inactivity of various types of nodes for each of the outcomes in which one or two collisions occur in a block and there is uncertainty about the activity or inactivity of some of the nodes.}
\label{Tab_OneC1_T7}
\resizebox{\columnwidth}{!}{%
\begin{tabular}{|c|c|c|c|c|c|} 
\hline
\multicolumn{3} { | c |} {Outcome in Block $B_i$}  & \multicolumn{3} { | c |} {Types}\\ 
\hline 
Slot 1 & Slot 2 & Slot 3 & Active & Inactive & Not Sure \\
\hline
$C$ & $\beta$ & $\beta$ & $1$ & $2, 3,4, 7$ & One of $\{5, 6\}$\\ \hline
 $\alpha$ & $\alpha$ & $C$ & $4$ & $1,5,6, 7$ & One of $\{2, 3\}$\\ \hline
 $\beta$ & $\beta$ & $C$ & - & $1,2, 3$ & One of $\{\{4, 6\},\{4,5,7\}\}$\\ \hline
 $C$ & $\alpha$ & $\alpha$ &  - & $4,5,6$ & One of $\{\{1, 3\},\{1,2,7\}\}$\\ \hline
 $C$ & $C$ & $E$ & $2$ & $3,4,5,6,7$ & $1$ \\ \hline
 $C$ & $C$ & $\alpha$ &  $2$ & $4,5,6$ & $1$, One of $\{3,7\}$ \\ \hline
$C$ & $C$ & $\beta$ & $2$ & $3,7$ & $1$, One of $\{4,5,6\}$ \\ \hline
$E$ & $C$ & $C$ &  $5$ & $1,2,3,6,7$ & $4$ \\ \hline
$\alpha$ & $C$ & $C$ &  $5$ & $6,7$ & $4$, One of $\{1,2,3\}$ \\ \hline
$\beta$ & $C$ & $C$ &  $5$ & $1,2,3$ & $4$, One of $\{6,7\}$ \\ \hline
$C$ & $E$ & $C$ &  - & $2,3,5,6$ & $1,4,7$ \\ \hline
$C$ & $\alpha$ & $C$ &  - & $5,6$ & $1, 4, 7$, One of $\{2,3\}$ \\ \hline
$C$ & $\beta$ & $C$ &  - & $2,3$ & $1, 4, 7$, One of $\{5,6\}$ \\ \hline

\end{tabular}%
}
\end{table}

\subsubsection{Broadcast packet (BP)} \label{sec:BPT7}
Just after step 1, the MBS informs all nodes of the list of type numbers for which ambiguity regarding activity or inactivity must be addressed in each block through a BP consisting of a bit stream (see Fig.~\ref{Est_Window2}). Let $\chi(T)$ be the set of all possible subsets of types of nodes for which the MBS may not be certain of their activity or inactivity after step 1 for a given $T$; the case where the activity (or inactivity) of every type of node is known to the MBS with certainty is included in the set $\chi(T)$ as an empty set ($\emptyset$). We can observe from Table~\ref{Tab_OneC1_T7} that $\chi(7) = \{ \{5, 6\}, \{2,3\}, \{4,6\}, \{4,5,7\}, \{1,3\}, \{1,2,7\},  \{1\}, \{3,7\}, \\ \{4,5,6\}, \{4\}, \{1,2,3\}, \{6,7\}, \{7\}, \{1,2,3,4,5,6, 7\}, \emptyset\}$. Assume that the bit strings $0000, 0001, \ldots, 1110$ are used to represent the members, respectively, of the set $\chi(7)$. The MBS concatenates the bit strings corresponding to all the blocks of step 1 and transmits the concatenated bit string in the BP; this concatenated list is received by all active nodes, who act accordingly in step 2.

Suppose $b'(T)$ denotes the number of bits utilized to represent each member of the set $\chi(T)$. The length of the BP (in terms of slot count) for a given $T$ is then given by $\bar{\ell}(T) = \ceil{b'(T) \times \ell/S_W}$. 
For $T=7$, $b'(7) = \ceil{\log_2|\chi(7)|} = 4$ and $\bar{\ell}(7) = \ceil{b'(7) \times \ell/S_W} = \ceil{4 \ell/S_W}$.
\subsubsection{Step 2} \label{sec:T7Step2}
Only active nodes of those types and blocks that have ambiguity after step 1 participate in this step. To resolve the ambiguity, e.g., the step 1 block result $C\beta\beta$ (see Table~\ref{Tab_OneC1_T7}) requires one more slot in step 2. For this outcome, active $\mathcal{T}_5$  nodes corresponding to that block reply with symbol $\beta$, and active $\mathcal{T}_6$ nodes corresponding to that block do not respond in their corresponding slot of step 2. If the slot outcome is symbol $\beta$, it means that a node of $\mathcal{T}_5$ is active; otherwise, the slot outcome is $E$, which means that a node of $\mathcal{T}_6$ is active. As a result, the ambiguity in that block is resolved at the end of step 2. If the result $CCC$ occurs in a block of step 1, then there is uncertainty regarding the activity or inactivity of nodes of all seven types at the end of step 1; then, in step 2, the estimation techniques for $T=4$ and $T=3$ are employed to resolve this ambiguity for each of the groups $\{\mathcal{T}_1,\mathcal{T}_2,\mathcal{T}_3,\mathcal{T}_4\}$ and $\{\mathcal{T}_5,\mathcal{T}_6,\mathcal{T}_7\}$. Step 2 is \textit{recursive}-- if the result $CCC$ occurs in a block of step 1  while performing the scheme for $T = 7$, the strategies of $T = 4$ and $T = 3$  are utilized to resolve the ambiguity in step 2. 

\subsection{Scheme for \texorpdfstring{$T = 8$}{}} \label{Apdx_2SP_T8}
Step 1 consists of $\ell$ blocks, each consisting of 4 slots. From Fig.~\ref{Sym_Combo}, it follows that the symbol combinations
given in the following table are used. 
\begin{center}
  \begin{tabular}{| c | c |}
    \hline
    Type & Symbol Combination \\ \hline
    1 & $\alpha$ $0$ $0$ $0$ \\ \hline
    2 & $\alpha$ $\alpha$ $0$ $0$ \\ \hline
    3 & $\alpha$ $\alpha$ $\alpha$ $0$ \\ \hline
    4 & $\alpha$ $\alpha$ $\alpha$ $\alpha$ \\ \hline
    5 & $0$ $0$ $0$ $\beta$ \\ \hline
    6 & $0$ $0$ $\beta$ $\beta$ \\ \hline
    7 & $0$ $\beta$ $\beta$ $\beta$ \\ \hline
    8 & $\beta$ $\beta$ $\beta$ $\beta$ \\ \hline 
  
  \end{tabular}
\end{center}

\subsubsection{Step 1}
The number of cases in which a collision occurs in exactly one slot (respectively, two and three slots) of a block is $\binom{4}{1} \times 3^3 = 108$ (respectively, $\binom{4}{2}\times 3^2 = 54$ and $\binom{4}{3} \times 3 = 12$). Out of these, in Table~\ref{Tab_OneC1}, we only show the cases where ambiguity about the activity or inactivity of some of the types of nodes exists and additional slots in step 2 are needed to resolve the ambiguity. Also, it is easy to check that if the result $CCCC$ occurs in a block of step 1, then ambiguity remains about the activity or inactivity of all eight node types, and this ambiguity needs to be resolved
in step 2.

\subsubsection{Broadcast Packet (BP)}
Following step 1, the MBS sends a BP identical to that described in Section~\ref{sec:BPT7}. Using notations similar to those in Section~\ref{sec:BPT7}, $\chi(8) = \{ \{7, 8\}, \{3, 4\}, \{1\}, \{1,6,7,8\}, \{5\}, \\~~\{2,3,4,5\}, \{1,2\},  \{1,2,5,6,7,8\}, \{1,3,4\}, \{5,7,8\}, \{5,6\},\\ ~~ \{1,2,3,4,5,6\},  \{1,2,3,4,5,6, 7,8 \}, \emptyset\}$. So $b'(8)$ \\= $\ceil{\log_2|\chi(8)|} = 4$ and $\bar{\ell}(8) = \ceil{b'(8) \times \ell/S_W} = \ceil{4 \ell/S_W}$. 

\subsubsection{Step 2}
In step 2, only nodes of those types and blocks participate, for which ambiguity still exists after step 1. To resolve the ambiguity in case of the block results listed in Table~\ref{Tab_OneC1}, an approach similar to that described in Section~\ref{sec:T7Step2} is used. For the block results $CCC \beta$ and $\alpha CCC$, four slots are required in step 2; for the block results $CC \beta \beta$, $\alpha \alpha CC$, $CCCE$, $CCC \alpha$, $CC \beta C$, $C \alpha CC$, $C \beta CC$, $CC \alpha C$, $\beta CCC$, and $ECCC$, two slots are required in step 2;  for the rest of the  block results, one slot is sufficient in step 2 to resolve the ambiguity. For example, in case the block result $CCC \beta$ occurs, in the first (respectively, second) of the four required slots, each active $\mathcal{T}_1$ (respectively, $\mathcal{T}_2$) node transmits symbol $\alpha$. If the slot result is empty, it implies that all $\mathcal{T}_1$ (respectively, $\mathcal{T}_2$) nodes are inactive, else at least one $\mathcal{T}_1$ (respectively, $\mathcal{T}_2$) node is active. In the third of the four required slots, each active $\mathcal{T}_5$ (respectively, $\mathcal{T}_6$) node transmits symbol $\alpha$ (respectively, $\beta$). Only three outcomes are possible: if the outcome is $\alpha$ (respectively, $\beta$), it implies that one $\mathcal{T}_5$ (respectively, $\mathcal{T}_6$) node is active and none of the $\mathcal{T}_6$ (respectively, $\mathcal{T}_5$) nodes are active. If the outcome is $E$, it implies that all the nodes of $\mathcal{T}_5$ and $\mathcal{T}_6$ are inactive. Similarly, in the last of the four required slots, each active $\mathcal{T}_7$ (respectively, $\mathcal{T}_8$) node transmits symbol $\alpha$ (respectively, $\beta$). Only three outcomes are possible: if the outcome is $\alpha$ (respectively, $\beta$), it implies that one $\mathcal{T}_7$ (respectively, $\mathcal{T}_8$) node is active and none of the $\mathcal{T}_8$ (respectively, $\mathcal{T}_7$) nodes are active. If the outcome is $E$, it implies that all the nodes of $\mathcal{T}_7$ and $\mathcal{T}_8$ are inactive. 
Finally, in case the result $CCCC$ occurs in a block of step 1, then the set of all node types is divided into two groups of size $4$ each:
$\{\mathcal{T}_1,\mathcal{T}_2,\mathcal{T}_3,\mathcal{T}_4\}$ and  $\{\mathcal{T}_5,\mathcal{T}_6,\mathcal{T}_7,\mathcal{T}_8\}$, and the scheme for $T = 4$ is used twice in step 2 to resolve the ambiguity. 

\begin {table} 
\centering
\caption{This table shows the deductions of the MBS about the activity or inactivity of various types of nodes for each of the outcomes in which one, two, or three collisions occur in a block and there is uncertainty about the activity or inactivity of some of the nodes.}
\label{Tab_OneC1}
\resizebox{\columnwidth}{!}{%
\begin{tabular}{|c|c|c|c|c|c|c|} 
\hline
\multicolumn{4} { | c | } {Outcome in Block $B_i$}  & \multicolumn{3} { | c |} {Types}\\ 
\hline 
Slot 1 & Slot 2 & Slot 3 & Slot 4 & Active & Inactive & Not Sure \\
\hline
$C$ & $\beta$ & $\beta$ & $\beta$ & $1$ & $2, 3,4,5,6$ & One of $\{7, 8\}$\\ \hline
$\alpha$ & $\alpha$ & $\alpha$ & $C$ & $5$ & $1,2, 6,7,8$ & One of $\{3,4\}$\\ \hline
$C$ & $C$ & $E$ & $E$ & $2$ & $3,4,5, 6,7,8$ & $1$\\
\hline
$C$ & $C$ & $E$ & $\beta$ & $2, 5$ & $3,4, 6,7,8$ & $1$ \\
\hline
$C$ & $C$ & $\alpha$ & $E$ & $2, 3$ & $4, 5,6,7,8$ & $1$\\
\hline
$C$ & $C$ & $\alpha$ & $\alpha$ & $2, 4$ & $3,5, 6,7,8$ & $1$\\
\hline
$C$ & $C$ & $\alpha$ & $\beta$ & $2, 3,5$ & $4, 6,7,8$ & $1$\\
\hline
$C$ & $C$ & $\beta$ & $\beta$ & $2$ & $3,4,5$ & $1$,   $\text{One of } \{ 6, 7, 8\}$\\
\hline
$C$ & $\alpha$ & $\alpha$ & $C$ & $1, 5$ & $2, 6,7,8$ & One of $\{3 ,4\}$\\
\hline
$C$ & $\beta$ & $\beta$ & $C$ & $1, 5$ & $2, 3, 4, 6$ & One of $\{7, 8\}$ \\
\hline
$E$ & $E$ & $C$ & $C$ & $6$ & $1,2,3,4, 7,8$ & $5$ \\
\hline
$E$ & $\beta$ & $C$ & $C$ & $6, 7$ & $1,2,3,4, 8$ & $5$ \\
\hline
$\alpha$ & $E$ & $C$ & $C$ & $1, 6$ & $2,3,4,7,8$ & $5$ \\
\hline
$\alpha$ & $\alpha$ & $C$ & $C$ &  $6$ & $1,7, 8$ & $5$,   $\text{One of } \{ 2, 3, 4\}$\\
\hline
$\alpha$ & $\beta$ & $C$ & $C$ & $1, 6, 7$ & $2,3,4,8$ & $5$  \\
\hline
$\beta$ & $\beta$ & $C$ & $C$ &  $6, 8$ & $1, 2,3,4,7$ & $5$    \\
\hline
$C$ & $C$ & $C$ & $E$ &  $3$ & $4,5,6,7,8$ & $1, 2$    \\
\hline
$C$ & $C$ & $C$ & $\alpha$ & $3, 4$ & $5,6,7,8$ & $1, 2$  \\
\hline
$C$ & $C$ & $C$ & $\beta$ &  $3$ & $4$  & $1, 2$, One of $\{5, 6, 7, 8\}$\\
\hline
$C$ & $C$ & $E$ & $C$ &  $2, 5$ & $3, 4, 6, 7, 8$ &  $1$\\
\hline
$C$ & $C$ & $\alpha$ & $C$ &  $2, 5$ &  $6, 7, 8$ & $1$, One of $\{3, 4\}$\\
\hline
$C$ & $C$ & $\beta$ & $C$ &  $2, 5$ & $3,4$ & $1$, One of $\{6, 7, 8\}$ \\
\hline
$C$ & $E$ & $C$ & $C$ &  $1, 6$ & $2, 3, 4,7,8$  & $5$\\
\hline
$C$ & $\alpha$ & $C$ & $C$ &  $1, 6$ & $7, 8$ & $5$, One of $\{2, 3, 4\}$\\
\hline
$C$ & $\beta$ & $C$ & $C$ &  $1, 6$ & $2, 3, 4$ &  $5$, One of $\{7, 8\}$ \\
\hline
$E$ & $C$ & $C$ & $C$ &  $7$ &  $1, 2, 3, 4, 8$ & $5,6$\\
\hline
$\alpha$ & $C$ & $C$ & $C$ &  $7$ & $8$ & $5, 6$, One of $\{1, 2, 3, 4\}$ \\
\hline
$\beta$ & $C$ & $C$ & $C$ &  $7, 8$ & $1, 2, 3, 4$  & $5, 6$\\
\hline
\end{tabular}%
}
\end{table}

\bibliography{BibFiles.bib}

\begin{thebibliography}{10}

\bibitem{kadam2020SPCOM}
S.~Kadam and G.~S. Kasbekar, ``{Node Cardinality Estimation Using a Mobile Base Station in a Heterogeneous Wireless Network Deployed Over a Large Region},'' in {\em 2020 International Conference on Signal Processing and Communications (SPCOM)}, pp.~1--5, 2020.

\bibitem{mozaffari2019tutorial}
M.~Mozaffari, W.~Saad, M.~Bennis, Y.-H. Nam, and M.~Debbah, ``{A Tutorial on UAVs for Wireless Networks: Applications, Challenges, and Open Problems},'' {\em IEEE Communications Surveys \& Tutorials}, vol.~21, no.~3, pp.~2334--2360, 2019.

\bibitem{kanistras2013survey}
K.~Kanistras, G.~Martins, M.~J. Rutherford, and K.~P. Valavanis, ``{A Survey of Unmanned Aerial Vehicles (UAVs) for Traffic Monitoring},'' in {\em 2013 International Conference on Unmanned Aircraft Systems (ICUAS)}, pp.~221--234, IEEE, 2013.

\bibitem{ke2016real}
R.~Ke, Z.~Li, S.~Kim, J.~Ash, Z.~Cui, and Y.~Wang, ``{Real-time Bidirectional Traffic Flow Parameter Estimation from Aerial Videos},'' {\em IEEE Transactions on Intelligent Transportation Systems}, vol.~18, no.~4, pp.~890--901, 2016.

\bibitem{giambene20195g}
G.~Giambene, E.~O. Addo, and S.~Kota, ``{5G Aerial Component for IoT Support in Remote Rural Areas},'' in {\em 2019 IEEE 2nd 5G World Forum (5GWF)}, pp.~572--577, IEEE, 2019.

\bibitem{dinh2019flying}
T.~D. Dinh, D.~T. Le, T.~T.~T. Tran, and R.~Kirichek, ``{Flying Ad-Hoc Network for Emergency Based on IEEE 802.11p Multichannel MAC Protocol},'' in {\em International Conference on Distributed Computer and Communication Networks}, pp.~479--494, Springer, 2019.

\bibitem{kadam2017fast}
S.~Kadam, C.~S. Raut, and G.~S. Kasbekar, ``{Fast Node Cardinality Estimation and Cognitive MAC Protocol Design for Heterogeneous M2M Networks},'' in {\em GLOBECOM 2017-2017 IEEE Global Communications Conference}, pp.~1--7, IEEE, 2017.

\bibitem{kadam2020fast}
S.~Kadam, C.~S. Raut, A.~D. Meena, and G.~S. Kasbekar, ``{Fast node cardinality estimation and cognitive MAC protocol design for heterogeneous machine-to-machine networks},'' {\em Wireless Networks}, vol.~26, no.~6, pp.~3929--3952, 2020.

\bibitem{liew2019probability}
J.~T. Liew, F.~Hashim, A.~Sali, M.~F.~A. Rasid, and A.~Jamalipour, ``{Probability-based Opportunity Dynamic Adaptation (PODA) of Contention Window for Home M2M Networks},'' {\em Journal of Network and Computer Applications}, vol.~144, pp.~1--12, 2019.

\bibitem{arjona2018TagID}
L.~Arjona, H.~Landaluce, A.~Perallos, and E.~Onieva, ``{Timing-Aware RFID Anti-Collision Protocol to Increase the Tag Identification Rate},'' {\em IEEE Access}, vol.~6, pp.~33529--33541, 2018.

\bibitem{liu2019TagSearch}
C.-G. Liu, I.-H. Liu, C.-D. Lin, and J.-S. Li, ``{A novel tag searching protocol with time efficiency and searching accuracy in RFID systems},'' {\em Computer Networks}, vol.~150, pp.~201--216, 2019.

\bibitem{liu2020TagSearch}
X.~Liu, J.~Yin, J.~Liu, S.~Zhang, and B.~Xiao, ``{Time Efficient Tag Searching in Large-Scale RFID Systems: A Compact Exclusive Validation Method},'' {\em IEEE Transactions on Mobile Computing}, vol.~21, no.~4, pp.~1476--1491, 2022.

\bibitem{yu2018TagSearch1}
J.~Yu, W.~Gong, J.~Liu, L.~Chen, and K.~Wang, ``{On Efficient Tree-Based Tag Search in Large-Scale RFID Systems},'' {\em IEEE/ACM Transactions on Networking}, vol.~27, no.~1, pp.~42--55, 2019.

\bibitem{yu2018TagSearch2}
J.~Yu, W.~Gong, J.~Liu, and L.~Chen, ``{Fast and Reliable Tag Search in Large-Scale RFID Systems: A Probabilistic Tree-based Approach},'' in {\em IEEE INFOCOM 2018 - IEEE Conference on Computer Communications}, pp.~1133--1141, 2018.

\bibitem{liu2022MissingTagID}
K.~Liu, L.~Chen, J.~Huang, S.~Liu, and J.~Yu, ``{Revisiting RFID Missing Tag Identification},'' in {\em IEEE INFOCOM 2022 - IEEE Conference on Computer Communications}, pp.~710--719, 2022.

\bibitem{fahim2018TagIDError}
A.~Fahim, T.~Elbatt, A.~Mohamed, and A.~Al-Ali, ``{Towards Extended Bit Tracking for Scalable and Robust RFID Tag Identification Systems},'' {\em IEEE Access}, vol.~6, pp.~27190--27204, 2018.

\bibitem{zhu2019MissingTagID}
W.~Zhu, X.~Meng, X.~Peng, J.~Cao, and M.~Raynal, ``{Collisions Are Preferred: RFID-Based Stocktaking with a High Missing Rate},'' {\em IEEE Transactions on Mobile Computing}, vol.~19, no.~7, pp.~1544--1554, 2020.

\bibitem{zang2018MissingTagID}
Y.~Zhang, S.~Chen, Y.~Zhou, and O.~Odegbile, ``{Missing-Tag Detection with Presence of Unknown Tags},'' in {\em {2018 15th Annual IEEE International Conference on Sensing, Communication, and Networking (SECON)}}, pp.~1--9, 2018.

\bibitem{liu2020TagIDContention}
J.~Liu, X.~Chen, S.~Chen, W.~Wang, D.~Jiang, and L.~Chen, ``{Retwork: Exploring Reader Network with {COTS} {RFID} Systems},'' in {\em 2020 USENIX Annual Technical Conference (USENIX ATC 20)}, pp.~889--896, USENIX Association, July 2020.

\bibitem{kadam2019rapid}
S.~Kadam, S.~V. Yenduri, P.~H. Prasad, R.~Kumar, and G.~S. Kasbekar, ``{Rapid Node Cardinality Estimation in Heterogeneous Machine-to-Machine Networks},'' {\em IEEE Transactions on Vehicular Technology}, vol.~70, no.~2, pp.~1836--1850, 2021.

\bibitem{3gpp}
P.~Panigrahi, ``{Service Requirements for Machine-Type Communications},'' {\em 3GPP}, release 16, 2014.

\bibitem{MACM2M}
A.~Rajandekar and B.~Sikdar, ``{A Survey of MAC Layer Issues and Protocols for Machine-to-Machine Communications},'' {\em IEEE Internet of Things Journal}, vol.~2, no.~2, pp.~175--186, 2015.

\bibitem{M2MmobileInternet}
G.~Wu, S.~Talwar, K.~Johnsson, N.~Himayat, and K.~D. Johnson, ``{M2M: From mobile to embedded internet},'' {\em IEEE Communications Magazine}, vol.~49, no.~4, pp.~36--43, 2011.

\bibitem{intelligentIOT}
O.~Bello and S.~Zeadally, ``{Intelligent Device-to-Device Communication in the Internet of Things},'' {\em IEEE Systems Journal}, vol.~10, no.~3, pp.~1172--1182, 2016.

\bibitem{qian2011cardinality}
C.~Qian, H.~Ngan, Y.~Liu, and L.~M. Ni, ``{Cardinality Estimation for Large-Scale RFID Systems},'' {\em IEEE Transactions on Parallel and Distributed Systems}, vol.~22, no.~9, pp.~1441--1454, 2011.

\bibitem{arjona2017scalable}
L.~Arjona, H.~Landaluce, A.~Perallos, and E.~Onieva, ``{Scalable RFID Tag Estimator with Enhanced Accuracy and Low Estimation Time},'' {\em IEEE Signal Processing Letters}, vol.~24, no.~7, pp.~982--986, 2017.

\bibitem{hou2015PLACE}
Y.~Hou, J.~Ou, Y.~Zheng, and M.~Li, ``{PLACE: Physical layer cardinality estimation for large-scale RFID systems},'' in {\em {2015 IEEE Conference on Computer Communications (INFOCOM)}}, pp.~1957--1965, 2015.

\bibitem{liu2019NCENetworked}
J.~Liu, Y.~Zhang, S.~Chen, M.~Chen, and L.~Chen, ``{Collision-resistant Communication Model for State-free Networked Tags},'' in {\em {2019 IEEE 39th International Conference on Distributed Computing Systems (ICDCS)}}, pp.~656--665, 2019.

\bibitem{lin2019NCETash}
Q.~Lin, L.~Yang, C.~Duan, and Z.~An, ``{Tash: Toward Selective Reading as Hash Primitives for Gen2 RFIDs},'' {\em IEEE/ACM Transactions on Networking}, vol.~27, no.~2, pp.~819--834, 2019.

\bibitem{zhou2018NCETimeVarying}
Z.~Zhou and B.~Chen, ``{RFID Counting over Time-Varying Channels},'' in {\em {IEEE INFOCOM 2018 - IEEE Conference on Computer Communications}}, pp.~1142--1150, 2018.

\bibitem{ng2020NCEBeacon}
P.~C. Ng, J.~She, P.~Spachos, and R.~Ran, ``{A Fast Item Identification and Counting in Ultra-dense Beacon Networks},'' in {\em {GLOBECOM 2020 - 2020 IEEE Global Communications Conference}}, pp.~1--6, 2020.

\bibitem{deng2015NCEAlohaFrameCollisonRate}
D.-J. Deng, C.-C. Lin, T.-H. Huang, and H.-C. Yen, ``{On Number of Tags Estimation in RFID Systems},'' {\em IEEE Systems Journal}, vol.~11, no.~3, pp.~1395--1402, 2017.

\bibitem{shahzad2014NCE(ART)}
M.~Shahzad and A.~X. Liu, ``{Fast and Accurate Estimation of RFID Tags},'' {\em IEEE/ACM Transactions on Networking}, vol.~23, no.~1, pp.~241--254, 2015.

\bibitem{shahzad2012NCE(ART)}
M.~Shahzad and A.~X. Liu, ``{Every Bit Counts: Fast and Scalable RFID Estimation},'' in {\em {Proceedings of the 18th Annual International Conference on Mobile Computing and Networking}}, MobiCom '12, (New York, NY, USA), p.~365–376, Association for Computing Machinery, 2012.

\bibitem{ferreira2019NCEAlohaFrame}
H.~P. Ferreira, F.~M. Assis, and A.~R. Serres, ``{Novel RFID method for faster convergence of tag estimation on dynamic frame size ALOHA algorithms},'' {\em IET Communications}, vol.~13, no.~9, pp.~1218--1224, 2019.

\bibitem{han2010NCEAnonymity}
H.~Han, B.~Sheng, C.~C. Tan, Q.~Li, W.~Mao, and S.~Lu, ``{Counting RFID Tags Efficiently and Anonymously},'' in {\em 2010 Proceedings IEEE INFOCOM}, pp.~1--9, 2010.

\bibitem{xi2020NCEInAndOut}
Z.~Xi, X.~Liu, J.~Luo, S.~Zhang, and S.~Guo, ``{Fast and Reliable Dynamic Tag Estimation in Large-Scale RFID Systems},'' {\em {IEEE Internet of Things Journal}}, vol.~8, no.~3, pp.~1651--1661, 2021.

\bibitem{nguyen2019NCEAlohaExpectationMaximization}
C.~T. Nguyen, V.-D. Nguyen, and A.~T. Pham, ``{Tag Cardinality Estimation Using Expectation-Maximization in ALOHA-Based RFID Systems With Capture Effect and Detection Error},'' {\em IEEE Wireless Communications Letters}, vol.~8, no.~2, pp.~636--639, 2019.

\bibitem{hasan2018NCEGaussianDistribution}
M.~M. Hasan, S.~Wei, and R.~Vaidyanathan, ``{Estimation of RFID Tag Population Size by Gaussian Estimator},'' in {\em 2018 IEEE International Conference on Communications (ICC)}, pp.~1--6, 2018.

\bibitem{zheng2014NCE(ZOE)}
Y.~Zheng and M.~Li, ``{Towards More Efficient Cardinality Estimation for Large-Scale RFID Systems},'' {\em IEEE/ACM Transactions on Networking}, vol.~22, no.~6, pp.~1886--1896, 2014.

\bibitem{zheng2013NCE(ZOE)}
Y.~Zheng and M.~Li, ``{ZOE: Fast cardinality estimation for large-scale RFID systems},'' in {\em 2013 Proceedings IEEE INFOCOM}, pp.~908--916, 2013.

\bibitem{li2010NCEEnergyMin}
T.~Li, S.~Wu, S.~Chen, and M.~Yang, ``{Energy Efficient Algorithms for the RFID Estimation Problem},'' in {\em {2010 Proceedings IEEE INFOCOM}}, pp.~1--9, 2010.

\bibitem{zheng2011NCE(PET)}
Y.~Zheng and M.~Li, ``{PET: Probabilistic Estimating Tree for Large-Scale RFID Estimation},'' {\em IEEE Transactions on Mobile Computing}, vol.~11, no.~11, pp.~1763--1774, 2012.

\bibitem{wang2022NCENoisyChannel}
B.~Wang and G.~Duan, ``{A reliable cardinality estimation for missing tags over a noisy channel},'' {\em Computer Communications}, vol.~188, pp.~125--132, 2022.

\bibitem{xiao2016NCEInAndOut}
Q.~Xiao, B.~Xiao, S.~Chen, and J.~Chen, ``{Collision-Aware Churn Estimation in Large-Scale Dynamic RFID Systems},'' {\em IEEE/ACM Transactions on Networking}, vol.~25, no.~1, pp.~392--405, 2017.

\bibitem{kodialam2007NCE(EZB)}
M.~Kodialam, T.~Nandagopal, and W.~C. Lau, ``{Anonymous Tracking Using RFID Tags},'' in {\em {IEEE INFOCOM 2007 - 26th IEEE International Conference on Computer Communications}}, pp.~1217--1225, 2007.

\bibitem{lodialam2006NCE}
M.~Kodialam and T.~Nandagopal, ``{Fast and Reliable Estimation Schemes in RFID Systems},'' in {\em Proceedings of the 12th Annual International Conference on Mobile Computing and Networking}, MobiCom '06, (New York, NY, USA), p.~322–333, ACM, 2006.

\bibitem{bui2017novel}
A.-T.~H. Bui, C.~T. Nguyen, T.~C. Thang, and A.~T. Pham, ``{A Novel Effective DQ-Based Access Protocol with Load Estimation for Massive M2M Communications},'' in {\em 2017 IEEE Globecom Workshops (GC Wkshps)}, pp.~1--7, IEEE, 2017.

\bibitem{lugo2021NCEM2MLoadEstimationMPRExpectationMaximization}
A.~G. Orozco-Lugo, M.~Lara, V.~Sandoval-Curmina, and G.~M. Galvan-Tejada, ``{Offered load estimation in random access multipacket perception systems using the expectation-maximization algorithm},'' {\em Signal Processing}, vol.~179, p.~107827, 2021.

\bibitem{zhou2014understanding}
Z.~Zhou, B.~Chen, and H.~Yu, ``{Understanding RFID Counting Protocols},'' {\em IEEE/ACM Transactions on Networking}, vol.~24, no.~1, pp.~312--327, 2014.

\bibitem{han2010counting}
H.~Han, B.~Sheng, C.~C. Tan, Q.~Li, W.~Mao, and S.~Lu, ``{Counting RFID Tags Efficiently and Anonymously},'' in {\em 2010 Proceedings IEEE INFOCOM}, pp.~1--9, IEEE, 2010.

\bibitem{xiao2019estimating}
Q.~Xiao, Y.~Zhang, S.~Chen, M.~Chen, J.~Liu, G.~Cheng, and J.~Luo, ``{Estimating Cardinality of Arbitrary Expression of Multiple Tag Sets in a Distributed RFID System},'' {\em IEEE/ACM Transactions on Networking}, vol.~27, no.~2, pp.~748--762, 2019.

\bibitem{lee2019NCEHeteroIDFree}
T.-K. Lee, C.-C. Chen, Y.~Ren, C.-K. Lin, and Y.-C. Tseng, ``{ID-Free Multigroup Cardinality Estimation for Massive RFID Tags in IoT},'' in {\em 2019 IEEE VTS Asia Pacific Wireless Communications Symposium (APWCS)}, pp.~1--6, 2019.

\bibitem{xiao2019NCEHeteroSnapshot}
Q.~Xiao, Y.~Zhang, S.~Chen, M.~Chen, J.~Liu, G.~Cheng, and J.~Luo, ``{Estimating Cardinality of Arbitrary Expression of Multiple Tag Sets in a Distributed RFID System},'' {\em IEEE/ACM Transactions on Networking}, vol.~27, no.~2, pp.~748--762, 2019.

\bibitem{zhang2020NCEHeteroSnapshotAnonymity}
Y.~Zhang, S.~Chen, Y.~Zhou, O.~O. Odegbile, and Y.~Fang, ``{Efficient Anonymous Temporal-Spatial Joint Estimation at Category Level Over Multiple Tag Sets With Unreliable Channels},'' {\em IEEE/ACM Transactions on Networking}, vol.~28, no.~5, pp.~2174--2187, 2020.

\bibitem{liu2016NCEHeteroFrameBinarySearch}
X.~Liu, K.~Li, A.~X. Liu, S.~Guo, M.~Shahzad, A.~L. Wang, and J.~Wu, ``{Multi-Category RFID Estimation},'' {\em IEEE/ACM Transactions on Networking}, vol.~25, no.~1, pp.~264--277, 2017.

\bibitem{sesha2019rapid}
S.~V. Yenduri, P.~H. Prasad, R.~Kumar, S.~Kadam, and G.~S. Kasbekar, ``{Rapid Node Cardinality Estimation in Heterogeneous Machine-to-Machine Networks},'' in {\em 2019 IEEE 89th Vehicular Technology Conference (VTC2019-Spring)}, pp.~1--7, 2019.

\bibitem{cohen2017CECompNetsOnlineML}
R.~Cohen and Y.~Nezri, ``{Cardinality Estimation in a Virtualized Network Device Using Online Machine Learning},'' {\em IEEE/ACM Transactions on Networking}, vol.~27, no.~5, pp.~2098--2110, 2019.

\bibitem{ullah2021CESwitchingnets}
A.~Ullah, P.~Reviriego, A.~Akram, and M.~N. Siraj, ``{Switch-Based High Cardinality Node Detection},'' {\em IEEE Embedded Systems Letters}, vol.~13, no.~4, pp.~190--193, 2021.

\bibitem{wang2022NCEDataNetsOnlineCE2}
H.~Wang, C.~Ma, S.~Chen, and Y.~Wang, ``{Fast and Accurate Cardinality Estimation by Self-Morphing Bitmaps},'' {\em IEEE/ACM Transactions on Networking}, vol.~30, no.~4, pp.~1674--1688, 2022.

\bibitem{wang2022NCEDataNetsOnlineCE1}
H.~Wang, C.~Ma, S.~Chen, and Y.~Wang, ``{Online Cardinality Estimation by Self-morphing Bitmaps},'' in {\em {2022 IEEE 38th International Conference on Data Engineering (ICDE)}}, pp.~1--13, 2022.

\bibitem{hou2018NCEPHY1}
Y.~Hou and Y.~Zheng, ``{PHY-Tree: Physical Layer Tree-Based RFID Identification},'' {\em IEEE/ACM Transactions on Networking}, vol.~26, no.~2, pp.~711--723, 2018.

\bibitem{lin2022compact}
K.~Lin, H.~Chen, N.~Yan, Z.~Ni, and Z.~Li, ``{Compact Unknown Tag Identification for Large-Scale RFID Systems},'' in {\em 2022 18th International Conference on Mobility, Sensing and Networking (MSN)}, pp.~700--707, IEEE, 2022.

\bibitem{liu2023efficient}
X.~Liu, Y.~Huang, Z.~Xi, J.~Luo, and S.~Zhang, ``{An Efficient RFID Tag Search Protocol Based on Historical Information Reasoning for Intelligent Farm Management},'' {\em ACM Transactions on Sensor Networks}, 2023.

\bibitem{liu2022revisiting}
K.~Liu, L.~Chen, J.~Huang, S.~Liu, and J.~Yu, ``{Revisiting RFID Missing Tag Identification},'' in {\em IEEE INFOCOM 2022-IEEE Conference on Computer Communications}, pp.~710--719, IEEE, 2022.

\bibitem{chu2022efficient}
C.~Chu, G.~Wen, and J.~Niu, ``{Efficient and Robust Missing Key Tag Identification for Large-scale RFID Systems},'' {\em Digital Communications and Networks}, 2022.

\bibitem{kim2022eco}
H.~Kim and J.~Ben-Othman, ``{Eco-friendly Low Resource Security Surveillance Framework toward Green AI Digital Twin},'' {\em IEEE Communications Letters}, vol.~27, no.~1, pp.~377--380, 2022.

\bibitem{kim2022intelligent}
H.~Kim, J.~Ben-Othman, K.-i. Hwang, and B.~Choi, ``{Intelligent Aerial-ground Surveillance and Epidemic Prevention with Discriminative Public and Private Services},'' {\em IEEE Network}, vol.~36, no.~3, pp.~40--46, 2022.

\bibitem{gitlink}
K.~Bhargao, ``{Node Cardinality Estimation Simulations}.'' \url{https://github.com/KaustubhBhargao/Node_Cardinality_Estimation-Simulatoins/tree/master}, 2023.

\bibitem{kleinberg2006algorithm}
J.~Kleinberg and E.~Tardos, {\em {Algorithm Design}}.
\newblock Pearson Education India, 2006.

\end{thebibliography}
\bibliographystyle{ieeetr}

\newpage
\bio{SachinKadam}
{Sachin Kadam} received the B.Eng. degree in electronics and communication engineering from the People’s Education Society Institute of Technology, Bengaluru, India, in 2007, the M.Tech. degree in electrical engineering from the Indian Institute of Technology (IIT) Kanpur, India, in 2012, and the Ph.D. degree from IIT Bombay, India, in 2020. He is currently working as an Engineering Manager (Networking domain) in Technology Innovation Hub Foundation for IoT and IoE, IIT Bombay campus, Mumbai, India. His research interests include the design and analysis of wireless and M2M networks, semantic communications, differential privacy, and learning. He was a recipient of Scholarship Foundation for Excellence, California, USA, during the B.Eng. degree.
\endbio

\bio{Kaustubh}
{Kaustubh S. Bhargao} is pursuing a Dual Degree (B.Tech and M.Tech) at the Department of Electrical Engineering, Indian Institute of Technology (IIT) Bombay, Mumbai, India. He will be joining as a Platform Architecture Engineer at Qualcomm, Bengaluru, India, in 2023. His research interests include communication networks, hardware security and instrumentation design.
\endbio

\bio{Gaurav}
{Gaurav S. Kasbekar} received the B.Tech. degree in Electrical Engineering from the Indian Institute of Technology (IIT) Bombay, Mumbai, India, in 2004, the M.Tech. degree in Electronics Design and Technology (EDT) from the Indian Institute of Science (IISc), Bangalore, India, in 2006, and the Ph.D. degree from the University of Pennsylvania, Philadelphia, PA, USA, in 2011. He is currently an Associate Professor with the Department of Electrical Engineering, IIT Bombay. His research interests include communication networking and network security. He received the CEDT Design Medal for being adjudged the best Masters student in EDT at IISc.  
\endbio
\end{document}